\newif\ifdraft\draftfalse   
\newif\ifappendix\appendixtrue   
\renewcommand\footnotetextcopyrightpermission[1]{}
\setlist{leftmargin=14pt}
\newcommand{\inliststyle}[1]{\textnormal{\textbf{\small#1}}}
\newlist{inlist}{enumerate*}{1}
\setlist[inlist]{label={\inliststyle{(\arabic*)}}}
\newcommand*{\noaccsupp}[1]{\BeginAccSupp{ActualText={}}#1\EndAccSupp{}}
\newcommand{\metagrammar}{$\mathcal{M}$}
\crefname{algorithm}{Algorithm}{Algorithms}
\Crefname{algorithm}{Algorithm}{Algorithms}
\crefname{conjecture}{Conjecture}{Conjectures}
\Crefname{conjecture}{Conjecture}{Conjectures}
\crefname{definition}{Definition}{Definitions}
\Crefname{definition}{Definition}{Definitions}
\crefname{equation}{Equation}{Equations}
\Crefname{equation}{Equation}{Equations}
\crefname{figure}{Figure}{Figures}
\Crefname{figure}{Figure}{Figures}
\crefname{lemma}{Lemma}{Lemmas}
\Crefname{lemma}{Lemma}{Lemmas}
\crefname{section}{\S\!}{\S\S\!}
\Crefname{section}{\S\!}{\S\S\!}
\crefname{theorem}{Theorem}{Theorems}
\Crefname{theorem}{Theorem}{Theorems}
\definecolor{dkblue}{rgb}{0,0.1,0.5}
\definecolor{dkcyan}{rgb}{0.1, 0.3, 0.3}
\definecolor{dkgreen}{rgb}{0,0.3,0}
\definecolor{dkred}{rgb}{0.6,0,0}
\definecolor{dkpurple}{rgb}{0.7,0,0.4}
\definecolor{olive}{rgb}{0.4, 0.4, 0.0}
\definecolor{orange}{rgb}{0.9,0.6,0.2}
\definecolor{lightyellow}{RGB}{255, 255, 179}
\definecolor{lightgreen}{RGB}{170, 255, 220}
\definecolor{teal}{RGB}{141,211,199}
\definecolor{darkbrown}{RGB}{121,37,0}
\definecolor{princetonorange}{RGB}{255,143,0}
\definecolor{tmlblue}{RGB}{0,58,120}       
\newtcolorbox{simplebox}
             [1][]
             {arc=2pt,
              boxrule=0.75pt,
              boxsep=0.5pt,
              colback=white,
              left=8pt,
              right=8pt,
              top=0pt,
              bottom=5pt,
              #1,
             }
\newtcolorbox{inlinebox}
             [3][]
             {arc=0.5pt,
              boxrule=0.25pt,
              boxsep=0.5pt,
              colback=white,
              left=5pt,
              right=8pt,
              top=4pt,
              bottom=3pt,
              width=#2\linewidth,
              after skip=0.25em,
              valign=center,
              halign=center,
              #1,
             }
\let\oldimplies\implies
\renewcommand\implies{\mathrel{\resizebox{1.5em}{0.5em}{$\oldimplies$}}}
\let\oldiff\iff
\renewcommand\iff{\mathrel{\resizebox{1.75em}{0.5em}{$\oldiff$}}}
\newcommand{\CF}[1]{\text{\textnormal{\lstinline!#1!}}}
\newcommand{\GEq}{\ensuremath{::=~}}
\newcommand{\MG}{\ensuremath{MG}\xspace}
\newcommand{\Pref}{\ensuremath{\mathit{pref}}\xspace}
\newcommand{\Prod}{\ensuremath{\mathit{prod}}\xspace}
\newcommand{\Id}{\ensuremath{\mathcal{I}}\xspace}
\newcommand{\Reals}{\ensuremath{\mathds{R}}\xspace}
\newcommand{\Constraint}{\ensuremath{\mathit{c}}\xspace}
\newcommand{\SemanticsOf}[1]{\ensuremath{[ \! [#1] \! ]}}
\newcommand{\VARIADIC}[6]{%
  \expandafter\newcommand\csname GobbleNext#1Arg\endcsname[2]{%
    \csname CheckNext#1Arg\endcsname{##1#4##2}%
  }%
  \expandafter\newcommand\csname CheckNext#1Arg\endcsname[1]{%
    \csname @ifnextchar\endcsname\bgroup{\csname GobbleNext#1Arg\endcsname{##1}}{#2{##1#5}#6}%
  }%
  \expandafter\newcommand\csname #1\endcsname[2]{%
    \csname CheckNext#1Arg\endcsname{#3##1#4##2}%
  }%
}
\newcommand{\COMMENT}[3]{\ifdraft\textcolor{#1}{[\textbf{#2}:
\textnormal{#3}]}\fi}
\newcommand{\afm}[1]{\COMMENT{dkgreen}{afm}{#1}}
\newcommand{\dl}[1]{\COMMENT{dkred}{dl}{#1}}
\newcommand{\ksf}[1]{\COMMENT{dkpurple}{ksf}{#1}}
\newcommand{\rec}[0]{\textbf{R}}
\newcommand{\prov}[0]{\textbf{P}}
\newcommand{\OMIT}[1]{}
\newcommand{\R}{R} 
\newcommand{\client}{instance engineer\xspace}
\newcommand{\expert}{grammar engineer\xspace}
\newcommand{\Expert}{Grammar engineer\xspace}
\newcommand{\TOOL}[1]{\textnormal{\textproc{#1}}}
\newcommand{\Sys}{\TOOL{Saggitarius}\xspace}
\newcommand{\Glade}{\TOOL{Glade}}
\newcommand{\RFixer}{\TOOL{RFixer}}
\algnewcommand\algorithmicmatch{\textbf{match}}
\algnewcommand\algorithmicmatchwith{\textbf{with}}
\algnewcommand\algorithmicendmatch{\textbf{end\ match}}
\algnewcommand\algorithmiccase{|}
\algnewcommand\algorithmicendcase{}
\algnewcommand\Input{\textbf{input: }} 
\algnewcommand\Output{\textbf{output: }}
\newcommand{\VALUE}[1]{\textnormal{\texttt{#1}}}
\newcommand{\CounterExample}[1]{\VALUE{C\kern0.0625pcEx}\,#1}
\newcommand{\SysMode}{\textbf{\Sys}\xspace}
\newcommand{\ProsynthMode}{\textbf{ProSynth}\xspace}
\newcommand{\ProsynthPPMode}{\textbf{ProSynth++}\xspace}
\newcommand{\OPERATOR}[1]{\ensuremath{#1}}
\NewDocumentCommand
  {\Sufficient}
  { O{\mkern-2mu} O{} m }
  {\OPERATOR{\textnormal{\textsf{Suf}}^{\mkern3mu{#1}}_{#2}[#3]}}
\NewDocumentCommand
  {\Constructible}
  { O{} m m }
  {\OPERATOR{\bm{\mathfrak{C}}_{\mkern1mu#1}\left[#2\mkern3mu\bm{;}#3\right]}}
\newcommand{\EXPR}[1]{\ensuremath{#1}}
\newcommand{\TYPE}[1]{\ensuremath{#1}}
\newcommand{\HasType}[2]{\TYPE{#1 \bm{:} #2}}
\NewDocumentCommand
  {\HasTypeInCtx}
  { O{} m m }
  {\TYPE{#1 \vdash \HasType{#2}{#3}}}
\newcommand{\relationRule}[4][]{\inferrule*[lab={\sc #2},#1]{#3}{#4}}
\newcommand{\rel}[1][m]{\ensuremath{\Rightarrow_{#1}}\xspace}
\newcommand{\reltwo}{\ensuremath{\Rightarrow}}
\newcommand{\idmap}[1]{\ensuremath{\lfloor#1\rfloor}}
\newcommand{\start}{\texttt{start}\xspace}
\newcommand{\id}{\texttt{id}\xspace}
\newcommand{\nt}[1]{\id\{#1\}\xspace}
\newcommand{\exts}{\texttt{exists}\xspace}
\newcommand{\cond}[3]{\texttt{if}\ #1\ \texttt{then}\ #2\ \texttt{as}\ #3\xspace}
\newcommand{\rb}{\texttt{rb}\xspace}
\newcommand{\ps}{\texttt{ps}\xspace}
\newcommand{\cc}{\texttt{c}\xspace}
\newcommand{\p}{\texttt{p}\xspace}
\algnewcommand\ALGOrithmicswitch{\textbf{switch}}
\algnewcommand\ALGOrithmicmatch{\textbf{match}}
\algnewcommand\ALGOrithmiccase{\textbf{case}}
\algnewcommand\ALGOrithmicwith{\textbf{with}}
\newlength{\algorithmicindentlength}
\algrenewcommand\algorithmicindent{\algorithmicindentlength}
\algnewcommand{\IfThen}[2]{
  \State \algorithmicif\ #1\ \algorithmicthen\ #2}
\algnewcommand{\IfThenElse}[3]{
  \State \algorithmicif\ #1\ \algorithmicthen\ #2\ \algorithmicelse\ #3}
\algrenewcommand\alglinenumber[1]{\smaller \textcolor{darkgray}{\texttt{#1}}\hspace{0.5em}}
\algnewcommand{\LineComment}[2][1]{\State {\smaller\hspace{\dimexpr \algorithmicindentlength * #1 - 2.25em \relax}\textbf{\(\blacktriangleright\)\: #2}}}
\algrenewcommand{\algorithmiccomment}[1]{\hfill {\smaller \textbf{\(\blacktriangleright\)\: #1}}}
\algrenewcommand{\Return}[1]{\State \algorithmicreturn\ #1}
\algnewcommand\algorithmicbreak{\textbf{break}}
\algnewcommand{\Break}{\State \algorithmicbreak}
\algnewcommand\algorithmicthrow{\textbf{throw}}
\algnewcommand{\Throw}[1]{\State \algorithmicthrow\ #1}
\algnewcommand\algorithmicassume{\textbf{assume}}
\algnewcommand\Assume[1]{\State\algorithmicassume\ #1}
\algnewcommand\algorithmicassert{\textbf{assert}}
\algnewcommand\Assert[1]{\State\algorithmicassert\ #1}
\algnewcommand\algorithmicfunc{\textbf{func}}
\algnewcommand\algorithmicendfunc{\textbf{end\ func}}
\algnewcommand\algorithmicparfor{\textbf{parallel\,for}}
\algnewcommand\algorithmicpardo{\textbf{do}}
\algnewcommand\algorithmicendparfor{\textbf{end\ parallel\,for}}
\algnewcommand\algorithmicforeach{\textbf{for\,each}}
\algnewcommand\algorithmicforeachdo{\textbf{do}}
\algnewcommand\algorithmicendforeach{\textbf{end\ for\,each}}
\colorlet{listing-comment}{gray}
\colorlet{operator-symbol}{darkbrown}
\lstdefinelanguage{OCaml}{
    language=Caml,
    morekeywords={include, module, sig, struct, val},
    morekeywords=[2]{false, true},
    keywordstyle=[2]\color{dkgreen},
    morekeywords=[3]{int, bool, list},
    keywordstyle=[3]\color{dkcyan},
    literate=%
      {=}{{{\color{operator-symbol}=}}}1
      {<}{{{\color{operator-symbol}<}}}1
      {>}{{{\color{operator-symbol}>}}}1
      {:}{{{\color{operator-symbol}:}}}1
      {;}{{{\color{operator-symbol};}}}1
      {|}{{{\color{operator-symbol}|}}}1
      {[}{{{\color{operator-symbol}[}}}1
      {]}{{{\color{operator-symbol}]}}}1
      {\&}{{{\color{operator-symbol}\&}}}1
      {->}{{{\color{operator-symbol}->}}}1
}
\lstdefinestyle{default}{
    basicstyle=\linespread{0.9}\ttfamily,
    columns=fullflexible,
    commentstyle=\sffamily\color{black!50!white},
    escapechar=\#,
    framexleftmargin=1ex,
    framexrightmargin=1ex,
    keepspaces=true,
    keywordstyle=\color{dkblue},
    mathescape,
    numbers=left,
    numberblanklines=false,
    numbersep=1.25em,
    numberstyle=\relscale{0.8}\color{gray}\ttfamily\noaccsupp,
    showstringspaces=false,
    stepnumber=1,
    xleftmargin=2.5em,
}
\small\lstset{#1}}
    \def\balanceissued{unbalanced}
    \let\oldbibitem\bibitem
    \def\bibitem{%
        \ifnum\thepage=\getrefnumber{TotPages}%
            \expandafter\ifx\expandafter\relax\balanceissued\relax\else%
                \balance%
                \gdef\balanceissued{\relax}\fi%
            \else\fi%
        \oldbibitem}
\newcommand{\customlabel}[2]{%
   \protected@write \@auxout {}{\string \newlabel {#1}{{#2}{\thepage}{#2}{#1}{}} }%
   \hypertarget{#1}{#2}
}
\begin{document}

\title[Saggitarius]
      {Saggitarius:  A DSL for Specifying Grammatical Domains}


\begin{abstract}
Common data types like dates, addresses, phone numbers
and tables can have multiple textual representations, and many 
heavily-used languages, such as SQL, come in several dialects.
These variations can cause data to be misinterpreted, leading to
silent data corruption,
failure of data processing systems, or even
security vulnerabilities.  \Sys{} is a new language
and system designed to help programmers reason about the format of data, by describing \emph{grammatical domains}---that is, sets of context-free grammars that
describe the many possible representations of a datatype.
We describe the design of \Sys via example and
provide a relational semantics. We show how \Sys may be used to analyze a
data set: given example data, it uses
an algorithm based on semi-ring parsing and MaxSAT to
infer which grammar in a given domain best matches
that data.
We evaluate the effectiveness of the algorithm on a benchmark suite of 110 example problems, and we demonstrate
that our system typically returns a satisfying grammar within a few seconds with only a small number of examples. We also delve deeper into a more extensive case study on using \Sys for CSV dialect detection. 
Despite being general-purpose, we find that \Sys offers comparable results to hand-tuned, specialized tools; in the case of CSV, it infers grammars for 84\% of benchmarks within 60 seconds, and has comparable accuracy to custom-built dialect detection tools.
\end{abstract}

\author{Anders Miltner}
\affiliation{
  \institution{Simon Fraser University}
  \city{Burnaby}
  \state{BC}
  \country{Canada}                   
}
\email{miltner@cs.sfu.ca}

\author{Devon Loehr}
\affiliation{
  \institution{Princeton University}
  \city{Princeton}
  \state{NJ}
  \country{USA}                   
}
\email{dloehr@princeton.edu}

\author{Arnold Mong}
\affiliation{
  \institution{Princeton University}
  \city{Princeton}
  \state{NJ}
  \country{USA}                   
}
\email{among@alumni.princeton.edu}

\author{Kathleen Fisher}
\affiliation{
  \institution{Tufts University}
  \city{Medford}
  \state{MA}
  \country{USA}                   
}
\email{kfisher@cs.tufts.edu}

\author{David Walker}
\affiliation{
  \institution{Princeton University}
  \city{Princeton}
  \state{NJ}
  \country{USA}                   
}
\email{dpw@cs.princeton.edu}
%


%

\maketitle

\section{Introduction}%
\label{sec:introduction}

Data transformation, cleaning, and processing is a tedious and 
difficult task that stands in the way of getting important 
\emph{information} out of the raw text all around us. 
One particularly thorny problem is that the same information is often formatted differently when drawn from different sources. Correctly parsing even a single
data source in the face of such ambiguities can be difficult; if data is drawn from multiple sources, the chance of formatting inconsistencies and errors skyrockets.
This lack of standardization can have real costs---if a file is incorrectly parsed, it may lead to
processing system failures, silent data corruption 
that pollutes critical data bases, or even security vulnerabilities~\cite{safedocs}.  

\OMIT{
\begin{figure}
    \includegraphics[width=\textwidth]{figures/pads-timestamps.png}
    \caption{The PADS Project standard library functions for reading dates, times and timestamps.}
    \label{fig:pads-data}
\end{figure}
}

For example, consider processing a data file that contains a collection of dates---those dates can assume a range of different formats, and some of those formats are
ambiguous.  If a data processing system assumes a DD/MM/YY format when MM/DD/YY appear in the data, the system may crash or corrupt back-end databases.
Domain specific languages for data processing, such as PADS~\cite{pads}, aim
to help users solve such problems by providing a high-level language that
allows users to describe their data \emph{exactly}.  However, doing so is not without its own
difficulties.  For example,  the PADS Project’s standard library~\cite{padsmanual} provides 108 (!) different functions \emph{just for reading dates, times, and timestamps} (some due to differing
data encodings such as ASCII vs binary).
Other common basic types found in the ``ad hoc'' data formats processed by
systems like PADS include phone
numbers, addresses, times, names, postal codes,
and abbreviations (such as for states or provinces).  Such data also has variations,
especially when that data may involve different national conventions~\cite{wikiphone,wikipostal}.
One
can see why a programmer new to a system like PADS would have difficulty selecting among options to craft a format entirely by hand.  Indeed, even experts can face problems:  Fisher~\cite{pads-inference}
describes an experience working with PADS at AT\&T where it took roughly three
weeks to craft a description of a data format, in part because the format was not consistent, shifting part-way through.

\OMIT{
As an example, consider the myriad possible representations of a date.
Assuming a DD/MM/YY format when a MM/DD/YY format is used may cause a program to crash or produce incorrect results. While projects such as the PADS project~\cite{pads} are designed to simplify the writing of file format descriptions, it can be quite difficult to identify the data representation a given format uses when there so many possible representations. For example, the PADS Project's standard library~\cite{padsmanual} provides 108 different functions for reading dates, times, and timestamps, shown in Figure~\ref{fig:pads-data}.
Phone numbers, addresses, times, names, postal codes,
abbreviations (such as for states or provinces), and other common
datatypes suffer from these problems, particularly when the they are subject to different national conventions~\cite{wikiphone,wikipostal}.

The problems compound as formats get more complex.}

The problem is not limited to "ad hoc" formats like those addressed by the
PADS system.  For instance, one might think it safe to
assume that a “comma-separated-value” (CSV) file will be formatted as a series of values separated by commas.
For instance, one might think it safe to assume that a ``comma-separated-value'' (CSV) file will be
formatted as a series of \emph{values} separated by \emph{commas}. One would be wrong: tabs, vertical bars, semi-colons, spaces, or carats are all sometimes used instead of commas to separate
fields. Furthermore, some files use quotes to delimit strings and related symbols, while others use tildes or single quotes. And of course, some CSV files are ``typed,'' with, for example, one column expected to contain integers, and another strings. 

As
a result, the appropriate way to parse a CSV file can be ambiguous~\cite{csv-wrangling}; to avoid incorrectly interpreting 
it, one must determine the correct
CSV grammar (or ``dialect'') to use for a given data set. In the case of CSV, there are some tools to help users. For instance, Microsoft Excel provides a wizard that allows
a user to select the kind of delimiter to use when importing a CSV file as a spreadsheet; unfortunately, the process is 
a manual one. To try to deal with such variations in format, 
Python has libraries that implement ``sniffers'' to detect CSV dialects.  
However, the range of ambiguities make
such tools difficult to develop. 
Indeed, van den Burg \emph{et al.} provide a host of examples of the kinds of
ambiguities they found when analyzing CSV files in the wild.  In attempt to improve
the state of the art (just for CSV), they built their own custom CSV sniffer.  

CSV is not the only domain suffering problems due to ambiguity.
While the PDF format has been standardized, different tools implement different subsets (or even supersets!) of the standard.  Moreover, some of these PDF variants, and the tools used to process them, contain vulnerabilities~\cite{CarmonyHYBZ16,liu-blackhat-asia-2017}.  Hence,
ambiguities in how to process PDF lead not just to bugs, but possibly to
security vulnerabilities.  The DARPA SafeDocs program~\cite{safedocs}
is currently exploring ways to define safe subsets of PDF to limit
tool vulnerabilities.

The problem of determining the format of a file (or group of files) is called \emph{grammar induction}. Historically, grammar induction has been an exceptionally
difficult problem studied by countless researchers
(see~\cite{grammar-inference-survey,gold:inference,angluin:regular-sets-complexity,angluin:regular-sets,rivest:re-inference,garcia:k-testable,oncina-RPNI,regexp-positive-evidence} for
just a few examples from the literature).  Without any
prior information, the space of
grammars that might describe a data set is enormous.
Consequently, modern grammar induction tools either require
a large volume of examples, are specialized to particular tasks, and/or do not scale particularly
well. Indeed, even the restricted case of regular expression inference is a challenge:
Chen~\cite{chen2020multimodal} observed that
Angluin's classic L$^*$ algorithm makes 679 queries just to infer the simple regular expression [a-zA-Z]+,
and Lee~\cite{alpharegexp} shows that the search space for regular expressions
grows at a rate of $c^{2^d-1}$ where $d$ is the depth of the regular expression and $c$ is the number of regular expression
operators ($c$ is 7 when working with a binary alphabet).
While recent research~\cite{alpharegexp,chen2020multimodal}
has taken impressive steps forward in the special case of regular languages, scaling to larger data formats remains a
challenge.  For instance, AlphaRegex still classifies some regular expressions over a binary alphabet
with 10-20 symbols as ``hard''~\cite[Table 3]{alpharegexp}.  Scaling to richer classes of grammars, such as context-free languages, is even more daunting.

\paragraph{Syntax-Guided Grammar Induction} In this paper, we explore a new kind of grammar inference strategy, which we term \emph{syntax-guided grammar induction}. Our work was heavily inspired by another, closely related problem: program synthesis. 
Progress on that problem was made recently through
the use of syntax-guided (i.e., SYGUS~\cite{sygus}) methods and tools.
Just as SYGUS tools constrain programs
syntactically to cut down the search space and speed program synthesis, we constrain \emph{grammars} syntactically to cut down the search space and speed grammar synthesis.

Central to our strategy is a new way of thinking about the formatting options for a
datatype, centered around the
idea of a \emph{grammatical domain}. A grammatical domain is a \emph{set} of formal grammars
that describe the many possible representations of a datatype.
For instance, the \emph{date grammatical domain} would contain
one grammar describing the DD/MM/YY format, another grammar describing the MM/DD/YY format, and
many other grammars describing other formats (e.g., MM-DD-YY,
YYYY.MM.DD, and so on).

To rigorously specify and manage such domains, 
we develop a new language, called
\Sys.  Such a
language is a first step towards developing more robust data processing
tools. 
Roughly speaking, \Sys may be viewed as an extension of a standard YACC-based parser generator.  However, whereas YACC defines a single grammar and outputs a parser, \Sys defines a \emph{set} of possible grammars---\emph{i.e.,} a grammatical domain---and outputs a single grammar. 
To represent a set of grammars, \Sys allows \expert{}s to specify certain grammatical productions as optional.  Grammars within the domain are defined by the subset of
optional productions that they include.
\Sys also has features that allow \expert{}s
to declare constraints that force certain combinations of productions to appear, or not appear, 
and hence provides fine
control over the grammars of the domain in question.

Once a grammatical domain is defined in \Sys, it may be
\emph{applied} to example data. 
To apply a \Sys program, the user provides a set of example data files which are marked as either positive or negative. Then, our \emph{grammar induction algorithm} will attempt to find a matching element of the grammatical domain. That is,
it will select productions that allow the resulting grammar to parse all positive examples and none of the negative ones. Because more than one grammar from the domain may satisfy the provided examples, \Sys allows users to specify preferences that rank the generated grammars. For example, a preference might state that grammars with fewer productions should be preferred, or that certain productions are preferred over others.

We note that the ones \emph{defining}
the domain and those \emph{using} the domain may (and frequently will) be different. We call the person who defines the domain the 
\emph{grammar engineer} and the one who uses the grammar the
\emph{instance engineer}. These two roles are separate because
we expect heavy reuse of grammatical domains. For instance, the 
date domain need only be defined once by an expert. It can then
be used countless times by instance engineers developing specific
data processing tools for formats that contain dates.
While \expert{}s require sophisticated knowledge of a grammatical domain and the \Sys tool, \client{}s need only supply appropriate examples from the domain in question.

A key strength of the language and system design 
we propose is that it allows programmers to leverage their domain knowledge, which is often substantial. 
For example, if an engineer expects their data to be formatted roughly as CSV files containing dates, they can write or re-use CSV and date metagrammars to heavily restrict the search space.
In other words, we provide a linguistic framework that allows
programmers to apply their prior knowledge about the data,
and reduce the general grammar induction
problem to a more specialized and tractable one.

Furthermore, our approach leads to a
simple-yet-effective algorithm for grammar
induction that nonetheless requires very few examples. 
We study
the performance of our grammar induction algorithm
on a set of ten different induction tasks, considering
the performance of the algorithm on each task under a
range of different conditions. We find that \Sys frequently induces grammars in under 10 seconds, and
we demonstrate that \Sys can often learn the expected grammar with a mere handful of examples.

\paragraph{Use Cases}
\Sys is a broad tool, with a variety of potential applications. For example,
\begin{itemize}
    \item Alice maintains a large cloud service that allows clients to access a database using SQL queries. She has found that a common source of errors is when users  copy-paste code meant for a different SQL dialect. Alice could use a metagrammar to represent the different dialects of SQL, and use \Sys to produce a useful, automated error message that informs the user what dialect they're using, so they can easily adjust their code.
    \item Bob is in charge of aggregating data from a variety of sources, all represented as CSV files. However, different sources used different tools to generate their data, so each group of files potentially comes in its own unique variation of the CSV format. Bob can use the CSV metagrammar and apply it to each group of files individually to pick out the important features (separators, delimiters, etc), which can then be passed to an automated CSV parsing tool.
    \item Charlie is an IT professional at a high-security government facility. They wish to integrate a new, third-party PDF processing tool into their office's toolchain, but are worried that vulnerabilities in the tools may leak classified information if given certain inputs~\cite{safedocs}.  Charlie could use the PDF metagrammar to create a "shield" by finding a PDF benchmark suite, determining which benchmarks cause failures, and using \Sys to infer a grammar which matches only the successful benchmarks. They could then automatically compare inputs to this grammar before invoking the tool, ensuring it is only run on "good" inputs.
\end{itemize}

\OMIT{ 
\begin{itemize}
    \item \textbf{Dialect Detection:} Programmers often wish to run code from a variety of sources, e.g. including snippets from the internet, or running user-submitted database queries. However, such code can go wrong if it is written for a different version (or dialect) of the language. \Sys can be used to detect which dialect a given piece of code uses, to ease integration or provide useful, automated error messages.
    \item \textbf{Feature Detection:} Some languages, like CSV, come in all sorts of forms, rather than a few well-defined dialects. For these languages, \Sys can be used to automatically infer the important features -- for example, the separators and delimiters of a CSV file. The file can then be easily parsed, or automatically integrated into a spreadsheet or database.
    \item \textbf{Shielding Inputs:} An admin in a security-sensitive job may want to use prebuilt tools (e.g. for loading PDFs), but these tools may have critical security flaws when given certain inputs. \Sys can be used to create a "shield" by first running the tool on a benchmark suite, then feeding the benchmarks to \Sys with the failures marked as negative examples. The result will be a grammar for "good" files which do not cause security errors; this grammar can be used to ensure that the tool is only run on "good" inputs.
\end{itemize}
}

In each of these cases, \Sys allows users to rapidly develop a fitting tool for the domain at hand. Rather than manually writing a dialect recognizer, or a CSV sniffer, one need only write a metagrammar expressing their knowledge of the domain. When metagrammar components for the task already exist (dates, phone numbers, addresses, etc) one's task is reduced further. In the same way that it is easier to write a YACC file than to write a parser implementation, it is easier to write a metagrammar than a grammar induction engine from scratch.

\OMIT{ 
Each of these problems could instead be handled using ad-hoc programs or specialized tools for each data format. However, \Sys , however, has the \emph{versatility} to handle all of them. This versatility has two dimensions: \Sys can be broadly applied \emph{within} a grammatical domain, using a single metagrammar to perform a variety of induction tasks on different instances of the same datatype. On the other hand, \Sys can be applied \emph{across} domains by creating metagrammars for each of them -- after all, there isn't a fundamental difference between identifying date formats and identifying SQL dialects. 
\dl{I wasn't really sure what we were talking about regarding "the same problem across domains". I edited it out for now, but we could put it back in if we have a clear example or something}
}

Finally, \Sys \emph{works} -- despite its generality, it can achieve comparable performance to specialized, hand-tuned tools. We demonstrate in our evaluation section that when \Sys is used to detect the format of CSV files, it produces similar results to custom-built CSV sniffers.

To summarize, this paper makes the following contributions.

\begin{itemize}
\item We introduce the concept of \emph{grammatical domains}
and demonstrate that a variety of grammatical domains exist in the wild.
\item We design the first user-facing language, \Sys, for specifying grammatical domains precisely, and supply
an algorithm for inferring a candidate grammar from
the domain.
\item We create a benchmark suite for evaluating 
syntax-guided grammar induction algorithms and
evaluate the performance of \Sys on that suite, showing that it often returns a satisfying grammar within a few seconds, using few examples.
\item We demonstrate practical use cases of \Sys through a case study of the CSV metagrammar. We compare \Sys's performance to custom-built CSV sniffers, and find it yields comparable results. Additionally, we provide two additional case studies in \ifappendix Appendix~\ref{sec:case-studies} \else the full version of the paper \fi on metagrammars for XML and SQL. In the first, we demonstrate how the structure of metagrammars can be used to control the search space, speeding induction tasks. In the second, we show how \Sys can be used in practice to provide helpful feedback to users of query engines.

\OMIT{ 
In the first, we develop an XML metagrammar and we show how \Sys's programmatic control
over the search space can be used to speed up synthesis
times for challenging tasks.  In another, we test the use of \Sys on the problem of dialect detection for CSV files. 
In the second case, our general purpose tool infers grammars for 84\% of the files in the case study.  On these grammars, it has comparable accuracy to custom-built dialect detection tools~\cite{csv-wrangling-github}. 
In the final case, we apply \Sys to detect SQL dialects which can vary between companies or even within the same company. We also discuss how it can be used in practice to give more helpful feedback to front-end query engine users and back-end software developers who are tasked with debugging failing queries.
}
\end{itemize}

\OMIT{
\emph{Grammar induction} is the long-studied problem of inferring 
a grammar from positive and negative example data~\cite{grammar-inference-survey,gold:inference,angluin:regular-sets-complexity,angluin:regular-sets,rivest:re-inference,garcia:k-testable,oncina-RPNI,regexp-positive-evidence}.
Despite its obvious
value, and much research effort over the years, instances of the
problem often remain intractable, requiring either a great deal of
example data or vast computational resources to solve.  
Indeed, even the restricted case of regular expression inference is a challenge:
Chen~\cite{chen2020multimodal} observed that
Angluin's classic L$^*$ algorithm makes 679 queries just to infer the simple regular expression [a-zA-Z]+,
and Lee~\cite{alpharegexp} shows that the search space for regular expressions
grows at a rate of $c^{2^d-1}$ where $d$ is the depth of the regular expression and $c$ is the number of regular expression
operators ($c$ is 7 when working with a binary alphabet).  While recent research~\cite{alpharegexp,chen2020multimodal}
has taken impressive steps forward in the special case of regular languages, scaling to larger data formats remains a
challenge.  For instance, AlphaRegex still classifies some regular expressions over a binary alphabet
with 10-20 symbols as ``hard''~\cite[Table 3]{alpharegexp}.  Scaling to richer classes of grammars, such as the context-free languages, is even more daunting.


One ``solution'' to this problem is to give up.  The grammar 
induction problem cannot be solved accurately in general---there are simply too many degrees of freedom.  But after having given up, one must ask:  What's next?

As with any artificial intelligence problem, one way to give shape to, or cut down, the search
space, and thereby render an intractable problem tractable, is to associate
some priors, constraints, or background knowledge with instances of the problem.  
For example, over the last decade
or so, great progress has been made in the closely related problem of
program synthesis not by tackling the problem in general, but by
picking out key subdomains in which to operate.  Typically, this selection works by defining a restrictive domain-specific language (DSL),
specializing the ranking functions for the programs in the DSL, and then learning a ranked list of DSL programs from examples or other kinds of specifications.  In this way, it has been possible to build extremely effective program synthesizers for 
spreadsheet transformations~\cite{flashfill}, database queries~\cite{synth-database}, tree
transformations~\cite{yag+:pldi16} and lenses~\cite{optician,synth-symm-lenses,synth-quotient} to name just a few.  Program induction of this
form is often referred to as \emph{Syntax-guided Program Synthesis} (SyGuS), because one of the key elements 
of such systems is the fact that the syntax of programs is carefully constrained, via the DSL, ahead of time.
Now, there are meta-languages like Prose~\cite{prose} to help experts build syntax-guided program synthesis systems.

In this paper, we investigate whether similar techniques may be applied productively to the problem of grammar induction.  In doing so, the natural first question was whether there were any \emph{grammatical domains} akin to the \emph{programming domains} where SyGuS has been effective.  Such grammatical domains
should include a set of related grammars, just as a programming domain contains a set of related programs.
Moreover, programmers should easily be able to identify when the grammar they need falls within the given domain so they are able to select the tool or library to apply to their problem.  

\paragraph*{Grammatical Domains}
A clear example of an interesting grammatical domain is the set of ``comma-separated-value'' formats, the \emph{CSV domain}.
While one mostly thinks of comma-separated value formats as simple tables, separated by, well, commas,
there is actually great variation in these formats~\cite{csv-wrangling}. 
Tabs, vertical bars, semi-colons, spaces, or carats are all sometimes used rather than commas to separate
fields.  Some files use quotes to delimit strings and related symbols, while others use tildes or single quotes. And of course, some CSV files are ``typed,'' with integers
required to appear in one column and perhaps strings in another.  One of the consequences
of this mess is that the appropriate way to parse a CSV file can be ambiguous~\cite{csv-wrangling}.
Hence, to avoid incorrectly interpreting a given CSV file, one must determine the correct
CSV grammar or ``dialect'' to use for a given data set.  There are some tools to help users with
this particular special case.  For instance, Microsoft Excel provides a wizard that allows
a user to select the kind of delimiter to use to import a CSV file as a spreadsheet.  Unfortunately, the process is 
a manual one. Python has libraries that implement ``sniffers'' to detect CSV dialects
and van den Burg \emph{et al.}~\cite{csv-wrangling} built their own tool to
analyze CSV files and infer the grammar best suited to parse it.

CSV formats are just the tip of the iceberg: there are many other grammatical domains that arise in practice.  Some are similar to CSV formats in that they provide a 
textual representation of a large, structured data set.  Examples include the set of possible JSON formats, 
XML formats, or S-expression formats.  In each case, there is a general specification of the layout, 
but the general case is often specialized in particular applications, to, for instance, constrain the
types of values or the schema that may appear.  

Another sort of grammatical domain is a domain associated with a data type that,
over time, has come to have multiple different textual representations.  Here, dates
are a good example.  Dates may be formatted as MM/DD/YY, DD/MM/YY,
or one of myriad different ways.  Since these different formats may be ambiguous,
it is necessary to identify and deploy the specific format used in a particular data set. 
Moreover, since dates often appear nested within other formats,
those formats are themselves ambiguous until the internal date formats are properly determined.  
Other examples of data types with a variety of textual representations or specializations include phone numbers, postal codes, street addresses, 
 the names of states or provinces that have multiple common abbreviations, timezones, times, date ranges, and so on.  Any kind of data whose representation needs to be adjusted to handle internationalization is a likely candidate for a grammatical domain.

Finally, this work is partly movtivated by the US DARPA SafeDocs program~\cite{safedocs}, which is currently exploring the \emph{PDF grammatical domain}:  While PDF has been standardized, different tools implement subsets or supersets of the
standard.  Moreover, some of these PDF variants, and the tools used to process them, contain 
vulnerabilities~\cite{CarmonyHYBZ16,liu-blackhat-asia-2017}.  Consequently, SafeDocs has a number of teams investigating ways to define
safe subsets of PDF.

We collect the various grammatical domains we have studied into a benchmark suite described in \S \ref{sec:experiments}.

\dl{I don't like this paragraph. I think it would be stronger if we positions \Sys as a method of specifying grammatical domains, which we already defined, and \emph{also} for inferring particular elements given examples. The SGI problem itself has been pretty strongly implied already.}
\paragraph*{Syntax-guided Grammar Induction}
A domain-specific grammar induction problem is specified via a set of grammars (the domain),
a set of positive examples and a set of negative examples.  A solution to such a problem is
a grammar drawn from the domain that parses all the positive examples and none of the negative 
examples.  A natural way to specify such a domain is to constrain the syntax of the grammars
that may be chosen, by, for instance, specifying the candidate productions available.
We call problems specified this way \emph{Syntax-guided Grammar Induction (SGI) Problems}.

In SGI, or in domain-specific grammar induction more broadly, there are two types of user.  The first kind of
user, the \emph{\expert{}s}, are responsible for defining 
grammatical domains.  
The second, the \emph{\client{}s}, infer grammars from within the defined grammatical domains by supplying positive and negative examples. Ideally, the domains produced by \expert{}s are heavily reused, making their investment of time
and energy well spent.  For instance, the domain of dates can be reused in the synthesis of any grammar for any data 
containing a date.  Likewise, there are many many CSV files, meaning such a domain, once defined, can
be used by many \client{}s who need only supply example data and need not possess the skills of \expert{}s.
}

\OMIT{
\paragraph*{Saggitarius}
We designed and implemented \Sys, the first user-facing language and system to provide support for defining syntax-guided grammar induction problems.  Roughly speaking, \Sys may be viewed as an extension of a standard YACC-based parser generator.  However, whereas YACC defines a single grammar, \Sys defines a \emph{set} of possible grammars---\emph{i.e.,} a grammatical domain. 
To do so, \Sys allows \expert{}s to specify certain grammatical productions as optional, much like a SyGuS language, such as Sketch~\cite{sketch-original,sketch-asplos,sketch-summary}, declares certain subexpressions of a program optional.  
\Sys also has features that allow \expert{}s
to declare constraints that force certain combinations of productions to appear, or not appear, 
and hence provides fine
control over the grammars in the grammatical domain in question.  

The \Sys grammar induction algorithm uses both the definition of the grammatical domain and the supplied examples to select productions that allow the resulting grammar to parse all positive examples and none of the negative ones.  Because more than one grammar from the domain may satisfy the provided examples, \Sys allows \expert{}s to provide functions to rank the generated grammars. \Expert{}s play a role similar to designers of a domain-specific
language in syntax-guided synthesis; \client{}s play a role similar to users of synthesized functions. While \expert{}s require sophisticated knowledge of a grammatical domain and the \Sys tool, \client{}s need only supply appropriate examples from the domain in question and then use the generated parser.


}

\section{Motivating Examples}%
\label{sec:a-motivating-example}

Grammatical domains appear in many different contexts.  In this section, we show how to use
\Sys to define two useful domains:  the domain of calendar dates and the domain of comma-separated-value (CSV) formats.

\subsection{Example 1: Calendar Dates}
Dates are formatted in many different ways. Because the various formats are ambiguous (causing confusion as to whether
one is reading a MM/DD or DD/MM format, for instance), date parsers must be specialized to a particular data set.
Said another way, dates formats form a natural grammatical domain, and different data sets adhere to different grammars
within that domain.  

\Sys programs specify a grammatical domain through the use of a \emph{metagrammar}, 
which is a set of 
\emph{candidate productions} (a.k.a. \emph{candidate rules}) together with (a) constraints 
that limit which combinations of productions may appear, and (b) preferences that rank the grammars, for breaking ties when multiple grammars are applicable.


The simplest \Sys components specify productions using a YACC-like syntax with the form \texttt{N -> RHS.}  Here,
\texttt{N} is a non-terminal and \texttt{RHS} is a regular expression over terminals and non-terminals.
For instance, to begin construction of our date grammatical domain, we can specify \texttt{Digit} and 
\texttt{Year} non-terminals as follows.

\begin{center}
\begin{BVerbatim}
  Digit -> ["0"-"9"].
  Year -> Digit Digit | Digit Digit Digit Digit.
\end{BVerbatim}
\end{center}

This first definition looks like a definition one might find in an ordinary grammar.  It states 
that \texttt{Year} can have either two or four digits.  The denotation of such a definition is a grammatical domain---in
this case, a grammatical domain (a set) containing exactly one grammar.   

\Sys is more interesting when one defines metagrammars that include optional productions. 
Optional productions are preceded by a ``?'' symbol. For instance, consider the following:
\begin{center}
\begin{BVerbatim}
  Digit -> ["0"-"9"].
  Year -> ? Digit Digit ? Digit Digit Digit Digit.
\end{BVerbatim}
\end{center}
The metagrammar above denotes a grammatical domain that includes four grammars:
\begin{itemize}
    \item one grammar in which Year has no productions,
    \item two grammars in which Year has one production, and
    \item one grammar in which Year has two productions.
\end{itemize}
To extract a single grammar from this set of four grammars, one supplies the \Sys grammar induction engine with
positive and negative example data.  If no grammar parses all the data as required, the grammar induction algorithm
will return ``no viable grammar.''

Continuing, consider the following specification for days. 
\begin{center}
\begin{BVerbatim}
  Day -> ? ["1" - "9"]
         ? "0" ["1" - "9"]
         | ["1" - "2"] Digit | "30" | "31".
\end{BVerbatim}
\end{center}
This metagrammar includes grammars for days ranging from 1 -> 31.  It allows single digit
days to be prefixed with a 0. However, it is natural to desire grammars that parse either
single-digit days or 0-prefixed-days, but not both. One way to specify such a constraint is as follows.
\begin{center}
\begin{BVerbatim}
  constraint(|productions(Day)| = 4).
\end{BVerbatim}
\end{center}
Here, the constraint specifies that the number of production rules for \texttt{Day} must be exactly 4. Since the three productions on the last row are always included, exactly one of the ? production candidates can be in the solution. Another option is to name productions and to use the names in constraints:
\begin{center}
\begin{BVerbatim}
  Day -> ? ["1" - "9"] as SingleDigitDays
         ? "0" ["1" - "9"] as ZeroPrefixDays
         | ...
  constraint(SingleDigitDays XOR ZeroPrefixDays).
\end{BVerbatim}
\end{center}
Here, we have given each of the rule candidates a name, which is used as an indicator variable in the 
constraint expression; each evaluates to \texttt{true} iff a given grammar includes that production. All constraints must be a boolean combination of indicator variables or integer comparisons; we have provided commonly-used operations (such as counting the number of productions) as syntactic sugar.

\begin{figure}[h] 
\begin{boxed-listing}[left=-28pt]{}
Sep -> ? ","  ? "/"  ? "-".
constraint(|Production(Sep)| = 1)

Digit -> ["0"-"9"].

Year -> ? Digit Digit
        ? Digit Digit Digit Digit.
constraint(|Productions(Year)| = 1)

Month ->  ? Digit
          ? "0" Digit
          | "10" | "11" | "12".
constraint(|Productions(Month)| = 2)

Day -> ? ["1" - "9"]
       ? "0" ["1" - "9"]
       | ["1" - "2"] Digit | "30" | "31".
constraint(|Productions(Day)| = 2).

Date -> ? Day   Sep Month Sep Year 
        ? Month Sep Day   Sep Year
        ? Year  Sep Month Sep Day
        ? Year  Sep Day   Sep Month.
constraint(|Productions(Date)| = 1).

start Date
\end{boxed-listing} 
\caption{Calendar Dates Metagrammar}
\label{fig:example-1-main}
\end{figure}

Figure \ref{fig:example-1-main} includes the rest of the (simplified) definition of the date format, adding
definitions for separators, months, and dates as a whole. Dates are also designated as the start symbol. The use of constraints is common. For instance, notice the grammar engineer who designed this particular format allowed for the possibility of several different separators, but required a single separator to be used consistently throughout a format.  Hence, while a date format may use "\texttt{-}" or "\texttt{/}" as a separator, it never uses both.

%
%


To extract a particular grammar from the domain, an \client{} will supply positive and/or negative
example data.
For example, one could supply U.S.-style dates \texttt{12/31/72} and \texttt{01/01/72}, marking them as positive examples. 
Having done so, the \Sys grammar induction algorithm might generate the example grammar 
presented in Figure~\ref{fig:example-1-solution}.  
In this case, the solution presented in
Figure~\ref{fig:example-1-solution} is unique.  
However, in other circumstances, multiple grammars might
satisfy the given examples.
To manage multiple
grammars, 
one can rank grammars (e.g. preferring those with fewer productions) and ask for the most preferred grammar
according to the ranking. This is explained further in Section~\ref{sec:preferences}.


While one might worry that a naive \client could supply insufficient data and thereby underconstrain the set of possible
solution grammars, such problems could likely be mitigated through a well-designed user interface that informs
a user when multiple solutions are possible and presents example data to the user, asking them to choose valid and
invalid instances of the format.

\begin{figure}[t]\small%
\begin{boxed-listing}[left=-28pt]{}
Sep -> "/".

Digit -> ["0"-"9"].

Year -> Digit Digit.

Month ->  "0" Digit | "10" | "11" | "12".

Day -> "0" ["1" - "9"] 
     | ["1" - "2"] Digit 
     | "30" | "31".

Date -> Month Sep Day Sep Year.

S -> Date
\end{boxed-listing} 
\caption{Date Solution Grammar}%
\label{fig:example-1-solution} 
\end{figure} 

\subsection{Example 2: CSV}
\label{sec:example-csv}
Dates, phone numbers, addresses and the like are simple data types with many textual representations.
\Sys is also capable of specifying domains that contain larger, aggregate data types.  
The domain of comma-separated-value (CSV) formats is a good example.

One challenge in specifying a CSV domain is that if we want the columns of the CSV format
to be ``typed'' -- one column must be integers, another strings or dates, for instance -- we need to consider
many, many potential grammar productions.  To facilitate construction of such metagrammars succinctly, we allow
\expert{s} to define indexed collections of productions.  For instance, suppose we would like to specify
a spreadsheet with three columns (numbered 0-2) where each column contains either only numbers or only strings.
We might define the ith Cell in each row as:
\begin{center}
\begin{BVerbatim}
  Cell{i : [0,2]} -> ? Number ? String.
  forall (i : [0,2])
    constraint(|Productions(Cell{i})| = 1)    
\end{BVerbatim}
\end{center}
This declaration defines three nonterminals simultaneously: \texttt{Cell\{0\}}, \texttt{Cell\{1\}}, and \texttt{Cell\{2\}} and provides the
same definition for each of them.  However, since each of \texttt{Cell\{0\}}, \texttt{Cell\{1\}}, and \texttt{Cell\{2\}} are separate non-terminals, the underlying inference
engine can specialize them independently based on the supplied data.  For instance, \texttt{Cell\{0\}} could be a string and
\texttt{Cell\{1\}} and \texttt{Cell\{2\}} might both be numbers.\footnote{Observant readers may worry that the characters
``12'' could be interpreted as either a number or a string if the definition of strings includes numbers.
We will explain how to create preferences to disambiguate momentarily.}. Constraints can refer to specific
indexed non-terminals as shown.  

The domain of 3-column CSVs is likely a rare one!  Fortunately, users can also declare collections of indexed non-terminals
with an arbitrary natural-number size (e.g. \texttt{Cell\{i:nat\}} -> ...).

It is possible to restrict the possible productions based on an index. Below, we define \texttt{Row\{i\}}, a non-terminal for a row containing cells \texttt{Cell\{0\}} through \texttt{Cell\{i\}}.
The use of normal context-free definitions allows \texttt{Row\{i\}} to refer to \texttt{Row\{i-1\}}.  Notice that separators (\texttt{Sep}) are not indexed, so that one separator definition that is used consistently throughout the entire grammar.
\begin{center}
\begin{BVerbatim}
  Row{i : nat} ->
    ? if (i = 0) then Cell{i}
    ? if (i > 0) then Row{i-1} Sep Cell{i}.
        
  Sep -> ? "," ? "|" ? ";".
    constraint(|Productions(Sep)| = 1).
    
  Table{i : nat} -> MyRow{i} ("\n" MyRow{i})*.
\end{BVerbatim}
\end{center}

\texttt{Row\{i\}} represents a single row with \texttt{i} \texttt{Cell}s, and similarly \texttt{Table\{i\}} uses the standard Kleene star to represent a table with an arbitrary numbers of rows of length \texttt{i} (we could equally well have
written the usual recursive, context-free definition instead). The only difficulty that remains is the fact that, while we expect each CSV file to have a fixed number of columns, we cannot know that number in advance. \Sys allows users to represent such unknown quantities by declaring \emph{existential variables} and using them in rules. For example, we might use an existential variable to define a CSV format by adding the following code:
\begin{center}
\begin{BVerbatim}
  exists rowlen : nat  
  start Table{rowlen}.
\end{BVerbatim}
\end{center}

This CSV specification represents grammars in which all rows have a single, fixed length (e.g. a grammar of 5-column CSV files). We might want a more flexible grammar that permits any row length up to some maximum, so long as all rows in each file are the same length. We could represent this using a \emph{rule comprehension}, as below:

\begin{center}
\begin{BVerbatim}
  exists maxlen : nat
  S -> [? Row{len} for len = 0 to maxlen].
  start S.
\end{BVerbatim}
\end{center}

Here, the brackets \verb+[? ... for ... to ...]+ act similarly to 
a list comprehension.  They define one rule 
(\texttt{? Row\{len\}}) for each possible length up to \texttt{maxlen}.

Figure~\ref{fig:example-2-flawed} presents our progress so far on defining a metagrammar for simple CSV formats.
At the top, we have ``imported'' a couple of useful non-terminal definitions---definitions for \texttt{String} and \texttt{Number}.
Users can write such definitions from scratch, but we have developed a modest library of them to facilitate
quick construction of parsers for ad hoc data formats.

\begin{figure}[!h]\small%
\begin{boxed-listing}[left=-28pt]{}
import String, Number

exists rowlen : nat.

S -> Table{rowlen}.

Table{len : nat} -> Row{len} ("\n" Row{len})*.

Sep -> ? ","  ? "|" ? ";".
  constraint(|Productions(Sep)| = 1).

Row{len : nat} ->
  ? if (len = 0) then Cell{len}
  ? if (len > 0) then Row{len-1} Sep Cell{len}.

Cell{i : nat} -> ? Number ? String.
forall (i:nat) :
  constraint(|Productions(Cell{i})| = 1)
\end{boxed-listing}
\caption{CSV Metagrammar, Version 1}%
\label{fig:example-2-flawed} 
\end{figure} 

\subsection{Ranking Grammars}
\label{sec:preferences}

Consider the following example data.
\begin{center}
\texttt{
\begin{tabular}{l}
0,1,15,Hello world! \\
1,2,23,Programming \\
0,3,-2,rocks!
\end{tabular}
}
\end{center}
A human would probably claim that the first three columns should contain \texttt{Number}s, and the last one contains \texttt{String}s.
However, if \texttt{Number}s can be \texttt{String}s then the column types could be \texttt{Number}/\texttt{String}/\texttt{String}/\texttt{String}, or some other combination.
Without guidance, an algorithm will not know how to choose between potential grammars.

\Sys allows users to steer the underlying grammar induction algorithm towards the grammar of choice
by expressing preferences for one grammar over another.  Such preferences are expressed
using \texttt{prefer} clauses, which have a similar syntax to \texttt{constraint} clauses.  \texttt{Prefer} clauses assign
integer \emph{weights} to boolean formulas; the ranking of a synthesized grammar is the sum of the weights of all satisfied boolean formulas. 

Figure \ref{fig:example-2-preferences} illustrates the use of preferences to force \Sys to
infer \texttt{Number} cells when possible and \texttt{String} cells otherwise. For each $i$, the rank of the grammar is increased by 1 if the production \texttt{Num\{i\}} is included in the grammar. Since we have constrained each \texttt{Cell\{i\}} to have only one production, this leads \Sys to 
choose the \texttt{Number} rule whenever possible.

\begin{figure}[t]\small%
\begin{boxed-listing}[left=-28pt]{}
Cell{i : nat} ->
  ? Number as Num{i}
  ? String as Str{i}.
forall (i:nat) :
  constraint (|productions(Cell{i}| = 1).
  
forall (i:nat) : prefer 1 Num{i}.
\end{boxed-listing}
\caption{A Metagrammar with Preferences}%
\label{fig:example-2-preferences} 
\end{figure}

\subsection{Emitting Grammar Information}
Sometimes, users will make use of the entire inferred grammar (e.g. using it to build a parser or recognizer). In others, they may simply care about certain \emph{features} of the grammar. For example, using a built-in CSV parsing library might require the user to describe the type of each column. We could configure \Sys to print such information based on boolean conditions using the \lstinline{emit} syntax illustrated in Figure \ref{fig:example-2-emit}. This allows \Sys to be used as either a grammar inducer or a classifier (or both!) as needed.

\begin{figure}[t]\small%
\begin{boxed-listing}[left=-28pt]{}
Cell{i : nat} ->
  ? Number as Num{i}
  ? String as Str{i}.
forall (i:nat) :
  constraint (|productions(Cell{i}| = 1).

forall (i:nat) : emit "Row"+str(i)+":Num" if Num{i}.
forall (i:nat) : emit "Row"+str(i)+":String" if Str{i}.
\end{boxed-listing}
\caption{Example Emit Statements}%
\label{fig:example-2-emit} 
\end{figure}


\OMIT{
\section{The Syntax-Guided Grammar Induction Problem}
\label{sec:the-grammar-induction-problem}

A context free grammar $G = (S,\Id,\Sigma,R)$ with start symbol $S$, non-terminals $\Id$, terminals
$\Sigma$, and productions (also called rules) $R$ is defined in the usual way.  A Metagrammar $\mathcal{M} = (S,\Id,\Sigma,\R,f,h)$
includes the following components.
\begin{itemize}
            \item $S$, the grammar start symbol
            \item $\Id$, the set of grammar non-terminals
            \item $\Sigma$, the set of grammar terminals
            \item $\R$, the finite set of possible candidate productions
            \item Boolean-valued constraint function $f$: $2^R \to bool$
            \item Real-valued preference function $h$: $2^R \to \mathbb{R}$
        \end{itemize}
We say that a grammar $G$ belongs to $\mathcal{M}$ (written $G \in \mathcal{M}$) 
when the start symbol of $G$ is the
start symbol of $\mathcal{M}$, the terminal and non-terminal symbols are the same in $G$
and $\mathcal{M}$, the productions of $G$ are a subset of the productions of $M$, and
$f(G.\R)$ is true (where $G.\R$ are the productions of $G$).

An instance of the syntax-guided grammar induction problem is specified via a triple
($\mathcal{M}$, $Ex{+}$, $Ex{-}$) where $\mathcal{M}$ is a metagrammar and 
$Ex{+}$ and $Ex{-}$ are positive and negative examples respectively.
The solution to an instance of the syntax-guided grammar induction problem is a 
grammar $G \in \mathcal{M}$ that contains all of the positive examples, none of the
negative examples and has maximal ranking.  In other words, for all $G' \in \mathcal{M}$, $h(G.\R) >= h(G'.\R)$.
}

\section{Formal Language}
\label{sec:formal}

In this section, we formally describe the metagrammar specification language of \Sys, and describe how to interpret it using a relational semantics. The formal syntax is laid out in Figure \ref{fig:formal-grammar}. We describe how to compile the full language introduced in \S\ref{sec:a-motivating-example} to this core calculus in \S\ref{subsec:syntactic-sugar}.




                    






\begin{figure}[h]
\centering
\begin{subfigure}[b]{0.55\textwidth}
\centering
\begin{BVerbatim}
Int Exp.     e  := nat | id | e - 1

Bool Exp.    b  := T | F | e = e 
                 | e < e  | b && b 
                 | ...

Nonterminal  N  := id{e}

Production   p  := N | str | p p

Cond. Prod.  c  := if b then p
                    
Rule body    rb := c | c rb

Existential  ex := exists id

Rule         R  := id{id} -> ex.rb

Start symbol S  := start N

Metagrammar  MG := S | ex; MG | R; MG
\end{BVerbatim}
\caption{The core grammar of \Sys}
\label{fig:formal-grammar}
\end{subfigure}
\hfill
\begin{subfigure}[b]{0.40\textwidth}
\centering
\begin{BVerbatim}
Production  p   := id | str 
                 | p p

Productions ps  := [] | p ps

Rule        R   := id -> ps

Start       S   := start id

Grammar     G   := S | R G





\end{BVerbatim}
\caption{Syntax of a concrete grammar}
\label{fig:grammar-grammar}
\end{subfigure}
\vspace{10pt}
\caption{}
\label{fig:grammar}
\end{figure}

Our definition contains three built-in symbols: identifiers (\id), natural numbers (\texttt{nat}), and strings (\texttt{str}). There are two types of expression: integer, on which the only allowed operation is subtracting 1, and boolean, which permit the standard boolean operations as well as integer comparisons.

Nonterminals \texttt{N} have the form described in \S\ref{sec:example-csv}: an identifier followed by a natural-number argument. For simplicity, we require each nonterminal to have exactly one argument, but it is straightforward to extend the system to cover any a number of arguments. Productions in \Sys are simply a sequence of terminals (i.e., constant strings) and nonterminals.

The body of each rule is a sequence of conditional productions, combining productions with boolean conditions. Unlike in \S\ref{sec:a-motivating-example}, which included both mandatory and optional productions (beginning with \texttt{|} and \texttt{?}, respectively), the core grammar only uses mandatory, conditional productions.

A rule declaration \texttt{\nt{\id'} -> exists \id''.rb} declares a nonterminal identifier \id, which may be referenced in the rest of the program, as well as a name \id' for its argument and a local existential variable \id'', both of which may be referenced only in the rule body. We omit the type annotations on \id' and \id'' for simplicity. Finally, a metagrammar is a list of rule and global existential variable declarations, terminating in a declaration of the start symbol.

An astute reader may notice that grammar in Figure \ref{fig:formal-grammar} does not make reference to the constraints or preferences mentioned in \S\ref{sec:a-motivating-example}. We encode all constraints using a combination of local and/or global existential variables, and guards on the conditional productions. 
User preferences can easily be formalized
as functions from sets of rules to metrics.  We do
not give such preference functions a formal syntax and
semantics in this section, but assume such
functions exist in the following section.  In practice, these
preference functions are easily specified through the syntax used in \S\ref{sec:preferences}.

\subsection{Interpreting Metagrammars}
\label{sec:formal-example}
A metagrammar represents a set of \emph{concrete} (or \emph{candidate}) grammars, which have the form described in Figure \ref{fig:grammar-grammar}. Intuitively, this set contains all possible combinations of rules, generated by all possible instantiations of each existential variable. We formally define which which grammars are denoted by a given metagrammar in the next section, via a relational semantics. However, it is illuminating to first consider a simple example metagrammar, and determine which grammars it corresponds to.

Consider the following stripped-down CSV grammar, which describes only a single row. Note that the variables $i$ and $y$ are unused, and would be omitted in an actual \Sys program.
\begin{center}
\begin{BVerbatim}
Cell{i} -> exists x. // i unused
  | if (x = 0) then String
  | if (x <> 0) then Number.

Row{j} -> exists y. // y unused
  | if (j = 0) then Cell{j}
  | if (j > 0) then Row{j-1} Sep Cell{j}.
  
exists len.
start Row{len}.
\end{BVerbatim}
\end{center}

Intuitively, obtaining a grammar from a metagrammar has three steps. Each step involves several \emph{choices}, and different choices will result in different (though possibly equivalent) grammars. The first step is straightforward: we choose a value for each global existential variable. 

The second step is to remove arguments to nonterminals by duplicating each nonterminal definition once for each possible argument. For example, we turn the definition of \texttt{Row\{i\}} into definitions for new nonterminals \texttt{Row\_0}, \texttt{Row\_1}, and so on. The choice made in this step is the number $m$ of times we copy each definition. If we chose $\texttt{len} = 2$ and $m = 3$, we would transform the above grammar into the following:
\begin{center}
\begin{BVerbatim}
Cell_0 -> exists x.
  | if (x = 0) then String
  | if (x <> 0) then Number.
Cell_1 -> // Same as Cell_0
Cell_2 -> // Same as Cell_0

Row_0 -> Cell_0
Row_1 -> Row_0 Sep Cell_1
Row_2 -> Row_1 Sep Cell_2

start Row_2.
\end{BVerbatim}
\end{center}

Notice that $m$ must be strictly greater than the value chosen for $\texttt{len}$, since \texttt{len} is used as an argument to \texttt{Row}. If we chose $\texttt{len} = m = 2$, we would end up with a nonexistent start symbol \texttt{Row\_2}. In general, $m$ must be strictly larger than any of the existential values chosen previously.

The final step is to choose values for each local existential variable; again, these values must be strictly less than $m$. In the example above, if we chose values $0, 2, 1$ for the existentials inside \texttt{Cell\_0}, \texttt{Cell\_1}, and \texttt{Cell\_2} respectively, we would end with the following concrete grammar:
\begin{center}
\begin{BVerbatim}
Cell_0 -> String
Cell_1 -> Number
Cell_2 -> Number

Row_0 -> Cell_0
Row_1 -> Row_0 Sep Cell_1
Row_2 -> Row_1 Sep Cell_2

start Row_2.
\end{BVerbatim}
\end{center}

By making every possible choice in each of these three steps, we generate every grammar that is denoted by the metagrammar. This process is formalized in the next section.

\subsection{Interpreting Metagrammars, but formally this time}
Building upon the intuition from the previous section, we formally define the grammars $G$ represented by a metagrammar $MG$ using a relation $MG \rel G$. As before, the parameter $m$ denotes the number of times each rule is copied. We can then define the set of grammars denoted by $MG$ as
$$\SemanticsOf{MG} = \bigcup_{m=1}^\infty \{G | MG \rel G\}.$$

The three rules of the \rel relation follow much the same steps as the previous section; the only difference is that we fix $m$ in advance, and use it as an upper bound for the value of \emph{all} existential variables, global or local. We make use of an injective function $\idmap{ \cdot  }$ to translate the
parameterized identifiers of the meta-grammar into
the non-parameterized nonterminal identifiers of the
candidate grammars. For example, one implementation of this function would yield $\idmap{\texttt{foo}\{0\}} = \texttt{foo\_0}$. We leave $\idmap{\nt{e}}$ undefined if $e$ is not an integer value, and lift the function $\idmap{\cdot}$ to productions $p$ by applying it individually to each identifier within $p$.

We define the relation \rel below. In the process, we define a subordinate relation $\rb \reltwo \ps$ which relates rule bodies \rb to production lists \ps. We use the $@$ operator to denote list concatenation, and use the notation $[n/\id]$ to denote capture-avoiding substitution of $n$ for $\id$.
Finally, we use $\equiv$ to denote equivalence of expressions according to normal boolean and arithmetic rules.

\begin{mathpar}
\relationRule{Start} {
    e \equiv n\\
    \id' = \idmap{\nt{n}}
} {
    \start\ \nt{e} \rel \start\ \id'
}

\relationRule{Exist} {
    n < m\\
    MG[n/\id] \rel G
} {
   \exts\ \id;\ MG \rel G
}

\relationRule{Unrolling} {
   n_0 \dots, n_{m-1} < m
   \hspace{30pt}
   MG \rel G\hspace{10pt}\\
   \rb[0/\id'][n_0/\id''] \reltwo ps_0
   \hspace{10pt}\cdots\hspace{12pt}
   \rb[m-1/\id'][n_{m-1}/\id''] \reltwo ps_{m-1}\\
   \id_0 = \idmap{\nt{0}}
   \hspace{10pt}\cdots\hspace{10pt}
   \id_{m-1} = \idmap{\nt{m-1}}\\
} {
   \nt{\id'} \rightarrow \exts\ \id''.\rb; MG \rel \id_0 \rightarrow ps_0; \dots; \id_{m-1} \rightarrow ps_{m-1}; G
}

\relationRule{Cond-True} {
    b \equiv \texttt{True}\\ 
    \p' = \idmap{\p}\\
    e \equiv n\\
    \id' = \idmap{\nt{n}}
} {
  \cond{b}{\p}{\nt{e}} \reltwo [\p']
}

\relationRule{Cond-False} {
    b \equiv \texttt{False}
} {
  \cond{b}{\p}{\nt{e}} \reltwo []
}

\relationRule{Body} {
   \rb \reltwo \ps\\
   \cc \reltwo \ps'
} {
   \cc\ \rb \reltwo \ps\ @\ \ps'
}

\end{mathpar}

The \texttt{START} rule is the simplest; it simply checks that $e$ is a value, and applies $\idmap{\cdot}$ to translate the metagrammar nonterminal to a concrete-grammar nonterminal. The \texttt{EXIST} rule is almost as simple; it chooses a particular value for an existential variable, and substitutes that value throughout the rest of the metagrammar.

Most of the complexity lies in the \texttt{UNROLLING} rule. For each argument $i$ from $0$ to $m$, we choose a value $n_i$ for the local existential variable, and substitute both the the index $i$ and $n_i$ into the rule body. For each substituted body, we use the $\reltwo$ relation to obtain the list of corresponding productions $\texttt{ps}_i$ and add the rule $\idmap{\nt{i}} \rightarrow \texttt{ps}_i$ to the output grammar.

The \reltwo\ relation is relatively simple: for each conditional production, we return either a singleton list \texttt{[p']} if the condition is true, or an empty list if the condition is false. For rule bodies with multiple productions, we concatenate these lists, ultimately resulting in a list containing exactly those productions whose conditions are true.

\subsection{Syntactic Sugar by Example}
\label{subsec:syntactic-sugar}

As demonstrated in \S\ref{sec:a-motivating-example}, \Sys includes a number of features to enable rapid development of complex metagrammars that do
not appear in the core calculus, such as optional productions, constraints, and regular expressions. We informally illustrate here how to encode each of these features in our core calculus.

\paragraph*{Optional Productions} Optional productions can be represented using local existential variables. For example, \lstinline{A -> ? B ? C} can be represented as the following
rule
\begin{center}
\begin{verbatim}
    A -> exists b.
         | if (b = 0 || b = 1) then B
         | if (b = 0 || b = 2) then C
\end{verbatim}
\end{center}

\noindent Notice that depending on the choice of 
\lstinline{b}, 
any combination of the two rules may be included in the grammar.

\paragraph*{Regular Expression Productions} We compile regular expression syntax to context-free
grammar rules in the usual way, by introducing new nonterminals and rules as appropriate.  
Collections of rules generated from the
right-hand side of a single meta-grammar rule can be included or excluded as a group using global existential variables. For example, \lstinline{A -> ? "b"*} may be
compiled as follows, where g and Temp are fresh variables:
\begin{center}
\begin{verbatim}
exists g.
A -> if (g = 0) then Temp.
Temp -> if (g = 0) then "".
Temp -> if (g = 0) then "b" Temp.
\end{verbatim}
\end{center}

\paragraph*{Constraints} The surface language of \Sys includes several different types of constraints, but all of them can essentially be encoding by enumerating the possible combinations of rules. For example, imagine we have the program 
\begin{center}
\begin{minipage}{.40\textwidth}
\begin{verbatim}
A -> ? B as r1
     ? C as r2
     
\end{verbatim}
\end{minipage}
\begin{minipage}{.45\textwidth}
\begin{verbatim}
D -> ? E as r3
     ? F as r4
     ? G as r5
\end{verbatim}
\end{minipage}
\end{center}
We could encode the constraints that \texttt{A} has exactly one production, and \texttt{D} has at least two productions by creating local existential variables \texttt{b1} and \texttt{b2}, and enumerating the possible instantiations:
\begin{center}
\begin{minipage}{.40\textwidth}
\begin{verbatim}
A -> exists b1.
     | if (b1 = 0)  then B
     | if (b1 <> 0) then C

\end{verbatim}
\end{minipage}
\begin{minipage}{.45\textwidth}
\begin{verbatim}
D -> exists b2.
     | if (b2 <> 0) then E
     | if (b2 <> 1) then F
     | if (b2 <> 2) then G
\end{verbatim}
\end{minipage}
\end{center}
For something more complicated, consider the constraint \texttt{r1 => (r3 || r5)} which specifies that either \texttt{r3} or \texttt{r5} must be included whenever \texttt{r1} is. We can represent this cross-rule constraint using a \emph{global} existential variable $b$, and again simply enumerate the cases:
\begin{center}
\begin{minipage}{.40\textwidth}
\begin{verbatim}
exists b.
A -> ? C
     | if (b < 3) then B

\end{verbatim}
\end{minipage}
\begin{minipage}{.45\textwidth}
\begin{verbatim}
D -> ? E ? F ? G
     | if (b = 0 || b = 1) then E
     | if (b = 0 || b = 2) then G
\end{verbatim}
\end{minipage}
\end{center}

\OMIT{
Real-world data formats oftentimes have repeated structure and \afm{I'm going to work on other stuff for now, as I think it is apparent how these are written}

$A \rightarrow ? B ? C$ can be
compiled to two productions, $A \rightarrow B$ and $A \rightarrow C$. Adding a
production that is always in the grammar, like in the definition of
\lstinline{Digit}, requires jointly adding the production
\lstinline{Digit -> ["0"-"9"]} and
a constraint requiring the production to be included.}

\OMIT{The only part of \Sys not covered by this semantics is our use of  ``\lstinline{infty}.''
The infinity production creates an infinite search space, with an infinite number
of rules. When solving a problem with \lstinline{infty}, we are solving a more general
 problem that uses SGI as a
subproblem. We describe this generalization in more detail in \S\ref{sec:implementation}.}

\OMIT{
\section{\Sys Core Calculus - Old}

\Sys provides users with a convenient syntax for writing metagrammars (\MG) 
that consists of a sequence of statements, where each
statement can be either a production, a constraint, or a preference. The formal syntax follows.
\begin{center}
  \begin{tabular}{@{}r@{\ }c@{\ }c@{\ }l@{}}
    \MG & \GEq{} &   & \Prod{}. \MG \\
        &        & | & const \Constraint{}. \MG \\
        &        & | & prefer \Pref{}. \MG\\
        &        & | & $\epsilon$ \\
    \Prod & \GEq{} & & $i \rightarrow p$ \\
    \Constraint & \GEq{} & & $i \rightarrow p$ \hspace*{1em} |\ $\neg\Constraint$\\
      &        & | & $\Constraint_1 \wedge \Constraint_2$ 
      \hspace*{.96em}|\ $\Constraint_1 \vee \Constraint_2$ \\
    \Pref & \GEq{} & & $c$ $r$
  \end{tabular}
\end{center}

Productions in a \Sys metagrammar are of the form $i \rightarrow p$, where $i$
is a nonterminal, and $p$ is a sequence of terminals and non-terminals.

Constraints are boolean formulas, where the predicates are productions. Any
accepting grammar must satisfy the boolean formula, where included productions
are interpreted as $\textit{true}$ and productions not included are interpreted as $\textit{false}$.

Preferences consist of a boolean formula $c$ and a real value $r$. If the boolean
formula $c$ is satisfied by a given grammar, the reward of that grammar is increased
by $r$.

We assume there is some designated set of terminals, non-terminals and start symbol.
The semantics of a \Sys metagrammar is given by a function $\SemanticsOf{\cdot}$ that maps a metagrammar $\MG$ to a triple
$(R,f,h)$ of productions, constraint function and preference function. 
The
function $\mathcal{C}\SemanticsOf{\Constraint}$ 
maps a constraint $\Constraint$ to a function $f$ from
grammars to type $bool$. The function
$\mathcal{P}\SemanticsOf{\Pref}$ maps a preference $\Pref$ to a function from grammars to \Reals. 
We formally define these three functions below.

\begin{center}
  \begin{tabular}{@{}r@{\ }c@{\ }l@{\ }l@{}}
    $\SemanticsOf{\Prod{}. \MG}$ & = & $(\R \cup \{\Prod\},f,h)$\\
    & & \hspace*{2em}where $\SemanticsOf{\MG} = (\R,f,h)$\\
    $\SemanticsOf{\text{const }\Constraint. \MG}$ & = & $(\R,f,h)$\\
    & & \hspace*{2em}where $\SemanticsOf{\MG} = (\R,f_1,h)$\\
    & & \hspace*{2em}and $\mathcal{C}\SemanticsOf{\Constraint} = f_2$\\
    & & \hspace*{2em}and $f(x) = f_1(x) \wedge f_2(x)$\\
    $\SemanticsOf{\text{prefer }\Pref. \MG}$ & = & $(\R,f,h)$\\
    & & \hspace*{2em}where $\SemanticsOf{\MG} = (\R,f,h_1)$\\
    & & \hspace*{2em}and $\mathcal{P}\SemanticsOf{\Pref} = h_2$\\
    & & \hspace*{2em}and $h(x) = h_1(x) + h_2(x)$\\
    $\SemanticsOf{\epsilon}$ & = & $(\{\},f,h)$\\
    & & \hspace*{2em}where $f(x) = true$\\
    & & \hspace*{2em}and $h(x) = 0$\\
    $\mathcal{C}\SemanticsOf{i \rightarrow p}$ & = & $f$ where $f(x) = i \rightarrow p \in x$\\
    $\mathcal{C}\SemanticsOf{\neg \Constraint}$ & = & $f$ where $f(x) = \neg(\mathcal{C}\SemanticsOf{\Constraint}(x))$\\
    $\mathcal{C}\SemanticsOf{\Constraint_1\wedge \Constraint_2}$ & = & $f$ where $f(x) = \mathcal{C}\SemanticsOf{\Constraint_1}(x) \wedge \mathcal{C}\SemanticsOf{\Constraint_2}(x)$\\
    $\mathcal{C}\SemanticsOf{\Constraint_1\vee \Constraint_2}$ & = & $f$ where $f(x) = \mathcal{C}\SemanticsOf{\Constraint_1}(x) \vee \mathcal{C}\SemanticsOf{\Constraint_2}(x)$\\
    $\mathcal{P}\SemanticsOf{\Constraint\ r}$ & = & $f$ where $f(x) = $ if $\mathcal{C}\SemanticsOf{\Constraint}(x)$ then $r$ else 0\\
  \end{tabular}
\end{center}
}

\section{Grammar Induction Algorithm}%
\label{sec:grammar-induction-algorithms}
The goal of \Sys' grammar induction algorithm is, given a set of positive and negative examples, a metagrammar $MG$ and a ranking function $p: \SemanticsOf{MG} \rightarrow \mathbb{Z}$, to return the highest-ranked grammar contained in $\SemanticsOf{MG}$ which parses all the positive examples and none of the negative examples\footnote{If several of these grammars have the same rank, any of them may be returned.}. We have experimented with several possible algorithms that interleave parsing and
constraint solving in different ways, and outline the most successful algorithm here.

Our grammar induction algorithm is formalized in Algorithm~\ref{alg:grammarinduction}. At a high level, our strategy is rather simple. First, we concretize the metagrammar, removing arguments to nonterminals along with existential variables. Next, we determine all possible combinations of rules that successfully parse each example. We use these to create a logical formula asserting either that one of these combinations is included in the final grammar (for a positive example), or that none are included (for a negative example). For space, we elide the formal definitions of \textsc{Concretize} and \textsc{Constraints} and the proof of the following theorem. The interested reader can find them in \ifappendix Appendix \ref{sec:proofs}\else the full version of the paper\fi.

\begin{theorem}
If there is a Grammar $G$ such that $G \in \SemanticsOf{MG}$, where $Ex^+ \subseteq \mathcal{L}(G)$ and $Ex^- \subseteq \overline{\mathcal{L}(G)}$, then $\Call{InduceGrammar}{MG,Ex^+,Ex^-,0}$ will return such a grammar.
\end{theorem}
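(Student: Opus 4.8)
The plan is to decompose the claim into a \emph{soundness} part (every grammar \textsc{InduceGrammar} returns lies in $\SemanticsOf{MG}$ and parses all of $Ex^+$ and none of $Ex^-$) and a \emph{progress} part (if a witness $G$ exists, the search reaches a depth at which a solution is found). I read the fourth argument, initialized to $0$, as the unrolling bound $m$: the algorithm concretizes $MG$ at the current $m$, builds the example constraints, calls the solver, and, if unsatisfiable, recurses with $m+1$. Since $\SemanticsOf{MG}=\bigcup_{m\ge 1}\{G\mid MG \rel G\}$, any witness $G$ is derivable at some \emph{finite} $m^\ast$, so it suffices to show the algorithm succeeds no later than $m=m^\ast$ and never returns an invalid grammar beforehand.

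First I would establish that \textsc{Concretize} is faithful to the relational semantics: for each fixed $m$, the grammars it enumerates---as choices of existential values together with the enabled conditional productions---are exactly $\{G \mid MG \rel G\}$. This is an induction on the structure of $MG$ mirroring \textsc{Start}, \textsc{Exist}, and \textsc{Unrolling}, with an inner induction on rule bodies following \textsc{Cond-True}, \textsc{Cond-False}, and \textsc{Body} for the $\reltwo$ relation; the substitutions $[n/\id']$ and $[n/\id'']$ and the translation $\idmap{\cdot}$ line up directly with the corresponding steps of \textsc{Concretize}. This immediately gives soundness of the candidate set: every grammar the algorithm ever considers satisfies $MG \rel G$ for the current $m$, hence lies in $\SemanticsOf{MG}$.

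Next I would prove correctness of the \textsc{Constraints} encoding: the formula handed to the solver is satisfiable exactly when some grammar in the concretized set parses every positive example and no negative one, and a model decodes to such a grammar. The enabling fact is \emph{monotonicity} of context-free membership in the production set---adding productions can only enlarge $\mathcal{L}$. Hence, for a fixed string $w$, the finitely many production-combinations enumerated by the semi-ring parser characterize parseability: $w \in \mathcal{L}(G)$ iff $G$'s productions contain all productions of some enumerated parse. A positive example is then encoded as ``at least one such combination is fully included'' and a negative example as ``every such combination omits at least one production,'' which by monotonicity is equivalent to $w \notin \mathcal{L}(G)$. Together with the faithfulness of \textsc{Concretize}, this yields both directions at each fixed $m$.

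Finally I would assemble the termination argument. By faithfulness, at $m=m^\ast$ the witness $G$ is among the candidates; since it parses all of $Ex^+$ and none of $Ex^-$, the \textsc{Constraints} formula is satisfiable, the solver returns a model, and the algorithm returns a grammar that, by the previous paragraph, is a valid solution. Because the depth parameter increases by one on each recursive call starting from $0$, the algorithm reaches $m^\ast$ after finitely many steps and halts with such a grammar. The hard part will be the negative-example half of the \textsc{Constraints} correctness: one must argue that the parser's enumeration of parsing combinations is complete and finite for the (finite) concretized grammar, and that ``exclude every sufficient combination'' faithfully captures non-membership---this is exactly where monotonicity does the work and where an off-by-one in the enumeration would silently admit a spurious negative. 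The search over $m$ itself is routine once faithfulness and the solver-correspondence are established.
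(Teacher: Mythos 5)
Your proposal is correct and follows essentially the same route as the paper: it splits the claim into soundness (anything returned at depth $k$ is a grammar with $MG \reltwo_k G$ satisfying the examples) and completeness/progress (a witness derivable at some finite depth $k^\ast$ forces the solver call to succeed by that depth), with the two key ingredients being a structural-induction faithfulness lemma relating \textsc{Concretize}/\textsc{Constraints} to the relational semantics and the semiring-parsing characterization of the example formulas. Your explicit appeal to monotonicity of membership in the production set is a slightly cleaner justification of the negative-example direction than the paper's contradiction argument, but it is the same proof in substance.
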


In \Sys, the ranking function $p$ is constructed piecewise from various \texttt{prefer} statements in the surface language, each consisting of an integer weight and a boolean combination of rules. The rank of a grammar is simply the sum of all weights whose boolean combinations are satisfied. These preferences are translated into logical formulas, combined with the formulas for each example, and shipped to Z3~\cite{z3}, which uses its MaxSMT algorithms to find a preference-optimal solution.

For simplicity, we assume there are no variables in our start symbol.
\begin{algorithm}[!t]
\begin{algorithmic}[1]
\Procedure{InduceGrammar}{$MG,Ex^+,Ex^-,k$}
\State $G \gets \Call{Concretize}{MG,k}$
\State $\phi \gets \Call{Constraints}{MG,k}$
\State $\phi^+ \gets \bigwedge_{ex \in Ex^+} \mathsf{SemiringParse}(G,ex)$
\State $\phi^- \gets \bigwedge_{ex \in Ex^+} \neg \mathsf{SemiringParse}(G,ex)$
\State $ret \gets \Call{SMT}{\phi^+ \wedge \phi^- \wedge \phi}$
\Match{$ret$}
\Case{Some($vs$)}{ \textbf{return }$MG[vs]$}
\EndCase
\Case{None}{ \textbf{return }$\Call{InduceGrammar}{MG,Ex^+,Ex^-,k+1}$}
\EndCase
\EndMatch
\EndProcedure
\end{algorithmic}
\caption{Syntax-Guided Grammar Induction Algorithm. $MG$ is the provided metagrammar, $Ex^+$ is the provided positive examples, $Ex^-$ is the provided negative examples, and $k$ is the search depth (initially 0). }
\label{alg:grammarinduction}
\end{algorithm}

\subsection{Concretizing Metagrammars}
Before we can apply a conventional parsing algorithm to determine which combinations of rules parse a given example, we must first eliminate \Sys' unconventional constructs -- namely, nonterminals with arguments, and existential variables. This process closely mirrors the one described in \S\ref{sec:formal-example}. We begin by choosing a maximum value for each global variable (corresponding to a particular choice of $m$). We then copy each rule that number of times, and expand out \emph{all possible} values for each global and local existential variable. 

This process generates a finite subset of $\SemanticsOf{MG}$, obtained by taking the union only up to our chosen value of $m$. We then apply the concrete parsing algorithm described below; if it fails to find a suitable grammar, we begin the process anew with a larger value of $m$. In general, this process is not guaranteed to terminate, but termination \emph{can} be guaranteed for certain types of grammar.

\paragraph{Optimizations} Our construction of $\SemanticsOf{MG}$ contains significant redundancy. For one, different values of existential variables may not actually lead to different grammars. \OMIT{We have seen cases of this already; in the encoding of \texttt{A -> ? B ? C} given in \S\ref{sec:syntactic-sugar}\dl{Rework this if we cut that section}, any value for variable $b$ larger than 2 produces the same grammar.} Similarly, we need not copy \emph{every} rule the same number of times, since this may result in rules which are unreachable from the start symbol. 

\Sys leverages these observations by allowing users to declare variables with a range type instead of type \texttt{nat}; for example, the line \texttt{exists x : [0, 4]} declares an existential variable whose value is constrained between 0 and 4, inclusive. By using these types, users can both express their intent for variables to be restricted, and provide the compiler with hints to avoid checking redundant grammars. Furthermore, \Sys copies rules only as much as necessary, ensuring that no unreachable rules appear in the concretized metagrammar.

Notice that metagrammars in which all variables are bounded are \emph{finite}, and hence permit an exhaustive search of candidate grammars. This allows us to strengthen our termination conditions: \textbf{Induction for \Sys metagrammars is guaranteed to terminate whenever all variables in the program are bounded.}

\subsection{Grammar Induction Via Semiring Parsing}

Once we have generated a large, ambiguous grammar from a metagrammar, we must identify a subset of rules in that grammar that parse all positive examples, do not parse any negative examples, and satisfy the user-provided constraints.

For example, consider the following grammar:
\begin{lstlisting}
{x_1} X -> x
{x_2} X -> x
{y_1} Y -> y
{y_2} Y -> y
{z_1} Z -> z
{z_3} Z -> z

G -> XG
G -> YG
G -> Z
\end{lstlisting}

Consider the example string "xyz". This could be parsed using rules $x_1, y_1, z_1$ or $x_1, y_1, z_2$ or $x_1, y_2, z_1$, and so on. A na\"ive algorithm could generate all combination of rules, filter out those that do not satisfy the user-provided constraints, and iteratively test them until the it finds a set of rules that accept the positive examples but reject the negative examples. Unfortunately such an approach is not tractable in practice -- as there are can be an exponential number of different rule combinations for parsing the same input string. Even in this simple example there are $2^3$ possible combinations for only a single string.

We address this problem by representing the possible rule combinations as a logical formula. For the example string above, the possible rule sets are represented by the formula $(x_1 \vee x_2) \wedge (y_1 \vee y_2) \wedge (z_1 \vee z_2)$. Rule names $x_1\ldots z_1$ are interpreted as boolean variables; satisfying assignments correspond to collections of those rules whose variable is \texttt{true}. Any satisfying assignment for this formula will yield a set of rules which suffice to parse the example.

More broadly, we encode the possible parse trees for every example as such logical formulas. We then create a single formula as a conjunction of the formulas for each positive example, the negation of the parse formulas for each negative examples, and the user-provided constraints. Finally, we pass this formula to an off-the-shelf SAT solver. As with many practical problems, SAT solvers can typically find satisfying assignments for these formulas in much less than exponential time.

\paragraph*{Generating The Formulas} The process of generating formulas can be implemented using semiring parsing~\cite{semiringparsing,EarleyDerivationTree}. In particular, we consider logical formulas to be elements in a conditional-table semiring~\cite{provsemi}, where logical disjunctions and conjunctions correspond to the semiring's $+$ and $\times$ operators.

\paragraph*{Satisfying the Formulas} Once provided to an SMT solver, the exponential number of possible rules does not diminish. However, this is the bread-and-butter of an SMT solver. In the past, most DNF formulas would incur an exponential blowup when given to an SMT solver; however, modern SMT solvers can efficiently find satisfying assignments for such formulas.

\begin{figure*}[t!]
  \centering
  \includegraphics[width=\columnwidth]{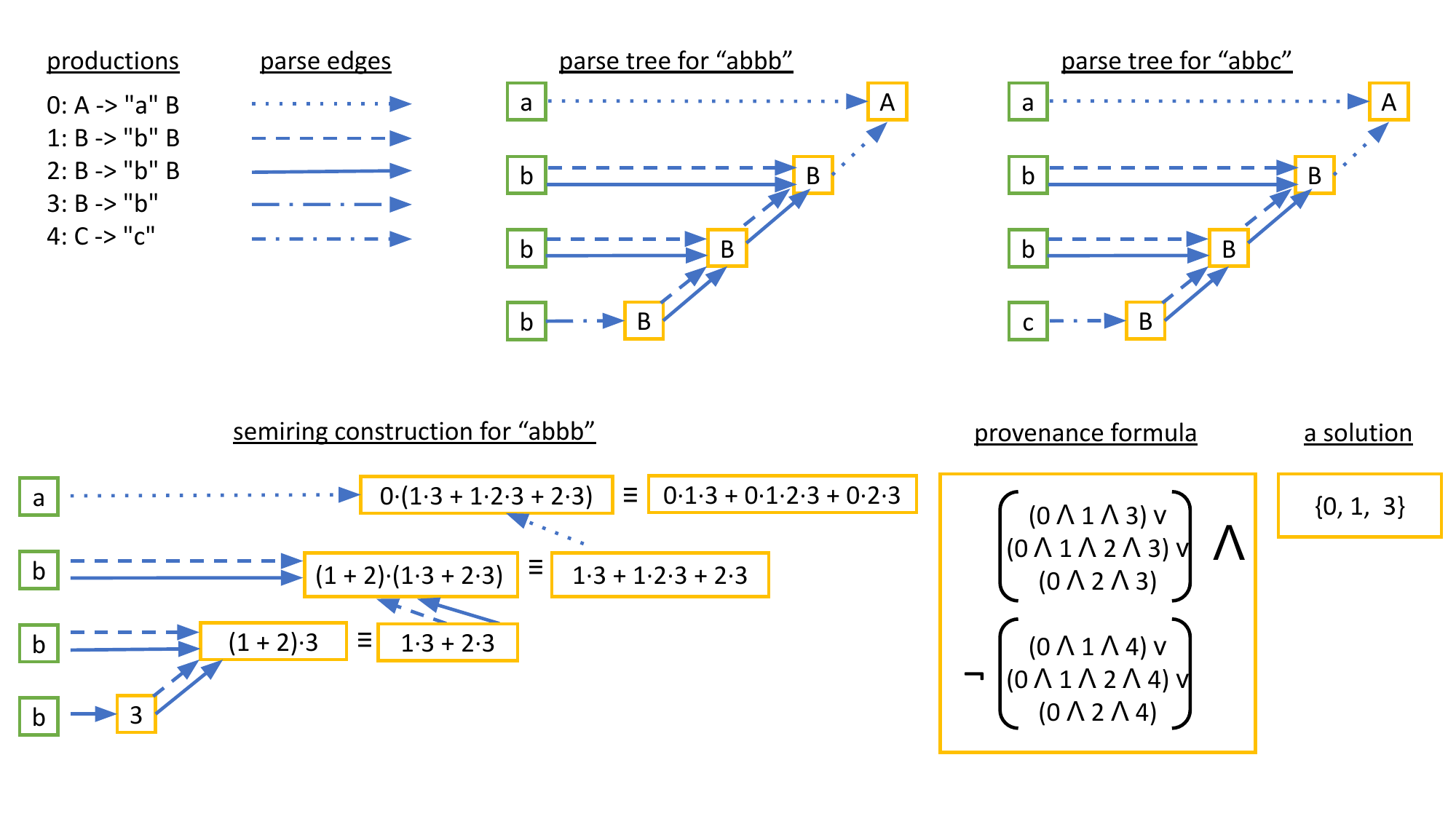}
  \caption{Semiring parsing and provenance formula construction}
  \label{fig:provenance-example}
\end{figure*}

Figure~\ref{fig:provenance-example} illustrates our process on a simple example.  In the upper left,
we present five candidate rules (numbered 0-4) that define nonterminals A, B and C. Rules 1 and 2 are identical in this example; they
help illustrate the ambiguity the algorithm must handle.  
In the upper right, Figure~\ref{fig:provenance-example} presents two example parse trees, one
for the example ``abbb'', and one for ``abbc''. Each edge of the parse tree is associated with a particular rule; for instance, rule 1 is denoted by a dashed edge and rule 2 is denoted by a solid edge. 

On the lower left-hand side,
Figure~\ref{fig:provenance-example} presents the strategy that the parser implements to compute a semiring expression for the example "abbb" (the computation for "abbc" is similar). These expressions represent the possible sets of rules that can parse each example. To begin, the final "b" of "abbb" can only be obtained via rule 3, which corresponds to the algebraic expression $3$. The second-to-last "b" character could be obtained either with rule 1 or rule 2, represented as the expression $1+2$. Hence, to produce a string ending in "bb", we must have either rule 1 or 2, and also rule 3. We can represent this by conjoining our two expressions to get $(1+2)\cdot3$, or equivalently $1\cdot2 + 1\cdot3$.

Continuing on, the third-to-last "b" could also be obtained via rule 1 or rule 2, so we once again conjoin the expression $1+2$ with our accumulator expression to get $(1+2)\cdot(1\cdot3+2\cdot3)$. Since the conditional table semiring is commutative and idempotent\footnote{Intuitively, the expression $1\cdot1\cdot3$ represents the set containing rule 1, rule 1, and rule 3; this is of course equal to the set containing rule 1 and rule 3, represented by $1\cdot3$. For similar reasons, $1\cdot2\cdot3 \equiv 2\cdot1\cdot3$.}, this is equivalent to $1\cdot3 + 1\cdot2\cdot3 + 1\cdot2$. Finally, the first "a" character could only be obtained from the start symbol \texttt{A} using rule 0, so we end up with a final expression of $0\cdot(1\cdot3+1\cdot2\cdot3+2\cdot3)$, or equivalently $0\cdot1\cdot3+0\cdot1\cdot2\cdot3+0\cdot2\cdot3$.

To finish the process, we must translate our example semiring expressions into logical formulas. Fortunately, this is 
easy---our intuition for the semiring expressions already told us that $+$ corresponds to disjunction and $\cdot$ to conjunction. Simply replacing the connectives transforms our semiring expression into a logical one. 

If we take "abbb" to be a positive example and "abbc" to be a negative example, we can obtain an overall provenance formula by translating both examples into semiring expressions, then to logical formulas, and conjoining them. Since "abbc" is negative, we negate the corresponding conjunct, indicating that \emph{none} of the sets of rules that can parse "abbc" should be included. 
We then pass the provenance formula to an SMT solver (specifically, Z3), to obtain a solution; in this case, one possible solution is the set of rules $\{0,1,3\}$. If we had included preferences in our example, we would use Z3's MaxSMT algorithms to obtain the solution with the highest rank.

\OMIT{
Here, the non-terminals are replaced by algebraic expressions that represent the rules used to
implement that subtree.  For instance, the orange box at the bottom left contains the algebraic 
expression ``$3$,'' which represents the fact that rule $3$ must be used to parse the underlying string.
The box above it contains the expression ``$1.3 + 2.3$,'' which indicates that either the rules
$1$ and $3$ or the rules $2$ and $3$ must be used to parse ``bb.''  Above
that, there appears another provenance expression, which may be reduced to the equivalent 
expression ``$1.3 + 1.2.3 + 2.3$'' thanks to the fact we are working in the conditional table semiring.
The final expression produced is ``$0.1.3 + 0.1.2.3 + 0.2.3$.''  
}

%

\OMIT{
\subsection{Notation}
\label{subsec:Notation}
\afm{I think this should go earlier in the paper, but I will leave here for
  now.} We introduce the following notation:
\begin{itemize}
    \item \metagrammar: Metagrammar 
    \item $\mathcal{M}|X$: Grammar achieved when restricting $\mathcal{M}$ to the candidate rules in $X$
    \item $G$: Context free grammar
    \item $\neg G$: All candidate rules not found in $G$
    \item $Ex+, Ex-, E$: Positive examples, negative examples, all examples $(E = Ex+ \cup Ex-)$
    \item $\phi$: SAT formula of candidate rule indicator literals whose solutions are grammars in \metagrammar \\
          $\Phi$: Set of SAT formulas whose solutions are in \metagrammar
    \item $h$: Real valued function over \metagrammar\\
          $H$: Set of all real valued functions over \metagrammar
    \item RC: Rule collection.
    \item prov: alias for rule collection with more semantic meaning (see Nit in this section)
\end{itemize}
\ksf{I think this section should explain how we encode a set of possible grammars as a SAT formula.}

While SyGuS and SGI have a lot of motivation in common, their search domains are
quite different. In SyGuS, one builds a tree, where the nodes of the tree are
selected from a number of options. In SGI, one makes a decision on what rules to
include from a set of candidate rules. This search space is similar to that of
logic program synthesis\cite{?,?}, where one chooses a set of candidate logic
programming rules from a space containing many possible rules. Indeed, as logic
programming can parse files from arbitrary grammars~\cite{?}, we could just use
existing logic program inducers to solve SGI.

\input{figures/algo-SATHeavy} However, doing so would not exploit properties of
grammars that aren't present in general logic programs. A number of parsing
algorithms permit ``parsing over semirings,'' where the parser can be
parameterized by a semiring, and will infer keep track of parsing information
according to that semiring. For example, a recognizer would operate over the
semiring \{0,1\}, where 0 is the additive identity, and 1 is the multiplicative
identity. Another semiring is the free semiring, which keeps track of all
possible parses of a given string. We instantiate a provenance
semiring~\cite{provsemi} to keep track of what rules are required to generate a
valid parse. From a high level, our algorithm operates in two steps:
\begin{enumerate}
\item Find the provenance formulas for all positive and negative examples.
\item Conjunct those provenance formulas with the formulas expressing
  constraints, and call a MaxSMT solver with the conjuncted formula and the
  preferences to find the optimal set of rules.
\end{enumerate}
These steps are shown formally in Algorithm~\ref{fig:algo-satheavy}.
Algorithm~\ref{fig:algo-satheavy} relies on two additional inputs: a SMT solver
and a full parser. We describe the instantiation of these two inputs in
\S\ref{sec:implementation}.
}

\OMIT{
\section{Implementation}%
\label{sec:implementation}


We have implemented \Sys in 3490lines of OCaml code. In this section, we
describe our implementation of our provenance parser and SMT-Solver integration.

\subsection{Provenance Parser} We have implemented a our Provenance Parser as a altered Earley parser. The construction of our Provenance Parser hinges on two primary pieces of prior work: (1) Parsing over semirings~\cite{parsingoversemirings,insideoutparsing}, (2) the conditional tables provenance semiring~\cite{valsstuff}.

Goodman et al. discovered that parsing algorithms can be defined over arbitrary semirings. These semirings then determine the output of a parsing algorithm. The smallest such semiring is the boolean semiring. This semiring provides an output, 0 or 1, that describes whether or not the grammar recognizes the output. In contrast, the largest such semiring is the free semiring over the grammar productions. In effect, the output of a parsing algorithm over this semiring will give a ``complete parse forest,'' a description of all possible parses of the provided string. Goodman's thesis~\cite{goodmanthesis} describes an implementation of Earley parsing over arbitrary semirings that we utilized, so we merely needed to find the correct semiring for our algorithm.

Provenance semirings are a well-studied representation for formalizing provenance information of database queries. One such provenance semiring is the ``conditional tables'' semiring. This semiring effectively represents boolean formulas over parse rules. Parsing with the conditional tables semiring outputs a boolean formula that describes which rules must be included for a valid parse to be extracted -- exactly what we're looking for!

\subsection{SMT Solver Integration}

We use Z3~\cite{z3} to solve our MaxSMT calls. Of course, Z3 doesn't permit using grammar productions as boolean variables. We build a mapping between grammar productions and Z3 boolean variables. If Z3 finds a satisfying assignment to variables, it builds a grammar from the productions associated with the variables assigned to true.
}

\OMIT{\afm{There was a lot of great stuff in this section, but I think we can just refer to prior work. This being said, I think right now I have too little intuition provided, I just cite the two relevant things. So I want to keep this stuff in the document, as we can use already written stuff to describe}

In this section, we will go into further detail about how we implemented a language for specifying a 
grammar induction problem as well as how we implemented each of the three grammar induction algorithms.

\subsection{Grammar Induction Algorithm Design}
\label{induction-algo-implementation-design}
Now that we have described the induction problem language in depth, we move to algorithm implementation.
We will go over the shared and/or branched implementation first before diving into
implementation details specific to each algorithm. All three algorithms rely on an SMT solver;
in particular, we use the Z3-SMT solver which has a nice OCaml interface at 
\url{https://github.com/termite-analyser/z3overlay}. As for provenances and rule collections, both are 
implemented as sets of (Name: string, Index: int) and (Index: int) respectively. The former could have been 
a true alias of rule collections, but it didn't make much difference in runtime.
Moving on, the Prosynth-inspired and SATLite algorithm both use a Delta-debugger. 
Due to its prevalence and simplicity, there are plenty
of algorithm outline resources to aid in implementation \cite{DeltaDebugging}. Finally, the body of the algorithms are not very 
exciting -- after the dependencies are loaded, they are used as black-boxes in simple control statements.
There are no new ideas in our work extending these well-established tools. \par
The novelty of our work comes when implementing the context-free-grammar related dependencies. Between the three
algorithms, one must implement  
\begin{enumerate}
    \item A Recognizer \rec: $\mathcal{M} \times E \to bool$
    \item A Provenance Finder \prov: $\mathcal{M} \times E \to prov$
    \item A Provenance Oracle Constructor $\textbf{OC}_{\mathcal{M}}: E \to \mathcal{PO}$
    \item A Provenance Condenser $\textbf{PC}_{\mathcal{M}}: E \to \Phi$
\end{enumerate}

General context free parsing is well studied and we decided to base all of the objects above on the
Earley recognition algorithm. A nice explanation of the algorithm and its implementation can be 
found on Loup Vaillant's blog page \cite{EarleyImplementation}. \par 
Note that the recognizer \textbf{R} does \emph{not} have anything to do with provenances. It is not difficult 
or computationally expensive to adapt the recognizer to keep track of a single provenance -- a single 
provenance can be recorded for each Earley item in the parse table and the Early algorithm will terminate in the 
same number of steps as before. Provenances are implemented as sets and joining them is not too expensive. 
Our final implementation uses a this type of provenance puller for the Prosynth-inspired algorithm.
\subsubsection{Full Provenance Tracking}
Looking at the prerequisites for the SATLite and SATHeavy algorithms, it's not even clear at first that the provenance 
oracle or condenser can be constructed in a non-combinatorial manner. Fortunately, the problem of efficiently
tracking all provenances used to derive an example has been studied and documented \cite{EarleyDerivationTree}. At a high
level, Earley recognition can be augmented with any semi-ring that stores information about the recognition paths. For example,
when augmented with the semiring $<\mathcal{N}, +, \times, 0, 1>$, the final Earley item contains the number of 
distinct ordered rule-applications which can be used to parse a string. \par 
For our purposes, we augment the Earley algorithm with the semiring $<\mathcal{M}, \cup, \cdot , \emptyset, \{\{\}\}>$
where $\cdot$ stands for concatenation. The idea here is relatively natural at a high level: when there are two ways 
to reach the same parse table item, we note that with the union $\cup$; when two items are needed to construct a third item, 
we note that with a concatenation $\cdot$. The previous statement does not account for nullable grammar symbols but these are covered
in detail in the parsing reference \cite{EarleyDerivationTree}. \par
Ultimately, the values output by the augmented Earley recognition algorithm are expressed with 
type in Figure \ref{fig:provenance-tracker}.\par 
\input{figures/prov-tracker-signature}
The provenance tracker type has very simple but powerful primitives. We will showcase their uses now. 
\subsubsection{Adapting the Provenance Tracker}
The augmented Earley algorithm translates beautifully into the objects needed for the SATLite and SATHeavy
algorithms. \par
To implement the SATLite algorithm, we need a provenance oracle constructor $\textbf{OC}_{\mathcal{M}}: 
E \to \mathcal{PO}$. Under the hood, a provenance oracle is just a provenance tracker(!) so a provenance
oracle constructor is just the augmented Earley algorithm itself! We just need to show that a provenance
tracker inhabits the provenance oracle type. Recalling the definition in \ref{subsec:satlite-alg}, a 
provenance oracle must support the Recognize and GetProv functions -- we will show that a provenance
tracker can do so efficiently.

Suppose we have constructed a provenance tracker $T_e$ for metagrammar $\mathcal{M}$ and example $e$. To
implement Recognize, we need to check if grammar $G \in \mathcal{M}$ ``covers'' a path in $T_e$. We can unpack 
the semantically-suggestive primitive names and realize that for $G \in \mathcal{M}$ and tracker $T$, if T matches
\begin{enumerate}
    \item Or subtrackers -> $G$ covers $T$ iff it covers at least one of the subtrackers
    \item And subtrackers -> $G$ covers $T$ iff it covers all of the subtrackers
    \item Rule (ruleNum, subtracker) - $G$ covers $T$ iff it contains ruleNum and covers subtracker.
\end{enumerate}
Note that rule 1 and 2 above handle the trivial cases Or/And [] correctly. Or [] denotes
that there is no parse and so no tracker is covered; And [] denotes that no rules are 
needed and therefore all trackers are trivially covered. And of course, we have that $G$ covers
$T_e$ iff $G$ recognizes $e$. \par
Similarly, to implement GetProv, one notes that for $G \in \mathcal{M}$ and general $T$ matching
\begin{enumerate}
    \item Or subtrackers - $G$'s provenance is \emph{one} of the provenances from the
    subtrackers. If none exists, then $G$'s provenance is None.
    \item And subtrackers - $G$'s provenance is the union of all provenances
    from the subtrackers. The union of a provenance with None is None. 
    \item Rule (ruleNum, subtracker) - if ruleNum is in $G$'s rules, then 
    $G$'s provenance is ruleNum added to its provenance for the subtracker. 
    Otherwise, it is None. 
\end{enumerate}
Note from item 3 above, we will always pull a provenance of $e$ which uses rules only in $G$. Moreover,
we only need to keep track of a single provenance so the number of objects we need to keep track of 
is limited. Then, as desired, we have that GetProv with $T_e$ on $G$ returns a correct provenance if 
$G$ recognizes e and None otherwise.\par
Together, these ideas can be used to efficiently implement a Recognize and GetProv function for a tracker,
meaning that it makes an efficient oracle for the SATLite algorithm.

Moving onto the SATHeavy algorithm, we need to specify how to create a provenance condenser 
$\textbf{PC}_{\mathcal{M}}: E \to \Phi$ from the provenance tracker. A bad implementation 
of condenser keeps a set of SAT formulas corresponding to each provenance and $Ors$ them together at the end -- 
this is terrible because it does not use the structure of the provenance tracker at all. 
Instead, for metagrammar $\mathcal{M}$ and example $e$, we construct a provenance tracker $T_e$ for the specified 
example. We then recursively define a SAT formula by noting that for general tracker $T$ matching 
\begin{enumerate}
    \item Or subtrackers - $T$ is parsable iff at least one of the subtracker SAT formulas is satisfied
    \item And subtrackers - $T$ is parsable iff all of the subtracker SAT formulas is satisfied
    \item Rule (ruleNum, subtracker) - $T$ is parsable iff ruleNum is included and the subtracker SAT formula is satisfied 
\end{enumerate}
The note above allows us to condense a large SAT formula $\phi_e$ from a provenance tracker $T_e$ such that
grammar $G$ parses $e$ iff its included candidate rules are a satisfying assignment of $\phi_e$. Moreover, notice
that we don't run into a combinatorial explosion problem if we do a direct translation of $And$ and $Or$ into SAT
formulas: the number of literals is exactly the number of trackers in $T_e$ of type Rule. 

So, we are able to construct a provenance oracle and provenance condenser from provenance trackers! To our knowledge, 
the semiring augmentation of Earley parsing in the past was theoretically interesting, but did not have any convincing
application. Here, we show that the novel syntax-guided grammar induction problem is the perfect playground for this machinery. }


\section{Experimental Results} 
\label{sec:experiments}

To illustrate \Sys’s practicality, we evaluate the following properties:
\begin{itemize}
    \item \textbf{Expressiveness} Can \Sys specify real-world grammatical domains?
    \item \textbf{Time Efficiency} How much time does \Sys take to induce a grammar?
    \item \textbf{Sample Efficiency} How many examples does \Sys take to induce a specific grammar?
    \item \textbf{Comparisons} How does \Sys compare to prior work?
\end{itemize}

\paragraph{Experimental setup}
All experiments in this section were performed on a 2.5 GHz Intel Core i7 processor with 16 GB
of 1600 MHz DDR3 RAM running macOS Catalina. We ran each benchmark 10 times,
with a timeout of 60 seconds, and report the average time. If any of the 10 runs
times out then we consider the benchmark as a whole to have timed out.

\subsection{Expressiveness}
\label{subsec:expressiveness}

\begin{figure}%
  \small
  \centering%
  \begin{tabularx}{\linewidth}{|r|c|c|c|X|}
    \hline
    \multicolumn{5}{|c|}{\textbf{SGI Benchmark Suite}}\\
    \hline Name & Nonterms. & Prods. & AST Nodes & Description \\
    \hline
    States & 6 & 9 & 371 & US State identifiers.  Permits acronyms, full names, and abbreviations.\\
    \hline 
    Phone \#s & 11 & 44 & 260 & Phone numbers.  Permits local and international phone numbers. \\
    \hline
    Times & 12 & 21 & 204 & Time of day in a variety of formats.\\
    \hline 
    Floats & 12 & 18 & 113 & Floating-point numbers.  Includes grammars for different scientific notations and standard decimal form.\\
    \hline
    Emails & 15 & 105 & 493 & Email addresses. Includes grammars that accept emails from specific or arbitrary domains.\\
    \hline 
    Names & 11 & 32 & 142 & Human identifiers. Includes grammars specifying salutations, post-nominal titles, and acronyms.\\
    \hline
Streets & 14 & 40 & 167 & US street identifiers. Includes grammars that demand specific suffixes and directions.\\
    \hline 
    Dates & 18 & 37 & 314 & Calendar dates. Includes month-first, day-first, and year-first formats.\\
    \hline
    Addresses & 28 & 58 & 597 & US street addresses. Uses the States and Streets metagrammars to identify those portions of the address.\\
    \hline
    XML & 21 & 91 & 523 & XML Files. Permits 10 classes of XML elements. It discovers the identifiers for element classes and the recursive schemes. In effect, it imputes the structural component of a schema definition. \\
    \hline
    SQL & 25 & 31 & 357 & SQL: SQL SELECT statements. Supports nested joins and extra keyword clauses like WHERE and LIMIT. \\
    \hline
    IdealCSV & 38 & 61 & 506 & CSV: Idealized version of CSV. Based on RFC 4180~\cite{csvrfc}, but also admits common kinds of separators including tab and semi-colon. Automatically infers cell type.
     \\\hline
  \end{tabularx}
  \caption{Information on metagrammars for the SGI benchmark suite. For each, we include the number of nonterminal definitions, the total number of productions, the total number of AST nodes. }%
  \label{fig:benchmarkinfo}
\end{figure}

We began our evaluation by identifying 12 grammatical domains to use in our experiments. We intentionally chose domains that range from simple (dates, times) to complex (XML, CSV). For each domain, we manually searched the internet to identify various formats that were used in practice. Our sources included actual data files, as well as documents containing descriptions of formats. 

We then \emph{manually} constructed a corpus of example files adhering to each format. As a result, our examples are somewhat artificial---we wrote them ourselves---but nonetheless represent a broad range of real-world formats.

Once we had conducted our survey of real-world data formats, we used that experience to write metagrammars for each domain. Figure~\ref{fig:benchmarkinfo} (in the body of the paper) describes these domains briefly, as well as the complexity of each. Qualitatively, we did not find these grammars particularly hard to write---the domains do not require many productions and most of the insight was in what preferences and constraints were necessary. On occasion, our initial domain definitions were erroneous, in the sense that they did not return the ``best" grammar for certain examples. However, we found such errors were relatively easy to fix by adjusting preferences. Since we were able to successfully write metagrammars for each domain, we conclude that \Sys \emph{is} capable of representing real-world domains.

\subsection{Time Efficiency}
\label{subsec:time-efficiency}

\begin{figure}
    \includegraphics[scale=.84]{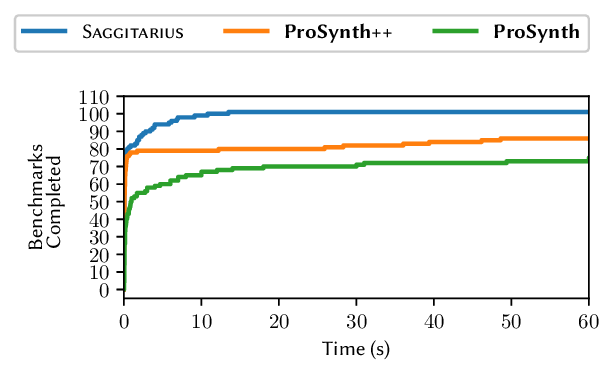}
    \caption{Number of benchmarks that terminate in a given time in different modes. The \textbf{ProSynth} line represents using the ProSynth algorithm, originally built for learning logic programs, for grammar induction. The \textbf{ProSynth++} line represents using the ProSynth algorithm, but with an optimization that minimizes duplicate parsing calls.}
    \label{fig:times}
\end{figure}

\paragraph*{Benchmarks}
To evaluate the efficiency of \Sys, we developed a benchmark suite of 10
induction tasks, each of which ask \Sys to induce a grammar given a certain number of positive examples (PEs) and negative examples (NEs). Each benchmark varies the number of PEs and NEs, as well as whether or not a suitable grammar exists in the metagrammar. Specifically, our benchmarks are to induce a grammar given:
\begin{multicols}{2}
\begin{enumerate}
\item 1 PE.
\item 1 NE.
\item 1 PE and 1 NE.
\item 10 PEs.
\item 5 PEs and 5 NEs.
\item 1 PE and 20 NEs.
\item 20 PEs and 20 NEs.
\item 1 PE and 1 NE, when no appropriate grammar exists.
\item 10 PEs and 10 NEs, when no appropriate grammar exists.
\item 20 PEs and 1 NE, when no appropriate grammar exists.
\end{enumerate}
\end{multicols}

In this experiment, we are are not asking \Sys to return a particular grammar; rather, we are evaluating how long it takes to return \emph{some} grammar (or "no grammar exists", for the latter three tasks). We evaluate \Sys's ability to return specific grammars in the next section. 

For each task and each grammatical domain in Figure \ref{fig:benchmarkinfo} except SQL (due to time constraints), we selected a set of examples from our corpus to create a total of 110 benchmarks. We evaluated \Sys by running it on these 110 benchmarks using three different algorithms, meant to simulate the performance of related tools. We recorded how long each benchmark took to finish, and summarize the results in Figure \ref{fig:times}. This figure is also used later in Section~\ref{subsec:comparisons}. We find that \Sys solves $102$ of our $110$ benchmarks in under a minute, and typically does so in under 10 seconds.

\begin{figure}
    \includegraphics[scale=.84]{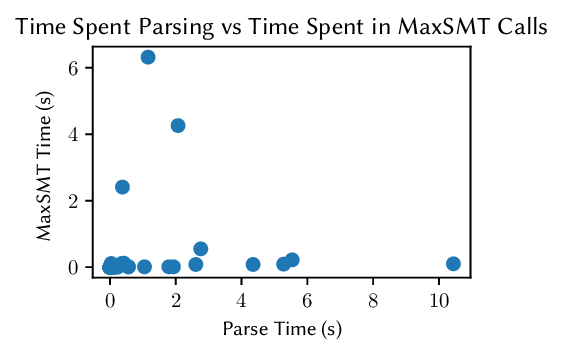}
    \caption{The time spent during the parsing phase of the benchmark suite compared to the time spent during the MaxSMT solving phase of the benchmark suite. Aside from a few outliers, the majority of the time is spent during parsing.}
    \label{fig:time-breakdown}
\end{figure}

Figure~\ref{fig:time-breakdown} shows how much time was spent for each benchmark in the two main phases of our algorithm: parsing and performing MaxSMT calls. In a majority of these benchmarks, parsing dominates the runtime, and all but one of the failing benchmarks hang during parsing. \Sys uses an Earley parser~\cite{earley,EarleyImplementation} modified to operate over semirings, though 
other general context-free parsing algorithms, such as a GLR algorithm~\cite{GLR}, would also have been suitable.
We believe that further developments in semiring parsing could substantially increase the efficiency of \Sys on many benchmarks.
For example, recent work has found that semirings with rich sets of equivalences (like the idempotence of $+$ and $\times$ in the conditional table semiring) provide additional opportunities for memoization in the parsing algorithm~\cite{memoizationparser}. Furthermore, in some simpler domains like Dates and Phone Numbers, every grammar in the domain falls into smaller complexity classes than merely context-free, like LL(*). By integrating the faster parsing algorithms known to work on these simpler domains, induction could likely happen dramatically faster.

However, in one domain MaxSMT, rather than Early parsing, dominates the runtime -- XML. For each example in the XML domain, a majority of the time is spent during the MaxSMT call, and one XML benchmark fails to complete the MaxSMT call in the allotted timeframe. This is because the MaxSMT algorithm has a difficult time proving a given model is optimal in the XML domain. If we replace the MaxSMT call by a simple SMT call (equivalent to removing preferences from our source file), these benchmarks finish near instantly. We perform a more involved case study on the XML domain in\ifappendix{} Appendix~\ref{sec:case-studies}. \else{}
the full version of the paper.\fi We believe that developments into more efficient MaxSMT algorithms could increase the efficiency of \Sys without sacrificing optimality.

\subsection{Sample Efficiency} 

\begin{figure}
    \begin{subfigure}[b]{0.60\textwidth}
    \includegraphics[scale=.8]{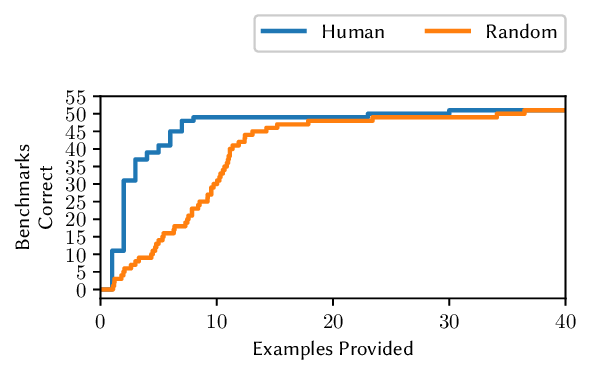}
    \caption{\# Correct benchmarks vs. \# examples used.}
    \label{fig:example-5.3-graph}
    \end{subfigure}
    \begin{subfigure}[b]{0.346\textwidth}
    \begin{tabularx}{\textwidth}{|c|c|c|}
        \hline
        Name & Human & Random \\\hline
        States & 2 & 6.5 \\\hline
        Phone \#s & 3 & 10.3 \\\hline
        Times & 4.5 & 12.4 \\\hline
        Floats & 1.8 & 6.2 \\\hline
        Emails & 2 & 5.9 \\\hline
        Names & 6.5 & 12.5 \\\hline
        Dates & 4 & 10.6 \\\hline
        Addresses & 8 & 13.9 \\\hline
        IdealCSV & 1 & 3 \\\hline
    \end{tabularx}
    \caption{Average \# examples needed per domain}
    \label{fig:example-5.3-table}
    \end{subfigure}
    \hfill\\\hfill\\
    \caption{Sample efficiency of \Sys. Human examples were chosen to be as helpful as possible, in the human's judgement.}
    \label{fig:example-5.3}
\end{figure}

One of the advantages of \Sys is its ability to successfully induce useful grammars from very few example. To demonstrate this, we designed an experiment to measure how many examples it takes to induce a particular grammar, or correctly claim that no suitable grammar exists. 

Our setup is as follows: for each terminating benchmark that involves at least 5 examples, run \Sys on a single example. If the output is not the desired one, select another example, and run \Sys on both. Continue adding examples until the output matches our expectation.

Figure \ref{fig:example-5.3} summarizes our results using two different selection strategies. In the first, we have a human operator select examples they think will be the most useful. In the second, we select each example at random. We can clearly see from  Figure \ref{fig:example-5.3-graph} that nearly all benchmarks can be completed with fewer than 10 well-chosen examples. Even when examples are chosen at random, \Sys almost never needs the entire set of examples, settling on the desired grammar with between 10 and 20 inputs.

Figure \ref{fig:example-5.3-table} breaks down the data according to grammatical domain. Intuitively, grammatical domains where the grammars have a lot of ``overlap'' are generally harder to specify. For example, a state can be represented as with its full name, the two-letter abbreviation, a historic acronym, or some combination of the three. From a single example, it will not be clear if only one representation is acceptable, or if multiple types are allowed. 

When domains are combined, the ambiguity of the components compounds. For example, addresses must determine not only which state representations are allowed, but whether street names are written fully (e.g. ``Street''), abbreviated (``St.''), or omitted entirely. The decisions for states, streets, and other components such as names must be made separately, and it is difficult to constrain all of them in a single example,  particularly when your desired address format allows several possibilities for each.


We omitted the XML domain from this experiment 
because the XML meta-grammar can generate several 
syntactically different, but semantically equivalent
grammars.  As a consequence, determining whether the
tool generates the correct grammar automatically is
much more difficult than just running \texttt{diff}. 
Rather than attempting to implement equivalence checking, we opted to skip the XML domain for this experiment.

\subsection{Comparisons}
\label{subsec:comparisons}

\Sys is relatively unique in the problem it solves; there are few existing tools which do the same thing.
The closest analogue we know of is the inductive logic programming engine \emph{ProSynth}. Since parsing via context-free grammar is a refinement of logic
programming, the ProSynth algorithm may be effective. Unfortunately, the process of transforming a \Sys program into a ProSynth program proved impractical. Instead, we compare \emph{strategies} by implementing two additional induction algorithms for \Sys, which simulate ProSynth's process. We can then compare the performance of these algorithms directly.

The three stratgies we evaluate are:
\begin{itemize}
\item \SysMode{}: The algorithm described in \S\ref{sec:grammar-induction-algorithms}.
\item \ProsynthMode{}: In this mode, \Sys iteratively guesses subsets the candidate rules, 
and checks to see whether the subset parses the positive but not the negative examples.
If it fails, it uses a counter-example-guided
algorithm to generate a new set of candidate rules and repeats the process. 
If it is able to guess an appropriate set early, it may avoid the heavy cost of parsing all data
with a large ambiguous grammar.
\item \ProsynthPPMode{}:  In this mode, \Sys parses all the example data, and constructs a full provenance tree as in \SysMode{}.
  However, instead of building a single large MaxSMT formula, it considers
  candidate rule sets in a counter-example-guided loop, as in \ProsynthMode{}. 
  This mode may avoid constructing a large SAT formula, at the cost of making many SAT calls.
\end{itemize}

\paragraph*{Results}
We can see from Figure \ref{fig:times} that our primary algorithm (\SysMode) completed 101 out of 110 benchmarks
in under 20 seconds. In 6 of the 9 instances the system timed out, it did so because
no grammar matched the example data (by experiment design) and \Sys looped, forever trying larger and larger values of its existential variables.
A useful improvement to the system would be to implement heuristics that can detect when no grammar in an infinite metagrammar will satisfy a data set. The remaining 3 of the 110 experiments ran long due to particularly slow MaxSMT calls. Removing all preferences drastically speeds up these tasks, allowing them to complete in under a minute (though the returned grammars are not the desired ones).

In comparison to the ProSynth modes, we found that \SysMode solved every benchmark but one faster than both
\ProsynthMode{} and \ProsynthPPMode{}. The exception was one of the benchmarks on which
\SysMode timed out on an SMT call.
Although the multiple smaller SMT calls of \ProsynthPPMode usually take longer in aggregate, in this one
case they avoided generating a particular challenging SMT formula.

More broadly, the cost of all of the algorithms grows substantially with the number of
rules in the metagrammar.  The ProSynth-inspired algorithms, in particular, suffer significantly
as rule sets increase because it takes them more iterations to build up a sufficiently constrained SAT call
to find a solution. \Sys is also negatively impacted by having more rules, but in a different (and less substantial) way: 
more rules results in slower parsing, which is the typical bottleneck for \Sys (complex MaxSMT calls are most costly, but much rarer).

Across all the algorithms, we found that the \emph{number} of examples did not affect
system performance substantially.  However, the \emph{length} of an example can have a significant
impact when the metagrammar is highly ambiguous. Earley parsing with large numbers of ambiguous
rules is a costly enterprise that can grow cubically with the length of the input in the worst case.
This parsing cost tends to dominate as examples grow in length.

\subsection{Case Study: CSV Dialect Detection}
\label{subsec:case-study-dialect}

As part of our comparisons, we describe our case study which compares \Sys to a domain-specific tool on an individual grammatical domain: CSV. We have performed additional case studies on the XML and SQL grammatical domains, but have elided them for space, the interested reader can find them in\ifappendix{} Appendix~\ref{sec:case-studies}. \else{}
the full version of the paper. \fi
RFC 4180 provides the standardized definition of CSV, but there are many real-world data files
that do not conform to the standard.  Such data files may contain irregular ``table-like'' data in which 
rows have differing numbers of columns, cells are malformed (containing dangling delimiters), and 
the entire table is surrounded by unstructured text.  

To handle such messy CSV-like data, 
Van den Burg \emph{et al.}~\cite{csv-wrangling} designed CSV Wrangler, a tool for inferring the 
\emph{dialect} of messy, possibly erroneous CSV documents  
\footnote{A CSV dialect is a triple of a cell delimiter, a quote character (for strings), and an escape character.}.
\OMIT{
The delimiter separates cells; the quote character surrounds strings, allowing delimiters to appear within a cell, and the escape character admits nested quotes. 
For example, 
the row below should be parsed with a comma delimiter, a quotation mark as the quote symbol and
a backslash as the escape character. 
\begin{verbatim}
    0, "\"How's the weather?\" she asked.", "1"
\end{verbatim}
}
Unfortunately, CSV files are inherently ambiguous. To handle this ambiguity,
Van den Burg \emph{et al.} use a custom algorithm that
scores potential dialects based on how consistent the resulting rows are (the \emph{pattern score}) and 
how well-typed the cells are (the \emph{type score}) ~\cite{csv-wrangling}, attempting to mimic the
human process of recognizing the data in a messy CSV file. 

\OMIT{
Python's CSV sniffer~\cite{python-sniffer} is a similar sort of dialect detector.
Dialect detection is difficult because many real CSV dialects are ambiguous. For example, a CSV in our case study contains many lines of the form
\begin{center}
\begin{BVerbatim}
(tab)(tab)   C1(tab)   C2,C3(tab)  C4,C5(tab)  C6
\end{BVerbatim}
\end{center}
The resulting dialect is ambiguous as it could be part of a well-formed CSV with either the tab or comma delimiter. Under RFC guidelines, the CSV should be parsed with the comma delimiter, but
the CSV Wrangler tool and the human labeler used in experiments to evaluate CSV Wrangler chose the tab delimiter~\cite{csv-wrangling-github}.
}

To experiment with CSV dialect detection using \Sys, we developed two additional CSV metagrammars, named \emph{Strict} and \emph{Lax}. The former is more restrictive, hewing closer to the RFC, while the latter attempts to mimic the same human priors as Van den Burg's tool.\footnote{
Specifically, Strict requires that (1) nested quote characters do not appear in any cell
and (2) delimiters (commas, semicolons, tabs and vertical bars) internal to a cell must be quoted. Lax contains neither restriction.} As a result, Lax is much costlier to execute. Fortunately, because Lax is more lenient than Strict, the two processes can be pipelined --- we try Strict first, and fall back to Lax only if it fails. We call this pipelined system StrictLax.

\OMIT{
Lax is considerably more costly to execute than Strict. Lax is highly ambiguous
as it admits dangling quotations 
and ill-formed cells, and uses a preference system to sort through the multitude of possible parse trees.}

\paragraph*{Benchmarks}
We measured the performance of \Sys on 100 ASCII-128 CSVs with human labels drawn from
Van den Burg \emph{et al.}'s GitHub repo~\cite{csv-wrangling-github}. More than half of the benchmark suite (59/100) does not obey the CSV RFC.  In such situations, it is difficult
for humans to unambiguously identify the dialect of files, as we discuss below.

\paragraph*{Experiments}
We attempted dialect detection using 5 different tools: (1) RFC (the \Sys specification of the
CSV RFC), (2) Strict, (3) StrictLax (Strict followed by Lax), (4) CSV Wrangler
(Van den Burg's tool), and (5) the Python Sniffer~\cite{python-sniffer}. 
We found that Earley parsing was the primary bottleneck for \Sys system on large CSV benchmarks, so 
we truncated the CSV files to 20 lines prior to
processing them with Strict or RFC, and to 5 lines prior to processing them with Lax\footnote{While this may seem extreme, one would expect \emph{all} lines of a CSV file to obey the same format. Hence a very small number of lines is still likely sufficient to infer the dialect.}.

Figure~\ref{fig:csv-exp} presents the results from our tool as well as corresponding results
we retrieved  from 
Van den Burg \emph{et al.}'s GitHub repository \cite{csv-wrangling-github} for CSV Wrangler and Python Sniffer.
We find that \Sys performs only slightly worse than dedicated CSV sniffers -- in particular, the number of incorrectly classified files is nearly identical across all the tools. Furthermore, we believe that all but 2 of our misclassifications represent reasonable behavior by \Sys, even if the human classifier chose a different dialect. The misclassifications are described in Figure \ref{fig:misclassification}.

Unlike dedicated sniffers, \Sys must contend with the possibility of timing out, either due to infinite metagrammars or simply slow analysis. Although we see \Sys timing out on a noticeable fraction of examples, profiling indicates that much of this time is spent performing Earley parsing. It is likely that we could significantly improve our performance by adopting an industrial-strength, optimized parser in place of our current, unoptimized one.

\begin{figure}
\begin{center}
\begin{tabular}{|c|c|c|c|c|}
    \hline
    Detector & Yes & No & No Dialect & Timeout \\ \hline
    RFC & 41 & 14 & 44 & 0 \\
    Strict & 67 & 12 & 8 &  13 \\
    StrictLax  & 70 & 14 & 0 & 16  \\ 
    CSV Wrangler & 86 & 13 & 1 & N/A \\ 
    Python Sniffer & 81 & 14 & 5 & N/A \\ \hline
\end{tabular}
\end{center}
\caption{CSV Analysis. We report the number of files on which the tool: aligns with
the human label (\textbf{Yes}); does not align with the
human label (\textbf{No)}; reports that no dialect exists (\textbf{No Dialect}); or times out (\textbf{Timeout}).}
\label{fig:csv-exp}
\end{figure}

\begin{figure}
\begin{center}
\begin{tabular}{|c|c|c|}
    \hline
    Id & \Sys behavior &  Frequency \\ \hline
    1** & \shortstack{CSV preferences allowed algorithm \\ to return no delimiter character} & 4 \\ \hline
    2* &\shortstack{chose different delimiter than \\ hand-label but file was ambiguous.} &  2 \\ \hline
    3* &\shortstack{did not find any characters because \\ of leading metadata} & 2 \\ \hline
    4 &\shortstack{did not find quote or escape character \\ because of truncation } & 2 \\ \hline
    5* &\shortstack{did not accept space \\ as a valid delimiter} & 1 \\ \hline
    6** &\shortstack{did not identify quote in \\ ill-quoted file} & 1 \\ \hline
    7** &\shortstack{unquoted ``,'' character lead to \\ false ``,'' delimiter} & 1 \\ \hline
    8* &\shortstack{did not find improperly \\ used escape character} & 1 \\ \hline
\end{tabular}
\end{center}
\caption{Reasons for StrictLax misclassification.  In our analysis, cases marked * are 
 inherently
ambiguous or mislabelled by the human. Cases marked ** also represent reasonable behavior by
our system in our judgement.}
\label{fig:misclassification}
\end{figure}

\OMIT{
\paragraph{Misclassifications}
We investigated each of the circumstances in which \Sys varied from human-labelled data, and found 8 different scenarios; these are listed in
Figure~\ref{fig:misclassification}.
We believe that in scenarios 2, 3, and 8 (5 of the 14 misclassifications),
our tool is correct, or at least not wrong. Furthermore, after hand inspection of the files,
we believe that the tool returns
reasonable answers in situations 1, 6 and 7. In scenario 5, we even have disagreements among ourselves -- one author believes spaces should not be
allowed as delimiters in CSV, so this case certainly is not clear either.
The remaining two failures (scenario 4) are caused by truncation. These are clear failures of \Sys, but could be addressed by more efficient parsing.
}

\paragraph*{Takeaways} First, we have shown that
it is possible to develop more than one specification for an ambiguous grammatical domain.  Multiple generated tools
can then be arranged in a pipeline and take advantage of
time/accuracy tradeoffs.  Second, real-world data is often
messy.  Hand-tuned, custom tools such as
Python's CSV sniffer or the CSV Wrangler can make
mistakes; even humans often disagree with each other.  
Nevertheless, \Sys performs comparably to these
custom-built tools.  The CSV sniffer and the CSV Wrangler
align with outside human labeller a few more times than
our tool, but the authors of this paper
actually disagree with the outside labels in almost all such cases. 
Perhaps the labeller and tool authors know of some
criteria for disambiguating CSV data that we do not; if so,
we can likely add this criteria
to our own specifications.
When it comes to the unambiguous formats where humans agree, \Sys is very effective.

\section{Related Work}%
\label{sec:related-work}

\paragraph*{Grammar induction} Grammar induction 
traces back to at least the 60s when Gold~\cite{gold:inference} began studying
models for language learning and their properties.  Later, Angluin~\cite{angluin:regular-sets-complexity,angluin:regular-sets} developed her
famous L$^*$ algorithm for learning regular languages.  As mentioned earlier, however,
such algorithms often require large numbers of examples even for simple regular expressions. More recently, FlashProfile~\cite{flashprofile} has shown that regular-expression-like \emph{patterns} can be learned from positive examples, by first clustering by syntactic similarity, and then inducing programs for given clusters. 

Inference of context-free grammars is
considerably more difficult, and results are limited. It has been 
tackled, for instance, by Stolcke and Omohundro~\cite{Stolcke:grammar-inference}, who use probabilistic techniques to infer grammars. Fisher \emph{et al.}~\cite{pads-inference} explored
inference of grammars for ``ad hoc'' data, such as system logs, in the context of the PADS project~\cite{pads}. Lee~\cite{alpharegexp} developed more efficient search
strategies for regular languages in the context of a tool for teaching automata theory.  Both these latter tools tackled 
restricted kinds of grammars, however; scaling to complex formats using few examples remains a challenge. 
A more recent approach can be seen in the \Glade~\cite{glade} tool. Similarly to L$^*$, \Glade{} uses an active learning algorithm, but generalizes to full context-free grammars, rather than merely regular expressions, while requiring relatively fewer membership queries.

We believe that the key contributions of this paper are largely orthogonal to these advances. 
In particular, we introduce the idea of using \emph{grammatical domains}, specified via metagrammars, 
to restrict the set of grammars under consideration during the induction process. Doing so has the potential to improve the performance of almost any grammar induction algorithm. 

Grammatical inference becomes more tractable when one can introduce bias or constraints---metagrammars are one way to introduce such bias but there are others.  For instance, Chen \emph{et al.}~\cite{chen2020multimodal}
use a combination of examples and natural language to speed inference of 
a constrained set of regular expressions.  Internally, their system 
generates an ``h-sketch'' as an intermediate result.
These h-sketches are partially-defined regular expressions that may include holes for 
unknown regular expressions.  Such h-sketches play a similar role to our 
metagrammars: they denote sets of possible regular expressions and constrain 
the search space for grammatical inference.  However, our language is an
extension of YACC and is designed for humans, rather than being an
intermediate language. Furthermore, our metagrammars may be 
reused, like libraries, across data sets. In constrast, each h-sketch is generated and used only once inside a compiler pipeline.  

Related to the notion of grammatical inference is that of expression \emph{repair}. \RFixer~\cite{rfixer} uses positive and negative examples to fix erroneous regular expressions. Both \RFixer{} and Saggitarius use similar algorithms to ensure positive examples are in the generated language, and negative examples are not. Both of these tools encode these constraints as MaxSMT formulas to ensure the generated grammars are optimal. Because RFixer does not have a metagrammar to orient the search, their constraints can only help find character sets that distinguish between the grammars. Saggitarius permits any constraints that expressible in propositional logic, and the constraints can be over arbitrary productions, not merely character sets. One could see the RFixer algorithm as an instance of our algorithm, where the meta-grammar constrains sets of allowed characters.

\paragraph*{Syntax-guided Program Synthesis} Our work was inspired by the progress on
syntax-guided program synthesis over the past decade or so~\cite{sketch-original,sketch-asplos,sygus,sketch-summary}.  Much of that
work has focused on data transformations, including 
spreadsheet manipulation~\cite{flashfill,dace,flashrelate},
string transformations~\cite{synth-symm-lenses,optician,blaze}, and information extraction~\cite{flashextract,data-extraction}.  Such problems have much in common with our work, but they
have typically been set up as searches over a space of program transformation operations
rather than searches over collections of context-free grammar rules.  Particularly
inspiring for our work was the development of FlashMeta~\cite{flashmeta} and Prose~\cite{prose}. These systems are ``meta'' program synthesis engines---they help
engineers design program synthesis tools for different domain-specific languages.  Similarly, \Sys is a
``meta'' framework for syntax-guided grammar induction, helping users perform grammar induction in domain-specific contexts.  Of course, \Sys, FlashMeta and Prose differ greatly when it comes to specifics 
of their language/system designs and the underlying search 
algorithms implemented.

\paragraph*{Logic Program Synthesis} We were also inspired by work on Inductive Logic Programming~\cite{ilpbook}, and Logic Program Synthesis~\cite{Prosynth,difflog}. Parsing with context-free grammars is a special case of logic programming so it was natural to investigate whether inductive logic programming algorithms would work well here. ProSynth~\cite{Prosynth} is a state-of-the-art algorithm in this field so we experimented with it as a tool for
grammatical inference. However, we found our custom algorithm almost always
outperformed ProSynth on grammatical inference tasks.

\section{Conclusion}%
\label{sec:conclusion}

\emph{Grammatical domains} are sets of related grammars.  Such domains appear naturally whenever a common
datatype like a date or a phone number has multiple textual representations. They also appear frequently when
data sets are communicated via ASCII text files, as is the case for CSV files.  In this paper, we
introduce the concept of grammatical domains, provide a variety of examples of such domains in the wild, 
and 
design a language, called \Sys, for defining grammatical domains through the specification of metagrammars.

\Sys includes features for defining sets of candidate productions, for
constraining the conditions under which candidate productions may and may not appear, and for ranking the generated grammars.
We illustrate the use of \Sys on a variety of examples and develop a grammar induction algorithm
for the system that uses semiring parsing to generate MaxSMT formulas that can be solved via an
off-the-shelf theorem prover.  

In the future, we look forward to developing
a complete parser generator system using the ideas
developed in \Sys.  One way forward would be to
add semantic actions to \Sys, making it much more like
YACC.
Another direction would involve integrating \Sys into
a parser combinator library such as Parsec~\cite{parsec}.



\begin{acks}
\end{acks}

\bibliography{paper}


\begin{thebibliography}{54}


\ifx \showCODEN    \undefined \def \showCODEN     #1{\unskip}     \fi
\ifx \showDOI      \undefined \def \showDOI       #1{#1}\fi
\ifx \showISBNx    \undefined \def \showISBNx     #1{\unskip}     \fi
\ifx \showISBNxiii \undefined \def \showISBNxiii  #1{\unskip}     \fi
\ifx \showISSN     \undefined \def \showISSN      #1{\unskip}     \fi
\ifx \showLCCN     \undefined \def \showLCCN      #1{\unskip}     \fi
\ifx \shownote     \undefined \def \shownote      #1{#1}          \fi
\ifx \showarticletitle \undefined \def \showarticletitle #1{#1}   \fi
\ifx \showURL      \undefined \def \showURL       {\relax}        \fi
\providecommand\bibfield[2]{#2}
\providecommand\bibinfo[2]{#2}
\providecommand\natexlab[1]{#1}
\providecommand\showeprint[2][]{arXiv:#2}

\bibitem[\protect\citeauthoryear{{Alur}, {Bodik}, {Juniwal}, {Martin},
  {Raghothaman}, {Seshia}, {Singh}, {Solar-Lezama}, {Torlak}, and
  {Udupa}}{{Alur} et~al\mbox{.}}{2013}]%
        {sygus}
\bibfield{author}{\bibinfo{person}{R. {Alur}}, \bibinfo{person}{R. {Bodik}},
  \bibinfo{person}{G. {Juniwal}}, \bibinfo{person}{M.~M.~K. {Martin}},
  \bibinfo{person}{M. {Raghothaman}}, \bibinfo{person}{S.~A. {Seshia}},
  \bibinfo{person}{R. {Singh}}, \bibinfo{person}{A. {Solar-Lezama}},
  \bibinfo{person}{E. {Torlak}}, {and} \bibinfo{person}{A. {Udupa}}.}
  \bibinfo{year}{2013}\natexlab{}.
\newblock \showarticletitle{Syntax-guided synthesis}. In
  \bibinfo{booktitle}{\emph{2013 Formal Methods in Computer-Aided Design}}.
  \bibinfo{pages}{1--8}.
\newblock


\bibitem[\protect\citeauthoryear{Alur, Singh, Fisman, and Solar-Lezama}{Alur
  et~al\mbox{.}}{2018}]%
        {sketch-summary}
\bibfield{author}{\bibinfo{person}{Rajeev Alur}, \bibinfo{person}{Rishabh
  Singh}, \bibinfo{person}{Dana Fisman}, {and} \bibinfo{person}{Armando
  Solar-Lezama}.} \bibinfo{year}{2018}\natexlab{}.
\newblock \showarticletitle{Search-Based Program Synthesis}.
\newblock \bibinfo{journal}{\emph{Commun. ACM}} \bibinfo{volume}{61},
  \bibinfo{number}{12} (\bibinfo{date}{Nov.} \bibinfo{year}{2018}),
  \bibinfo{pages}{84–93}.
\newblock
\showISSN{0001-0782}
\urldef\tempurl%
\url{https://doi.org/10.1145/3208071}
\showDOI{\tempurl}


\bibitem[\protect\citeauthoryear{Angluin}{Angluin}{1978}]%
        {angluin:regular-sets-complexity}
\bibfield{author}{\bibinfo{person}{Dana Angluin}.}
  \bibinfo{year}{1978}\natexlab{}.
\newblock \showarticletitle{On the Complexity of Minimum Inference of Regular
  Sets}.
\newblock \bibinfo{journal}{\emph{Information and Control}}
  \bibinfo{volume}{39}, \bibinfo{number}{3} (\bibinfo{year}{1978}),
  \bibinfo{pages}{337--350}.
\newblock


\bibitem[\protect\citeauthoryear{Angluin}{Angluin}{1987}]%
        {angluin:regular-sets}
\bibfield{author}{\bibinfo{person}{Dana Angluin}.}
  \bibinfo{year}{1987}\natexlab{}.
\newblock \showarticletitle{Learning Regular Sets from Queries and
  Counterexamples}.
\newblock \bibinfo{journal}{\emph{Information and Computation}}
  \bibinfo{volume}{75}, \bibinfo{number}{2} (\bibinfo{date}{Nov.}
  \bibinfo{year}{1987}), \bibinfo{pages}{87--106}.
\newblock


\bibitem[\protect\citeauthoryear{Barowy, Gulwani, Hart, and Zorn}{Barowy
  et~al\mbox{.}}{2015}]%
        {flashrelate}
\bibfield{author}{\bibinfo{person}{Daniel~W. Barowy}, \bibinfo{person}{Sumit
  Gulwani}, \bibinfo{person}{Ted Hart}, {and} \bibinfo{person}{Benjamin Zorn}.}
  \bibinfo{year}{2015}\natexlab{}.
\newblock \showarticletitle{FlashRelate: Extracting Relational Data from
  Semi-Structured Spreadsheets Using Examples}. In
  \bibinfo{booktitle}{\emph{Proceedings of the 36th ACM SIGPLAN Conference on
  Programming Language Design and Implementation}} \emph{(\bibinfo{series}{PLDI
  '15})}. \bibinfo{publisher}{Association for Computing Machinery},
  \bibinfo{address}{New York, NY, USA}, \bibinfo{pages}{218–228}.
\newblock


\bibitem[\protect\citeauthoryear{Bastani, Sharma, Aiken, and Liang}{Bastani
  et~al\mbox{.}}{2017}]%
        {glade}
\bibfield{author}{\bibinfo{person}{Osbert Bastani}, \bibinfo{person}{Rahul
  Sharma}, \bibinfo{person}{Alex Aiken}, {and} \bibinfo{person}{Percy Liang}.}
  \bibinfo{year}{2017}\natexlab{}.
\newblock \showarticletitle{Synthesizing Program Input Grammars}.
\newblock \bibinfo{journal}{\emph{SIGPLAN Not.}} \bibinfo{volume}{52},
  \bibinfo{number}{6} (\bibinfo{date}{June} \bibinfo{year}{2017}),
  \bibinfo{pages}{95–110}.
\newblock
\showISSN{0362-1340}
\urldef\tempurl%
\url{https://doi.org/10.1145/3140587.3062349}
\showDOI{\tempurl}


\bibitem[\protect\citeauthoryear{Carmony, Hu, Yin, Bhaskar, and Zhang}{Carmony
  et~al\mbox{.}}{2016}]%
        {CarmonyHYBZ16}
\bibfield{author}{\bibinfo{person}{Curtis Carmony}, \bibinfo{person}{Xunchao
  Hu}, \bibinfo{person}{Heng Yin}, \bibinfo{person}{Abhishek~Vasisht Bhaskar},
  {and} \bibinfo{person}{Mu Zhang}.} \bibinfo{year}{2016}\natexlab{}.
\newblock \showarticletitle{Extract Me If You Can: Abusing {PDF} Parsers in
  Malware Detectors}. In \bibinfo{booktitle}{\emph{23rd Annual Network and
  Distributed System Security Symposium, {NDSS} 2016, San Diego, California,
  USA, February 21-24, 2016}}. \bibinfo{publisher}{The Internet Society}.
\newblock
\urldef\tempurl%
\url{http://wp.internetsociety.org/ndss/wp-content/uploads/sites/25/2017/09/extract-me-if-you-can-abusing-pdf-parsers-malware-detectors.pdf}
\showURL{%
\tempurl}


\bibitem[\protect\citeauthoryear{Chen, Wang, Ye, Durrett, and Dillig}{Chen
  et~al\mbox{.}}{2020}]%
        {chen2020multimodal}
\bibfield{author}{\bibinfo{person}{Qiaochu Chen}, \bibinfo{person}{Xinyu Wang},
  \bibinfo{person}{Xi Ye}, \bibinfo{person}{Greg Durrett}, {and}
  \bibinfo{person}{Isil Dillig}.} \bibinfo{year}{2020}\natexlab{}.
\newblock \showarticletitle{Multi-modal Synthesis of Regular Expressions}. In
  \bibinfo{booktitle}{\emph{ACM SIGPLAN Conference on Programming Language
  Design and Implementation}}. \bibinfo{pages}{487--582}.
\newblock


\bibitem[\protect\citeauthoryear{DARPA SafeDocs Program}{DARPA SafeDocs
  Program}{2020}]%
        {safedocs}
DARPA SafeDocs Program \bibinfo{year}{2020}\natexlab{}.
\newblock
  \bibinfo{howpublished}{\url{https://www.darpa.mil/program/safe-documents}}.
\newblock


\bibitem[\protect\citeauthoryear{de~Moura and Bj{\o}rner}{de~Moura and
  Bj{\o}rner}{2008}]%
        {z3}
\bibfield{author}{\bibinfo{person}{Leonardo de Moura} {and}
  \bibinfo{person}{Nikolaj Bj{\o}rner}.} \bibinfo{year}{2008}\natexlab{}.
\newblock \showarticletitle{Z3: An Efficient SMT Solver}. In
  \bibinfo{booktitle}{\emph{Tools and Algorithms for the Construction and
  Analysis of Systems}}, \bibfield{editor}{\bibinfo{person}{C.~R. Ramakrishnan}
  {and} \bibinfo{person}{Jakob Rehof}} (Eds.). \bibinfo{publisher}{Springer
  Berlin Heidelberg}, \bibinfo{address}{Berlin, Heidelberg},
  \bibinfo{pages}{337--340}.
\newblock
\showISBNx{978-3-540-78800-3}


\bibitem[\protect\citeauthoryear{De~Raedt}{De~Raedt}{2008}]%
        {ilpbook}
\bibfield{author}{\bibinfo{person}{Luc De~Raedt}.}
  \bibinfo{year}{2008}\natexlab{}.
\newblock \showarticletitle{Logical and Relational Learning}. In
  \bibinfo{booktitle}{\emph{Advances in Artificial Intelligence - SBIA 2008}},
  \bibfield{editor}{\bibinfo{person}{Gerson Zaverucha} {and}
  \bibinfo{person}{Augusto~Loureiro da~Costa}} (Eds.).
  \bibinfo{publisher}{Springer Berlin Heidelberg}, \bibinfo{address}{Berlin,
  Heidelberg}, \bibinfo{pages}{1--1}.
\newblock
\showISBNx{978-3-540-88190-2}


\bibitem[\protect\citeauthoryear{Earley}{Earley}{1970}]%
        {earley}
\bibfield{author}{\bibinfo{person}{Jay Earley}.}
  \bibinfo{year}{1970}\natexlab{}.
\newblock \showarticletitle{An Efficient Context-Free Parsing Algorithm}.
\newblock \bibinfo{journal}{\emph{Commun. ACM}} \bibinfo{volume}{13},
  \bibinfo{number}{2} (\bibinfo{date}{Feb.} \bibinfo{year}{1970}),
  \bibinfo{pages}{94–102}.
\newblock
\showISSN{0001-0782}
\urldef\tempurl%
\url{https://doi.org/10.1145/362007.362035}
\showDOI{\tempurl}


\bibitem[\protect\citeauthoryear{Firoiu, Oates, and Cohen}{Firoiu
  et~al\mbox{.}}{1998}]%
        {regexp-positive-evidence}
\bibfield{author}{\bibinfo{person}{Laura Firoiu}, \bibinfo{person}{Tim Oates},
  {and} \bibinfo{person}{Paul~R. Cohen}.} \bibinfo{year}{1998}\natexlab{}.
\newblock \showarticletitle{Learning Regular Expressions from Positive
  Evidence}. In \bibinfo{booktitle}{\emph{Twentieth Annual Conference of the
  Cognitive Science Society}}. \bibinfo{pages}{350--355}.
\newblock


\bibitem[\protect\citeauthoryear{Fisher}{Fisher}{2009}]%
        {padsmanual}
\bibfield{author}{\bibinfo{person}{Kathleen Fisher}.}
  \bibinfo{year}{2009}\natexlab{}.
\newblock \bibinfo{title}{Pads Manual: Appendix B All Pads Base Types}.
\newblock
\newblock
\urldef\tempurl%
\url{https://web.archive.org/web/20230801220918/https://pads.cs.tufts.edu/doc/base_types_appendix.html}
\showURL{%
\tempurl}


\bibitem[\protect\citeauthoryear{Fisher and Walker}{Fisher and Walker}{2011}]%
        {pads}
\bibfield{author}{\bibinfo{person}{Kathleen Fisher} {and}
  \bibinfo{person}{David Walker}.} \bibinfo{year}{2011}\natexlab{}.
\newblock \showarticletitle{The PADS Project: An Overview}. In
  \bibinfo{booktitle}{\emph{Proceedings of the 14th International Conference on
  Database Theory}} \emph{(\bibinfo{series}{ICDT '11})}.
  \bibinfo{publisher}{Association for Computing Machinery},
  \bibinfo{address}{New York, NY, USA}, \bibinfo{pages}{11–17}.
\newblock


\bibitem[\protect\citeauthoryear{Fisher, Walker, Zhu, and White}{Fisher
  et~al\mbox{.}}{2008}]%
        {pads-inference}
\bibfield{author}{\bibinfo{person}{Kathleen Fisher}, \bibinfo{person}{David
  Walker}, \bibinfo{person}{Kenny~Q. Zhu}, {and} \bibinfo{person}{Peter
  White}.} \bibinfo{year}{2008}\natexlab{}.
\newblock \showarticletitle{From Dirt to Shovels: Fully Automatic Tool
  Generation from Ad Hoc Data}. In \bibinfo{booktitle}{\emph{Proceedings of the
  35th Annual ACM SIGPLAN-SIGACT Symposium on Principles of Programming
  Languages}} \emph{(\bibinfo{series}{POPL '08})}.
  \bibinfo{publisher}{Association for Computing Machinery},
  \bibinfo{address}{New York, NY, USA}, \bibinfo{pages}{421–434}.
\newblock


\bibitem[\protect\citeauthoryear{{Garcia} and {Vidal}}{{Garcia} and
  {Vidal}}{1990}]%
        {garcia:k-testable}
\bibfield{author}{\bibinfo{person}{P. {Garcia}} {and} \bibinfo{person}{E.
  {Vidal}}.} \bibinfo{year}{1990}\natexlab{}.
\newblock \showarticletitle{Inference of k-Testable Languages in the Strict
  Sense and Application to Syntactic Pattern Recognition}.
\newblock \bibinfo{journal}{\emph{IEEE Transactions on Pattern Analysis and
  Machine Intelligence}} \bibinfo{volume}{12}, \bibinfo{number}{9}
  (\bibinfo{year}{1990}), \bibinfo{pages}{920--925}.
\newblock


\bibitem[\protect\citeauthoryear{Gold}{Gold}{1967}]%
        {gold:inference}
\bibfield{author}{\bibinfo{person}{E.~M. Gold}.}
  \bibinfo{year}{1967}\natexlab{}.
\newblock \showarticletitle{Language Identification in the Limit}.
\newblock \bibinfo{journal}{\emph{Information and Control}}
  \bibinfo{volume}{10}, \bibinfo{number}{5} (\bibinfo{year}{1967}),
  \bibinfo{pages}{447--474}.
\newblock


\bibitem[\protect\citeauthoryear{Goodman}{Goodman}{1998}]%
        {EarleyDerivationTree}
\bibfield{author}{\bibinfo{person}{Joshua Goodman}.}
  \bibinfo{year}{1998}\natexlab{}.
\newblock \bibinfo{title}{Parsing Inside-Out}.
\newblock , \bibinfo{numpages}{19--28}~pages.
\newblock
\urldef\tempurl%
\url{https://dash.harvard.edu/bitstream/handle/1/24829603/tr-07-98.pdf?sequence=1}
\showURL{%
\tempurl}


\bibitem[\protect\citeauthoryear{Goodman}{Goodman}{1999}]%
        {semiringparsing}
\bibfield{author}{\bibinfo{person}{Joshua Goodman}.}
  \bibinfo{year}{1999}\natexlab{}.
\newblock \showarticletitle{Semiring Parsing}.
\newblock \bibinfo{journal}{\emph{Comput. Linguist.}} \bibinfo{volume}{25},
  \bibinfo{number}{4} (\bibinfo{date}{Dec.} \bibinfo{year}{1999}),
  \bibinfo{pages}{573–605}.
\newblock
\showISSN{0891-2017}


\bibitem[\protect\citeauthoryear{Google Cloud Database Identifiers
  Documentation}{Google Cloud Database Identifiers Documentation}{2022}]%
        {google_standard_sql}
Google Cloud Database Identifiers Documentation
  \bibinfo{year}{2022}\natexlab{}.
\newblock
  \bibinfo{howpublished}{\url{https://cloud.google.com/bigquery/docs/reference/standard-sql/migrating-from-legacy-sql\#syntax_differences}}.
\newblock


\bibitem[\protect\citeauthoryear{Green, Karvounarakis, and Tannen}{Green
  et~al\mbox{.}}{2007}]%
        {provsemi}
\bibfield{author}{\bibinfo{person}{Todd~J. Green}, \bibinfo{person}{Grigoris
  Karvounarakis}, {and} \bibinfo{person}{Val Tannen}.}
  \bibinfo{year}{2007}\natexlab{}.
\newblock \showarticletitle{Provenance Semirings}. In
  \bibinfo{booktitle}{\emph{Proceedings of the Twenty-Sixth ACM
  SIGMOD-SIGACT-SIGART Symposium on Principles of Database Systems}}
  \emph{(\bibinfo{series}{PODS '07})}. \bibinfo{publisher}{Association for
  Computing Machinery}, \bibinfo{address}{New York, NY, USA},
  \bibinfo{pages}{31–40}.
\newblock
\showISBNx{9781595936851}
\urldef\tempurl%
\url{https://doi.org/10.1145/1265530.1265535}
\showDOI{\tempurl}


\bibitem[\protect\citeauthoryear{Group}{Group}{2005}]%
        {csvrfc}
\bibfield{author}{\bibinfo{person}{Network~Working Group}.}
  \bibinfo{year}{2005}\natexlab{}.
\newblock \bibinfo{title}{Common Format and MIME Type for Comma-Separated
  Values (CSV) Files}.
\newblock \bibinfo{howpublished}{\url{https://tools.ietf.org/html/rfc4180}}.
\newblock
\newblock
\shownote{Request for Comments 4180.}


\bibitem[\protect\citeauthoryear{Gulwani}{Gulwani}{2011}]%
        {flashfill}
\bibfield{author}{\bibinfo{person}{Sumit Gulwani}.}
  \bibinfo{year}{2011}\natexlab{}.
\newblock \showarticletitle{Automating String Processing in Spreadsheets Using
  Input-output Examples}. In \bibinfo{booktitle}{\emph{Proceedings of the 38th
  Annual ACM SIGPLAN-SIGACT Symposium on Principles of P\ rogramming
  Languages}} \emph{(\bibinfo{series}{POPL '11})}. \bibinfo{publisher}{ACM}.
\newblock


\bibitem[\protect\citeauthoryear{Herman}{Herman}{2020}]%
        {memoizationparser}
\bibfield{author}{\bibinfo{person}{Grzegorz Herman}.}
  \bibinfo{year}{2020}\natexlab{}.
\newblock \showarticletitle{Faster General Parsing through Context-Free
  Memoization}. In \bibinfo{booktitle}{\emph{Proceedings of the 41st ACM
  SIGPLAN Conference on Programming Language Design and Implementation}}
  \emph{(\bibinfo{series}{PLDI 2020})}. \bibinfo{publisher}{Association for
  Computing Machinery}, \bibinfo{address}{New York, NY, USA},
  \bibinfo{pages}{1022–1035}.
\newblock
\showISBNx{9781450376136}
\urldef\tempurl%
\url{https://doi.org/10.1145/3385412.3386032}
\showDOI{\tempurl}


\bibitem[\protect\citeauthoryear{Le and Gulwani}{Le and Gulwani}{2014}]%
        {flashextract}
\bibfield{author}{\bibinfo{person}{Vu Le} {and} \bibinfo{person}{Sumit
  Gulwani}.} \bibinfo{year}{2014}\natexlab{}.
\newblock \showarticletitle{FlashExtract: A Framework for Data Extraction by
  Examples}.
\newblock \bibinfo{journal}{\emph{ACM SIGPLAN Notices}}  \bibinfo{volume}{49}
  (\bibinfo{date}{06} \bibinfo{year}{2014}).
\newblock
\showISBNx{978-1-4503-2784-8}


\bibitem[\protect\citeauthoryear{Lee, So, and Oh}{Lee et~al\mbox{.}}{2016}]%
        {alpharegexp}
\bibfield{author}{\bibinfo{person}{Mina Lee}, \bibinfo{person}{Sunbeom So},
  {and} \bibinfo{person}{Hakjoo Oh}.} \bibinfo{year}{2016}\natexlab{}.
\newblock \showarticletitle{Synthesizing Regular Expressions from Examples for
  Introductory Automata Assignments}. In \bibinfo{booktitle}{\emph{ACM SIGPLAN
  International Conference on Generative Programming}}.
  \bibinfo{pages}{70--80}.
\newblock


\bibitem[\protect\citeauthoryear{Leijen}{Leijen}{[n.d.]}]%
        {parsec}
\bibfield{author}{\bibinfo{person}{Daan Leijen}.}
  \bibinfo{year}{[n.d.]}\natexlab{}.
\newblock \bibinfo{title}{parsec: Monadic parser combinators}.
\newblock
\newblock
\newblock
\shownote{\url{https://hackage.haskell.org/package/parsec}.}


\bibitem[\protect\citeauthoryear{Liu}{Liu}{2017}]%
        {liu-blackhat-asia-2017}
\bibfield{author}{\bibinfo{person}{Ke Liu}.} \bibinfo{year}{2017}\natexlab{}.
\newblock \bibinfo{title}{Dig Into the Attack Surface of PDF and Gain 100+ CVEs
  in 1 Year}.
\newblock
  \bibinfo{howpublished}{\url{https://www.blackhat.com/docs/asia-17/materials/asia-17-Liu-Dig-Into-The-Attack-Surface-Of-PDF-And-Gain-100-CVEs-In-1-Year-wp.pdf}}.
\newblock


\bibitem[\protect\citeauthoryear{Microsoft Server SQL}{Microsoft Server
  SQL}{2022}]%
        {msf_server_sql}
Microsoft Server SQL \bibinfo{year}{2022}\natexlab{}.
\newblock
  \bibinfo{howpublished}{\url{https://learn.microsoft.com/en-us/sql/relational-databases/databases/database-identifiers?redirectedfrom=MSDN&view=sql-server-ver16}}.
\newblock


\bibitem[\protect\citeauthoryear{Miltner, Fisher, Pierce, Walker, and
  ve~Zdancewic}{Miltner et~al\mbox{.}}{2018}]%
        {optician}
\bibfield{author}{\bibinfo{person}{Anders Miltner}, \bibinfo{person}{Kathleen
  Fisher}, \bibinfo{person}{Benjamin~C. Pierce}, \bibinfo{person}{David
  Walker}, {and} \bibinfo{person}{Ste\ ve Zdancewic}.}
  \bibinfo{year}{2018}\natexlab{}.
\newblock \showarticletitle{Synthesizing Bijective Lenses}. In
  \bibinfo{booktitle}{\emph{Proceedings of the 45th Annual ACM SIGPLAN-SIGACT
  Symposium on Principles of Pr\ ogramming Languages}}
  \emph{(\bibinfo{series}{POPL 2018})}.
\newblock


\bibitem[\protect\citeauthoryear{Miltner, Maina, Fisher, Pierce, Walker, and
  Zdancewic}{Miltner et~al\mbox{.}}{2019}]%
        {synth-symm-lenses}
\bibfield{author}{\bibinfo{person}{Anders Miltner}, \bibinfo{person}{Solomon
  Maina}, \bibinfo{person}{Kathleen Fisher}, \bibinfo{person}{Benjamin~C.
  Pierce}, \bibinfo{person}{David Walker}, {and} \bibinfo{person}{Steve
  Zdancewic}.} \bibinfo{year}{2019}\natexlab{}.
\newblock \showarticletitle{Synthesizing Symmetric Lenses}.
\newblock \bibinfo{journal}{\emph{Proc. ACM Program. Lang.}}
  \bibinfo{volume}{3}, \bibinfo{number}{ICFP}, Article \bibinfo{articleno}{95}
  (\bibinfo{date}{July} \bibinfo{year}{2019}), \bibinfo{numpages}{28}~pages.
\newblock
\urldef\tempurl%
\url{https://doi.org/10.1145/3341699}
\showDOI{\tempurl}


\bibitem[\protect\citeauthoryear{Nozohoor-Farshi}{Nozohoor-Farshi}{1991}]%
        {GLR}
\bibfield{author}{\bibinfo{person}{Rahman Nozohoor-Farshi}.}
  \bibinfo{year}{1991}\natexlab{}.
\newblock \bibinfo{booktitle}{\emph{GLR Parsing for $\epsilon$-Grammers}}.
\newblock \bibinfo{publisher}{Springer US}, \bibinfo{address}{Boston, MA},
  \bibinfo{pages}{61--75}.
\newblock
\showISBNx{978-1-4615-4034-2}
\urldef\tempurl%
\url{https://doi.org/10.1007/978-1-4615-4034-2_5}
\showDOI{\tempurl}


\bibitem[\protect\citeauthoryear{Oncina and Garcia}{Oncina and Garcia}{1992}]%
        {oncina-RPNI}
\bibfield{author}{\bibinfo{person}{Jose Oncina} {and} \bibinfo{person}{Pedro
  Garcia}.} \bibinfo{year}{1992}\natexlab{}.
\newblock \showarticletitle{Identifying Regular Languages In Polynomial Updated
  Time}. In \bibinfo{booktitle}{\emph{Pattern Recognition and Image Analysis}},
  \bibfield{editor}{\bibinfo{person}{N~Perez de~la Blanca},
  \bibinfo{person}{A.~Sanfeliu}, {and} \bibinfo{person}{E~Vidal}} (Eds.).
  \bibinfo{publisher}{World Scientific}, \bibinfo{pages}{49--61}.
\newblock


\bibitem[\protect\citeauthoryear{Padhi, Jain, Perelman, Polozov, Gulwani, and
  Millstein}{Padhi et~al\mbox{.}}{2018}]%
        {flashprofile}
\bibfield{author}{\bibinfo{person}{Saswat Padhi}, \bibinfo{person}{Prateek
  Jain}, \bibinfo{person}{Daniel Perelman}, \bibinfo{person}{Oleksandr
  Polozov}, \bibinfo{person}{Sumit Gulwani}, {and} \bibinfo{person}{Todd
  Millstein}.} \bibinfo{year}{2018}\natexlab{}.
\newblock \showarticletitle{FlashProfile: A Framework for Synthesizing Data
  Profiles}.
\newblock \bibinfo{journal}{\emph{Proc. ACM Program. Lang.}}
  \bibinfo{volume}{2}, \bibinfo{number}{OOPSLA}, Article
  \bibinfo{articleno}{150} (\bibinfo{date}{Oct.} \bibinfo{year}{2018}),
  \bibinfo{numpages}{28}~pages.
\newblock
\urldef\tempurl%
\url{https://doi.org/10.1145/3276520}
\showDOI{\tempurl}


\bibitem[\protect\citeauthoryear{Pan, Hu, Xu, and D'Antoni}{Pan
  et~al\mbox{.}}{2019}]%
        {rfixer}
\bibfield{author}{\bibinfo{person}{Rong Pan}, \bibinfo{person}{Qinheping Hu},
  \bibinfo{person}{Gaowei Xu}, {and} \bibinfo{person}{Loris D'Antoni}.}
  \bibinfo{year}{2019}\natexlab{}.
\newblock \showarticletitle{Automatic Repair of Regular Expressions}.
\newblock \bibinfo{journal}{\emph{Proc. ACM Program. Lang.}}
  \bibinfo{volume}{3}, \bibinfo{number}{OOPSLA}, Article
  \bibinfo{articleno}{139} (\bibinfo{date}{Oct.} \bibinfo{year}{2019}),
  \bibinfo{numpages}{29}~pages.
\newblock
\urldef\tempurl%
\url{https://doi.org/10.1145/3360565}
\showDOI{\tempurl}


\bibitem[\protect\citeauthoryear{Polozov and Gulwani}{Polozov and
  Gulwani}{2015}]%
        {flashmeta}
\bibfield{author}{\bibinfo{person}{Oleksandr Polozov} {and}
  \bibinfo{person}{Sumit Gulwani}.} \bibinfo{year}{2015}\natexlab{}.
\newblock \showarticletitle{FlashMeta: A Framework for Inductive Program
  Synthesis}. In \bibinfo{booktitle}{\emph{Proceedings of the 2015 ACM SIGPLAN
  International Conference on Object-Oriented Programming, Systems, Languages,
  and Applications}} \emph{(\bibinfo{series}{OOPSLA 2015})}.
  \bibinfo{publisher}{Association for Computing Machinery},
  \bibinfo{address}{New York, NY, USA}, \bibinfo{pages}{107–126}.
\newblock


\bibitem[\protect\citeauthoryear{{Python Software Foundation}}{{Python Software
  Foundation}}{2020}]%
        {python-sniffer}
\bibfield{author}{\bibinfo{person}{{Python Software Foundation}}.}
  \bibinfo{year}{2020}\natexlab{}.
\newblock \bibinfo{title}{CSV File Reading and Writing}.
\newblock
  \bibinfo{howpublished}{\url{https://docs.python.org/3/library/csv.html\#csv.Sniffer}}.
\newblock


\bibitem[\protect\citeauthoryear{Raghothaman, Mendelson, Zhao, Naik, and
  Scholz}{Raghothaman et~al\mbox{.}}{2020}]%
        {Prosynth}
\bibfield{author}{\bibinfo{person}{Mukund Raghothaman},
  \bibinfo{person}{Jonathan Mendelson}, \bibinfo{person}{David Zhao},
  \bibinfo{person}{Mayur Naik}, {and} \bibinfo{person}{Bernhard Scholz}.}
  \bibinfo{year}{2020}\natexlab{}.
\newblock \bibinfo{title}{Provenance-Guided Synthesis of Datalog Programs}.
\newblock
\newblock
\urldef\tempurl%
\url{https://doi.org/10.1145/3371130}
\showDOI{\tempurl}


\bibitem[\protect\citeauthoryear{Raza and Gulwani}{Raza and Gulwani}{2017}]%
        {data-extraction}
\bibfield{author}{\bibinfo{person}{Mohammad Raza} {and} \bibinfo{person}{Sumit
  Gulwani}.} \bibinfo{year}{2017}\natexlab{}.
\newblock \showarticletitle{Automated Data Extraction using Predictive Program
  Synthesis}. In \bibinfo{booktitle}{\emph{AAAI}}. \bibinfo{pages}{882--890}.
\newblock


\bibitem[\protect\citeauthoryear{Research}{Research}{2020}]%
        {prose}
\bibfield{author}{\bibinfo{person}{Microsoft Research}.}
  \bibinfo{year}{2020}\natexlab{}.
\newblock \bibinfo{title}{Prose}.
\newblock
  \bibinfo{howpublished}{\url{https://www.microsoft.com/en-us/research/group/prose/}}.
\newblock


\bibitem[\protect\citeauthoryear{Rivest and Schapire}{Rivest and
  Schapire}{1989}]%
        {rivest:re-inference}
\bibfield{author}{\bibinfo{person}{R.~L. Rivest} {and} \bibinfo{person}{R.~E.
  Schapire}.} \bibinfo{year}{1989}\natexlab{}.
\newblock \showarticletitle{Inference of Finite Automata using Homing
  sequences}. In \bibinfo{booktitle}{\emph{Twenty-first Annual Conference on
  Theory of Computing}}. \bibinfo{pages}{411--420}.
\newblock


\bibitem[\protect\citeauthoryear{Si, Raghothaman, Heo, and Naik}{Si
  et~al\mbox{.}}{2019}]%
        {difflog}
\bibfield{author}{\bibinfo{person}{Xujie Si}, \bibinfo{person}{Mukund
  Raghothaman}, \bibinfo{person}{Kihong Heo}, {and} \bibinfo{person}{Mayur
  Naik}.} \bibinfo{year}{2019}\natexlab{}.
\newblock \showarticletitle{Synthesizing Datalog Programs using Numerical
  Relaxation}. In \bibinfo{booktitle}{\emph{Proceedings of the Twenty-Eighth
  International Joint Conference on Artificial Intelligence, {IJCAI-19}}}.
  \bibinfo{publisher}{International Joint Conferences on Artificial
  Intelligence Organization}, \bibinfo{pages}{6117--6124}.
\newblock
\urldef\tempurl%
\url{https://doi.org/10.24963/ijcai.2019/847}
\showDOI{\tempurl}


\bibitem[\protect\citeauthoryear{Solar-Lezama, Rabbah, Bod\'{\i}k, and
  Ebcio\u{g}lu}{Solar-Lezama et~al\mbox{.}}{2005}]%
        {sketch-original}
\bibfield{author}{\bibinfo{person}{Armando Solar-Lezama},
  \bibinfo{person}{Rodric Rabbah}, \bibinfo{person}{Rastislav Bod\'{\i}k},
  {and} \bibinfo{person}{Kemal Ebcio\u{g}lu}.} \bibinfo{year}{2005}\natexlab{}.
\newblock \showarticletitle{Programming by Sketching for Bit-Streaming
  Programs}.
\newblock \bibinfo{journal}{\emph{SIGPLAN Not.}} \bibinfo{volume}{40},
  \bibinfo{number}{6} (\bibinfo{date}{June} \bibinfo{year}{2005}),
  \bibinfo{pages}{281–294}.
\newblock
\showISSN{0362-1340}
\urldef\tempurl%
\url{https://doi.org/10.1145/1064978.1065045}
\showDOI{\tempurl}


\bibitem[\protect\citeauthoryear{Solar-Lezama, Tancau, Bodik, Seshia, and
  Saraswat}{Solar-Lezama et~al\mbox{.}}{2006}]%
        {sketch-asplos}
\bibfield{author}{\bibinfo{person}{Armando Solar-Lezama},
  \bibinfo{person}{Liviu Tancau}, \bibinfo{person}{Rastislav Bodik},
  \bibinfo{person}{Sanjit Seshia}, {and} \bibinfo{person}{Vijay Saraswat}.}
  \bibinfo{year}{2006}\natexlab{}.
\newblock \showarticletitle{Combinatorial Sketching for Finite Programs}.
\newblock \bibinfo{journal}{\emph{SIGARCH Comput. Archit. News}}
  \bibinfo{volume}{34}, \bibinfo{number}{5} (\bibinfo{date}{Oct.}
  \bibinfo{year}{2006}), \bibinfo{pages}{404–415}.
\newblock
\showISSN{0163-5964}
\urldef\tempurl%
\url{https://doi.org/10.1145/1168919.1168907}
\showDOI{\tempurl}


\bibitem[\protect\citeauthoryear{Stolcke and Omohundro}{Stolcke and
  Omohundro}{1994}]%
        {Stolcke:grammar-inference}
\bibfield{author}{\bibinfo{person}{Andreas Stolcke} {and}
  \bibinfo{person}{Stephen~M. Omohundro}.} \bibinfo{year}{1994}\natexlab{}.
\newblock \showarticletitle{Inducing Probabilistic Grammars by Bayesian Model
  Merging}.
\newblock \bibinfo{journal}{\emph{CoRR}}  \bibinfo{volume}{abs/cmp-lg/9409010}
  (\bibinfo{year}{1994}).
\newblock


\bibitem[\protect\citeauthoryear{Vaillant}{Vaillant}{2020}]%
        {EarleyImplementation}
\bibfield{author}{\bibinfo{person}{Loup Vaillant}.}
  \bibinfo{year}{2020}\natexlab{}.
\newblock \bibinfo{title}{Earley Parsing Explained}.
\newblock
\newblock
\urldef\tempurl%
\url{http://loup-vaillant.fr/tutorials/earley-parsing/}
\showURL{%
\tempurl}


\bibitem[\protect\citeauthoryear{van~den Burg, Naz\'abal, and Sutton}{van~den
  Burg et~al\mbox{.}}{2018}]%
        {csv-wrangling}
\bibfield{author}{\bibinfo{person}{Gerrit J.~J. van~den Burg},
  \bibinfo{person}{Alfredo Naz\'abal}, {and} \bibinfo{person}{Charles Sutton}.}
  \bibinfo{year}{2018}\natexlab{}.
\newblock \bibinfo{title}{Wrangling Messy CSV Files by Detecting Row and Type
  Patterns}.
\newblock
\newblock
\urldef\tempurl%
\url{https://arxiv.org/pdf/1811.11242.pdf}
\showURL{%
\tempurl}
\newblock
\shownote{arXiv:1811.11242v1.}


\bibitem[\protect\citeauthoryear{van~den Burg, Naz\'abal, and Sutton}{van~den
  Burg et~al\mbox{.}}{2019}]%
        {csv-wrangling-github}
\bibfield{author}{\bibinfo{person}{Gerrit J.~J. van~den Burg},
  \bibinfo{person}{Alfredo Naz\'abal}, {and} \bibinfo{person}{Charles Sutton}.}
  \bibinfo{year}{2019}\natexlab{}.
\newblock \bibinfo{title}{CSV\_Wrangling}.
\newblock
  \bibinfo{howpublished}{\url{https://github.com/alan-turing-institute/CSV_Wrangling}}.
\newblock


\bibitem[\protect\citeauthoryear{Vidal}{Vidal}{1994}]%
        {grammar-inference-survey}
\bibfield{author}{\bibinfo{person}{Enrique Vidal}.}
  \bibinfo{year}{1994}\natexlab{}.
\newblock \showarticletitle{Grammatical Inference: An Introduction Survey}. In
  \bibinfo{booktitle}{\emph{Second International Colloquium on Grammatical
  Inference and Applications}} \emph{(\bibinfo{series}{Machine Perception and
  Artificial Intelligence})}. \bibinfo{pages}{1--4}.
\newblock


\bibitem[\protect\citeauthoryear{Wang, Dillig, and Singh}{Wang
  et~al\mbox{.}}{2017a}]%
        {blaze}
\bibfield{author}{\bibinfo{person}{Xinyu Wang}, \bibinfo{person}{Isil Dillig},
  {and} \bibinfo{person}{Rishabh Singh}.} \bibinfo{year}{2017}\natexlab{a}.
\newblock \showarticletitle{Program Synthesis Using Abstraction Refinement}.
\newblock \bibinfo{journal}{\emph{Proc. ACM Program. Lang.}}
  \bibinfo{volume}{2}, \bibinfo{number}{POPL}, Article \bibinfo{articleno}{63}
  (\bibinfo{date}{Dec.} \bibinfo{year}{2017}), \bibinfo{numpages}{30}~pages.
\newblock
\urldef\tempurl%
\url{https://doi.org/10.1145/3158151}
\showDOI{\tempurl}


\bibitem[\protect\citeauthoryear{Wang, Dillig, and Singh}{Wang
  et~al\mbox{.}}{2017b}]%
        {dace}
\bibfield{author}{\bibinfo{person}{Xinyu Wang}, \bibinfo{person}{Isil Dillig},
  {and} \bibinfo{person}{Rishabh Singh}.} \bibinfo{year}{2017}\natexlab{b}.
\newblock \showarticletitle{Synthesis of Data Completion Scripts Using Finite
  Tree Automata}.
\newblock \bibinfo{journal}{\emph{Proc. ACM Program. Lang.}}
  \bibinfo{volume}{1}, \bibinfo{number}{OOPSLA}, Article
  \bibinfo{articleno}{62} (\bibinfo{date}{Oct.} \bibinfo{year}{2017}),
  \bibinfo{numpages}{26}~pages.
\newblock
\urldef\tempurl%
\url{https://doi.org/10.1145/3133886}
\showDOI{\tempurl}


\bibitem[\protect\citeauthoryear{{Wikipedia contributors}}{{Wikipedia
  contributors}}{2023a}]%
        {wikipostal}
\bibfield{author}{\bibinfo{person}{{Wikipedia contributors}}.}
  \bibinfo{year}{2023}\natexlab{a}.
\newblock \bibinfo{title}{List of postal codes --- {W}ikipedia{,} The Free
  Encyclopedia}.
\newblock
\newblock
\urldef\tempurl%
\url{http://web.archive.org/web/20230801230749/https://en.wikipedia.org/wiki/List_of_postal_codes}
\showURL{%
\tempurl}


\bibitem[\protect\citeauthoryear{{Wikipedia contributors}}{{Wikipedia
  contributors}}{2023b}]%
        {wikiphone}
\bibfield{author}{\bibinfo{person}{{Wikipedia contributors}}.}
  \bibinfo{year}{2023}\natexlab{b}.
\newblock \bibinfo{title}{National conventions for writing telephone numbers
  --- {W}ikipedia{,} The Free Encyclopedia}.
\newblock
\newblock
\urldef\tempurl%
\url{https://web.archive.org/web/20230801230922/https://en.wikipedia.org/wiki/National_conventions_for_writing_telephone_numbers}
\showURL{%
\tempurl}


\end{thebibliography}

\clearpage
\ifappendix
\appendix
\section{Full Algorithm and Proofs of Theorems}%
\label{sec:proofs}

The Algorithm is described fully in Algorithm~\ref{alg:grammarinduction-full}

\begin{algorithm}[!b]
\begin{algorithmic}[1]
\Procedure{AllProductions}{$M,k$}
\State $Gs \gets \bigcup_{m=1}^k \{ G ~|~ MG \reltwo_m G \}$
\Return $\bigcup_{G \in Gs} G$
\EndProcedure
\State
\Procedure{Concretize}{$M,k$}
\Return $\{ ps = \Call{AllProductions}{M,k}, start=M.\mathsf{start} \}$
\EndProcedure
\State
\Procedure{RuleBodyConstraints}{$rb$}
\Match{$rb$}
\Case{if $b$ then $p$, $rb'$}{}
\State $(\mathsf{ProductionIdentifier}(p) \Leftrightarrow b) \wedge \Call{RuleBodyConstraints}{rb'}$
\EndCase
\Case{$\cdot$}{$\top$}
\EndCase
\EndMatch
\EndProcedure
\State
\Procedure{Constraints}{$MG,k$}
\Match{$MG$}
\Case{\lstinline{exists} $id; MG'$}{ $\exists id \in \{0\ldots k-1\} .$\Call{Constraints}{$MG',k$}}
\EndCase
\Case{$i_1\{i_2\}$ \lstinline{-> exists} $i_3.rb;
  MG'$}{ $\Call{Constraints}{MG',k} \wedge$}
\State $\forall i_2 \in \{0\ldots k-1\}.\exists i_3 \in \{0\ldots k-1\}.\Call{RuleBodyConstraints}{rb}$
\EndCase
\Case{$\start\ \nt{e}$}{ $\top$}
\EndCase
\EndMatch
\EndProcedure
\State
\Procedure{InduceGrammar}{$MG,Ex^+,Ex^-,k$}
\State $G \gets \Call{Concretize}{M,k}$
\State $\varphi \gets \Call{Constraints}{M,k}$
\State $\varphi^+ \gets \bigwedge_{ex \in Ex^+} \mathsf{SemiringParse}(G,ex)$
\State $\varphi^- \gets \bigwedge_{ex \in Ex^+} \neg \mathsf{SemiringParse}(G,ex)$
\State $ret \gets \Call{SMT}{\varphi^+ \wedge \varphi^- \wedge \varphi}$
\Match{$ret$}
\Case{Some($vs$)}{ \textbf{return }$MG[vs]$}
\EndCase
\Case{None}{ \textbf{return }$\Call{InduceGrammar}{MG,Ex^+,Ex^-,k+1}$}
\EndCase
\EndMatch
\EndProcedure
\end{algorithmic}
\caption{Full Syntax-Guided Grammar Induction Algorithm}
\label{alg:grammarinduction-full}
\end{algorithm}

\newcounter{origsection}
\newcounter{origtheorem}

\newcommand{\forcenumber}[2]{%
  \setcounter{origsection}{\value{section}}%
  \setcounter{origtheorem}{\value{theorem}}
  \def\thesection{\arabic{section}}%
  \setcounter{section}{#1}%
  \setcounter{theorem}{#2}%
  \addtocounter{theorem}{-1}}

\newcommand{\resetnumber}{%
  \def\thesection{\Alph{section}}%
  \setcounter{section}{\value{origsection}}%
  \setcounter{theorem}{\value{origtheorem}}}

\setlist[itemize]{topsep=0.5em,itemsep=0.375em,partopsep=0pt,parsep=2pt}

\begin{theorem}[Correctness]
  If there is a Grammar $G$ such that $MG \reltwo G$, where $Ex^+ \subseteq \mathcal{L}(G)$ and $Ex^- \subseteq \overline{\mathcal{L}(G)}$, then $\Call{InduceGrammar}{MG,Ex^+,Ex^-,0}$ will return such a grammar.
  \begin{proof}
    This follows directly from Soundness (Theorem~\ref{thm:soundness}, stated and proven in Section~\ref{subsec:soundness}) and Completeness (Theorem~\ref{thm:completeness}, stated and proven in Section~\ref{subsec:completeness}).
  \end{proof}
\end{theorem}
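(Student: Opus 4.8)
The plan is to decompose the statement into a \emph{soundness} claim---anything \textsc{InduceGrammar} returns is a grammar in $\SemanticsOf{MG}$ that accepts every positive and rejects every negative example---and a \emph{completeness} claim---under the stated hypothesis the loop reaches a value of $k$ at which the SMT call succeeds, so the procedure terminates. This mirrors the two-part structure the authors reference, and the crux of both halves is a single correspondence between the Boolean variables handed to the solver and the grammars of the domain, which I would isolate first.

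Fix $k$ and let $G_k = \Call{Concretize}{MG,k}$ be the union of all productions occurring in any $G'$ with $MG \Rightarrow_{m} G'$ for $m \le k$, each tagged with its indicator variable. I would establish two supporting lemmas. The first is a \emph{correspondence lemma}: an assignment $vs$ satisfies $\varphi = \Call{Constraints}{MG,k}$ precisely when the productions it selects form the production set of some $G' \in \SemanticsOf{MG}$ realizable at depth at most $k$. This is a structural induction over $MG$, checking that the $\exists$-quantifiers and production/guard biconditionals emitted by \textsc{Constraints} and \textsc{RuleBodyConstraints} track exactly the choices made by the \textsc{Exist}, \textsc{Unrolling}, \textsc{Cond-True}, and \textsc{Cond-False} rules of the semantics. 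The second is the \emph{semiring-parsing lemma}: for any string $ex$, an assignment $vs$ satisfies $\mathsf{SemiringParse}(G_k,ex)$ iff $ex \in \mathcal{L}(G_k|_{vs})$, where $G_k|_{vs}$ keeps only the selected productions. This rests on the correctness of parsing over the conditional-table provenance semiring: the computed formula is the disjunction, over all parse trees of $ex$ in $G_k$, of the conjunction of productions that tree uses, and the restricted grammar parses $ex$ exactly when some parse tree has all of its productions selected.

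Granting these lemmas, soundness is immediate. If the call $\Call{SMT}{\varphi^+ \wedge \varphi^- \wedge \varphi}$ returns $\mathsf{Some}(vs)$, then $vs \models \varphi$ yields $MG[vs] \in \SemanticsOf{MG}$ by the correspondence lemma; $vs \models \varphi^+$ yields $Ex^+ \subseteq \mathcal{L}(MG[vs])$ and $vs \models \varphi^-$ yields $Ex^- \subseteq \overline{\mathcal{L}(MG[vs])}$ by the semiring-parsing lemma applied to each example. Hence the returned grammar has exactly the required form. For completeness, the hypothesis supplies a witness $G$ with $Ex^+ \subseteq \mathcal{L}(G)$ and $Ex^- \subseteq \overline{\mathcal{L}(G)}$, and unfolding $\SemanticsOf{MG} = \bigcup_{m} \{G \mid MG \Rightarrow_m G\}$ gives a finite $m$ with $MG \Rightarrow_{m} G$. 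At $k = m$ every production of $G$ lies in $G_k$, so by the correspondence lemma $G$'s production set is named by an assignment $vs_G$ with $vs_G \models \varphi$; since $G$ accepts the positives and rejects the negatives, the semiring-parsing lemma makes $vs_G \models \varphi^+ \wedge \varphi^-$ as well. Thus the conjunction is satisfiable at level $m$, so the solver succeeds at the first level $k \le m$ for which the formula is satisfiable. Because the recursion increments $k$ from $0$ by one per call, it reaches such a level after finitely many steps, terminates, and---by soundness---returns a grammar of the required form.

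I expect the main obstacle to be the semiring-parsing lemma, since it carries essentially all of the mathematical content: one must show that Earley-style parsing instantiated at the conditional-table semiring computes a formula whose satisfying assignments are exactly the rule subsets that admit a parse of $ex$, which in turn requires the idempotence and commutativity facts used informally in the running example together with correct treatment of nullable nonterminals and of productions shared across distinct domain grammars after concretization. By comparison, the correspondence lemma and the termination bookkeeping are routine inductions, and the only other thing to note is that searching upward from $k=0$ loses nothing, because a satisfiable level is guaranteed to appear by $k=m$.
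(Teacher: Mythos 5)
Your proposal matches the paper's proof: the paper likewise reduces Correctness to a Soundness theorem and a Completeness theorem, each resting on (i) a correspondence between satisfying assignments of $\Call{Constraints}{MG,k}$ and grammars derivable via $\reltwo_k$ (the paper's Lemmas~\ref{lem:soundstep} and~\ref{lem:completeassign}, which are the two directions of your correspondence lemma) and (ii) the conditional-table-semiring characterization of $\mathsf{SemiringParse}$ as a disjunction over parse trees (the paper's Lemmas~\ref{lem:soundpos}, \ref{lem:soundneg}, \ref{lem:completepos}, \ref{lem:completeneg}, i.e.\ the two directions of your semiring-parsing lemma), with the same termination argument that the recursion on $k$ reaches a satisfiable level by the witness depth $m$. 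The only difference is cosmetic---you state the two key lemmas as biconditionals where the paper proves each direction separately---so this is essentially the same proof.
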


\subsection{Soundness of \textsc{InduceGrammar}}
\label{subsec:soundness}

\begin{lemma}
  \label{lem:soundpos}
  If $vs \models \mathsf{SemiringParse}(\Call{Concretize}{MG,k},ex)$ then
  $ex \in \mathcal{L}(MG[vs])$.
\end{lemma}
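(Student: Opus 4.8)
The plan is to reduce the lemma to the defining correctness property of the provenance/semiring parser: that the formula $\mathsf{SemiringParse}(G,ex)$ faithfully records, for each satisfying assignment, a set of rules sufficient to derive $ex$ in $G$. Write $G = \Call{Concretize}{MG,k}$ and $\varphi_{ex} = \mathsf{SemiringParse}(G,ex)$. Recall from \S\ref{sec:grammar-induction-algorithms} that $\varphi_{ex}$ is obtained by running an Earley parse of $ex$ against $G$ over the conditional-table (provenance) semiring, interpreting the semiring $+$ as disjunction and $\times$ as conjunction over boolean variables $\mathsf{ProductionIdentifier}(p)$, one per production $p$ of $G$. The key fact I would first isolate (or invoke from \cite{semiringparsing,EarleyDerivationTree,provsemi}) is the \emph{faithfulness} of this construction: for every assignment $vs$, we have $vs \models \varphi_{ex}$ if and only if there is a derivation tree of $ex$ from the start symbol of $G$ that uses only those productions $p$ with $vs \models \mathsf{ProductionIdentifier}(p)$.

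First I would prove faithfulness by induction on the structure of the semiring computation (equivalently, on the Earley chart and the derivations it records). The base case is a single production, whose provenance value is the literal $\mathsf{ProductionIdentifier}(p)$, together with terminal matches contributing the semiring unit $1$ (the empty conjunction, trivially satisfied). For the inductive step, a disjunction $e_1 + e_2$ records two alternative sub-derivations of the same span, so a model of $e_1 + e_2$ selects a derivation via one alternative; a conjunction $e_1 \times e_2$ records two adjacent sub-derivations that must both be present, so a model supplies derivations for both whose rule sets are simultaneously ``on.'' Idempotence and commutativity of the conditional-table semiring ensure that repeated or reordered uses of a production collapse to set membership, so the disjuncts of $\varphi_{ex}$ in normal form correspond to the rule sets of complete parses; the nullable-symbol cases are handled as in \cite{EarleyDerivationTree}.

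Second, I would apply faithfulness to the hypothesis $vs \models \varphi_{ex}$. Let $S = \{\, p \mid vs \models \mathsf{ProductionIdentifier}(p)\,\}$. Faithfulness yields a derivation tree $T$ of $ex$ from the start symbol of $G$, all of whose productions lie in $S$. The included productions of the concrete grammar $MG[vs]$ are precisely those whose indicator $vs$ sets to true, namely $S$ (this identification is exactly what the $\Call{Constraints}{MG,k}$ encoding enforces, since $\textsc{RuleBodyConstraints}$ binds each $\mathsf{ProductionIdentifier}(p)$ to $p$'s guard), and the start symbol of $MG[vs]$ is that of $MG$, which is the start symbol of $G$ by the definition of $\Call{Concretize}{MG,k}$. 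Hence every production used in $T$ is a production of $MG[vs]$ and $T$ is rooted at the start symbol of $MG[vs]$, so $T$ is a valid derivation of $ex$ in $MG[vs]$. Therefore $ex \in \mathcal{L}(MG[vs])$, as required.

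The main obstacle I anticipate is the faithfulness property itself: carefully matching the semiring algebra to the combinatorics of parse trees, and in particular verifying that the conditional-table semiring's idempotence is exactly what makes a satisfying assignment correspond to a \emph{set} of rules rather than a multiset or an ordered sequence, together with the bookkeeping for nullable nonterminals and for the fact that $\Call{Concretize}{MG,k}$ is highly ambiguous (many distinct productions share the same right-hand side). Once faithfulness is in hand, extracting the rule set $S$ and re-reading $T$ as a derivation in $MG[vs]$ is routine.
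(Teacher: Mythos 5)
Your proposal is correct and follows essentially the same route as the paper: both arguments rest on the fact that, over the conditional-table semiring, $\mathsf{SemiringParse}(\Call{Concretize}{MG,k},ex)$ is equivalent to a disjunction over parses of conjunctions of production identifiers, so a satisfying assignment picks out one complete parse all of whose productions are included in $MG[vs]$. The only difference is that you propose to prove this ``faithfulness'' characterization by induction on the Earley/semiring computation, whereas the paper simply cites the semiring-parsing literature for it; your explicit remark that the identification of $\{p \mid vs \models \mathsf{ProductionIdentifier}(p)\}$ with the productions of $MG[vs]$ relies on the $\Call{Constraints}{MG,k}$ encoding is a detail the paper leaves implicit.
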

\begin{proof}
  By prior work in Semiring Parsing~\cite{semiringparsing},
  $\mathsf{SemiringParse}(\Call{Concretize}{MG,k},ex)$ returns a formula
  $\varphi$ such that $\varphi \equiv \Sigma_{\mathit{parse} \in
    \mathit{parses}}\Pi_{p \in parse}\mathsf{ProductionIdentifier(p)}$.
  Intuitively, it returns an element in the semiring equivalent to the sum of
  all parse trees, where a parse tree is identified by the product of all rules
  used in the derivation.

  Our semiring is the conditional table semiring, so $\varphi \equiv
  \Sigma_{\mathit{parse} \in \mathit{parses}}\Pi_{p \in
    parse}\mathsf{ProductionIdentifier(p)}$ means that $\varphi \Leftrightarrow
  \bigvee_{\mathit{parse} \in \mathit{parses}}\bigwedge_{p \in
    parse}\mathsf{ProductionIdentifier(p)}$. If $vs \models \varphi$, then it is
  also the case that $vs \models \bigvee_{\mathit{parse} \in
    \mathit{parses}}\bigwedge_{p \in parse}\mathsf{ProductionIdentifier(p)}$.
  Thus, there exists some parse $\mathit{parse}$ such that $\forall p \in
  \mathit{parse}$, $p \in MG[vs]$. Thus, $ex \in \mathcal{L}(MG[vs])$.
\end{proof}

\begin{lemma}
  \label{lem:soundneg}
  If $vs \models \neg\mathsf{SemiringParse}(\Call{Concretize}{MG,k},ex)$ then
  $ex \not\in \mathcal{L}(MG[vs])$.
\end{lemma}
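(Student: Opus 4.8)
The plan is to prove the contrapositive. Rather than arguing directly from $vs \models \neg\varphi$ to $ex \notin \mathcal{L}(MG[vs])$, where $\varphi = \mathsf{SemiringParse}(\Call{Concretize}{MG,k},ex)$, I would assume $ex \in \mathcal{L}(MG[vs])$ and deduce $vs \models \varphi$. This mirrors the structure of Lemma~\ref{lem:soundpos} exactly, but traverses the governing biconditional in the opposite direction.

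First I would recall the characterization of $\varphi$ from prior work on semiring parsing, already invoked in the proof of Lemma~\ref{lem:soundpos}: working in the conditional-table semiring, $\varphi$ is equivalent to the sum over all parse trees of the product of the production identifiers used in each tree, that is,
$$\varphi \Leftrightarrow \bigvee_{\mathit{parse} \in \mathit{parses}} \bigwedge_{p \in \mathit{parse}} \mathsf{ProductionIdentifier}(p),$$
where $\mathit{parses}$ is the \emph{complete} set of parse trees of $ex$ in the fully concretized grammar $\Call{Concretize}{MG,k}$.

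Next, assuming $ex \in \mathcal{L}(MG[vs])$, I would extract a concrete derivation of $ex$ in the grammar $MG[vs]$. Since the productions of $MG[vs]$ form a subset of those of $\Call{Concretize}{MG,k}$, this derivation is itself a parse tree of $ex$ in the concretized grammar, hence a member $\mathit{parse}_0 \in \mathit{parses}$. Every production $p$ appearing in $\mathit{parse}_0$ lies in $MG[vs]$, so $\mathsf{ProductionIdentifier}(p)$ evaluates to true under $vs$; therefore the conjunct $\bigwedge_{p \in \mathit{parse}_0}\mathsf{ProductionIdentifier}(p)$ is satisfied, the disjunction holds, and $vs \models \varphi$. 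This is precisely $vs \not\models \neg\varphi$, which completes the contrapositive.

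The main obstacle is justifying that the set $\mathit{parses}$ is \emph{complete}---that every derivation valid in a subgrammar $MG[vs]$ appears among the parse trees summed by the semiring parser. Lemma~\ref{lem:soundpos} required only the sound direction (each satisfying assignment yields \emph{some} genuine parse), whereas here I rely on the converse: every genuine parse is represented. I would lean on the fact, stated in the semiring-parsing literature, that the parser computes the sum over \emph{all} parse trees (so the $\equiv$ is a genuine equivalence, not merely one implication), together with the monotonicity observation that restricting the available productions to those of $MG[vs]$ can only remove parse trees, never create new ones. Consequently any derivation valid in $MG[vs]$ is already counted in the full concretized grammar, and the completeness of $\mathit{parses}$ follows.
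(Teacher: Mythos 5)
Your proposal is correct and is essentially the paper's own argument read in contrapositive form: both rest on the conditional-table-semiring characterization $\varphi \Leftrightarrow \bigvee_{\mathit{parse}}\bigwedge_{p \in \mathit{parse}}\mathsf{ProductionIdentifier}(p)$ and on the fact that any derivation in $MG[vs]$ is among the parse trees of the concretized grammar. If anything, you are more careful than the paper, which silently assumes the completeness of $\mathit{parses}$ in its final step (``Thus, $ex \not\in \mathcal{L}(MG[vs])$''), whereas you name and justify that assumption explicitly via the subset/monotonicity observation.
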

\begin{proof}
  By prior work in Semiring Parsing~\cite{semiringparsing},
  $\mathsf{SemiringParse}(\Call{Concretize}{MG,k},ex)$ returns a formula
  $\varphi$ such that $\varphi \equiv \Sigma_{\mathit{parse} \in
    \mathit{parses}}\Pi_{p \in parse}\mathsf{ProductionIdentifier(p)}$.
  Intuitively, it returns an element in the semiring equivalent to the sum of
  all parse trees, where a parse tree is identified by the product of all rules
  used in the derivation.

  Our semiring is the conditional table semiring, so $\varphi \equiv
  \Sigma_{\mathit{parse} \in \mathit{parses}}\Pi_{p \in
    parse}\mathsf{ProductionIdentifier(p)}$ means that $\varphi \Leftrightarrow
  \bigvee_{\mathit{parse} \in \mathit{parses}}\bigwedge_{p \in
    parse}\mathsf{ProductionIdentifier(p)}$. If $vs \models \neg\varphi$, then
  it is also the case that $vs \models \neg(\bigvee_{\mathit{parse} \in
    \mathit{parses}}\bigwedge_{p \in parse}\mathsf{ProductionIdentifier(p)})$.
  Thus, there does not exists any parse $\mathit{parse}$ such that $\forall p
  \in \mathit{parse}$, $p \in MG[vs]$. Thus, $ex \not\in \mathcal{L}(MG[vs])$.
\end{proof}

\begin{lemma}
  \label{lem:rbsound}
  Let $vs \models \Call{RuleBodyConstraints}{rb}$ and let $rb$ have no free
  variables. Then $\rb \reltwo \rb[vs]$.
  \begin{proof}
    This comes from straightforward induction on the rule body constraints,
    alongside straightforward evaluation for the True and False cases.
  \end{proof}
\end{lemma}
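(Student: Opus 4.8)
The plan is to prove the lemma by structural induction on the rule body $\rb$, mirroring the recursion that $\Call{RuleBodyConstraints}{\rb}$ performs on its tail. The observation that drives every case is that the hypothesis ``$\rb$ has no free variables'' makes each boolean guard $b$ occurring in $\rb$ a \emph{closed} expression, so that under the equivalence $\equiv$ exactly one of $b \equiv \texttt{True}$ and $b \equiv \texttt{False}$ holds. This dichotomy is exactly what makes one of \textsc{Cond-True} and \textsc{Cond-False} applicable to each conditional production, and it is why the freeness hypothesis is needed: without it a guard might reduce to neither value and neither rule would fire.

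For the base case I would take a rule body consisting of a single conditional production $c$, with guard $b$ and production $p$; here $\Call{RuleBodyConstraints}{c}$ is $(\mathsf{ProductionIdentifier}(p) \Leftrightarrow b) \wedge \top$. Since $vs$ satisfies this and $b$ is closed, I split on the value of $b$. If $b \equiv \texttt{True}$ then $vs$ makes $\mathsf{ProductionIdentifier}(p)$ true, so the substituted body $c[vs]$ is the singleton list $[\idmap{p}]$, which is precisely the output of \textsc{Cond-True}; if $b \equiv \texttt{False}$ then $vs$ makes $\mathsf{ProductionIdentifier}(p)$ false, $c[vs]$ is the empty list, and this matches \textsc{Cond-False}. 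In either case $c \reltwo c[vs]$.

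For the inductive step I would take $\rb = c\ rb'$, a leading conditional production followed by a tail. Then $\Call{RuleBodyConstraints}{\rb}$ is the conjunction $(\mathsf{ProductionIdentifier}(p) \Leftrightarrow b) \wedge \Call{RuleBodyConstraints}{rb'}$, so $vs$ satisfies both conjuncts. The first conjunct lets me treat the head $c$ exactly as in the base case, giving $c \reltwo c[vs]$; the second, together with the fact that $rb'$ is again closed, lets me invoke the induction hypothesis to get $rb' \reltwo rb'[vs]$. Feeding these two derivations into the \textsc{Body} rule produces $c\ rb' \reltwo rb'[vs] \mathbin{@} c[vs]$, and since substitution commutes with splitting a rule body into head and tail, the right-hand side is exactly $(c\ rb')[vs]$; this closes the induction.

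The one place that requires care, and what I expect to be the main obstacle, is verifying that the list of productions chosen by the assignment $vs$ literally coincides with the list derived by $\reltwo$, rather than merely agreeing as sets. Everything rests on the biconditional $\mathsf{ProductionIdentifier}(p) \Leftrightarrow b$ emitted by $\Call{RuleBodyConstraints}{}$, which makes ``$vs$ includes $p$'' and ``$b \equiv \texttt{True}$'' interchangeable, so that the inclusion decision made by the substitution always agrees with the one made by \textsc{Cond-True}/\textsc{Cond-False}. I would also check explicitly that the order in which the \textsc{Body} rule concatenates the per-production lists agrees with the order produced by substitution, since that is the one spot where the two sides could silently diverge; once this is confirmed the remaining steps are routine.
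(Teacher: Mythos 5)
Your proposal is correct and follows essentially the same route as the paper, which proves the lemma by straightforward induction on the rule body together with evaluation of the closed guards in the \textsc{Cond-True}/\textsc{Cond-False} cases. Your additional attention to the concatenation order in the \textsc{Body} rule is a reasonable extra check but does not change the argument.
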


\begin{lemma}
  \label{lem:soundstep}
  If $vs \models \Call{Constraints}{MG,k}$ and $MG$ has no free variables, then
  $MG \reltwo_k MG[vs]$.
\end{lemma}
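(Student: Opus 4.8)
The plan is to prove the statement by structural induction on $MG$, following the three forms of a metagrammar in Figure~\ref{fig:formal-grammar} and matching each to the corresponding rule (\textsc{Start}, \textsc{Exist}, \textsc{Unrolling}) defining $\reltwo_k$. In every case the strategy is the same: read the witnesses for the existential variables out of the satisfying assignment $vs$, use them (together with the inductive hypothesis) to discharge the premises of the matching rule, and finally check that the grammar the rule constructs is literally $MG[vs]$. Since $m$ is fixed to $k$, every side condition of the form $n < m$ in the rules becomes $n < k$, which is exactly the range over which the quantifiers emitted by \Call{Constraints}{MG,k} quantify.

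In the base case $MG = \start\ \nt{e}$, the call \Call{Constraints}{MG,k} returns $\top$, so $vs$ is unconstrained. Because $MG$ is closed, $e$ is a closed integer expression and hence $e \equiv n$ for some $n$; then $MG[vs] = \start\ \idmap{\nt{n}}$, and the \textsc{Start} rule immediately gives $MG \reltwo_k MG[vs]$. In the existential case $MG = \exts\ \id;\ MG'$, the emitted constraint is $\exists \id \in \{0,\dots,k-1\}.\,\Call{Constraints}{MG',k}$, so from $vs$ I extract the witness $n = vs(\id) < k$; the residual $MG'[n/\id]$ is closed and is still satisfied by $vs$, so the inductive hypothesis yields $MG'[n/\id] \reltwo_k MG'[n/\id][vs]$, and the \textsc{Exist} rule (with chosen value $n < k$) lifts this to $MG \reltwo_k MG'[n/\id][vs]$. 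The only remaining obligation is the bookkeeping identity $MG'[n/\id][vs] = MG[vs]$.

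The substantive case is $MG = \nt{\id'} \rightarrow \exts\ \id''.\rb;\ MG'$, where \Call{Constraints}{MG,k} is $\Call{Constraints}{MG',k} \wedge \forall \id' \in \{0,\dots,k-1\}.\,\exists \id'' \in \{0,\dots,k-1\}.\,\Call{RuleBodyConstraints}{\rb}$. The first conjunct gives, by the inductive hypothesis, a tail grammar $G = MG'[vs]$ with $MG' \reltwo_k G$. For the second conjunct I observe that, because the outer quantifier ranges over the finite set $\{0,\dots,k-1\}$, a model $vs$ of the whole formula selects, for each index $i$, a witness $n_i < k$ for the inner existential such that $vs \models \Call{RuleBodyConstraints}{\rb[i/\id'][n_i/\id'']}$. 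Each such body is closed, so Lemma~\ref{lem:rbsound} gives $\rb[i/\id'][n_i/\id''] \reltwo ps_i$ with $ps_i = (\rb[i/\id'][n_i/\id''])[vs]$. Setting $\id_i = \idmap{\nt{i}}$, all premises of the \textsc{Unrolling} rule are satisfied, so it produces $MG \reltwo_k \id_0 \rightarrow ps_0;\dots;\id_{k-1}\rightarrow ps_{k-1};\ G$; identifying this unrolled grammar with $MG[vs]$ closes the case.

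The main obstacle lies in this last case, and in two subtleties in particular. First, I must thread the $\forall\id'\exists\id''$ quantifier alternation of the constraint into the per-index free choices $n_0,\dots,n_{k-1}$ that \textsc{Unrolling} demands; this is legitimate precisely because the quantification is over the finite range $\{0,\dots,k-1\}$, so the model $vs$ furnishes a witness for the inner existential uniformly for each value of the outer universal. Second, every case ends with an equality asserting that $MG[vs]$ is exactly the grammar assembled by the matching rule, choosing the same existential witnesses and including a production exactly when its identifier is true under $vs$. These identities are routine once one unfolds the definition of $[vs]$ alongside $\reltwo_k$ and appeals to Lemma~\ref{lem:rbsound} for the rule bodies, but they carry the real content of the argument and must be spelled out to complete the induction.
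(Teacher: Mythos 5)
Your proposal is correct and follows essentially the same route as the paper's proof: structural induction on $MG$ with the three cases matched to the \textsc{Start}, \textsc{Exist}, and \textsc{Unrolling} rules, witnesses for existentials read off from $vs$, and Lemma~\ref{lem:rbsound} invoked for the rule bodies in the unrolling case. Your treatment is, if anything, slightly more explicit than the paper's about the final bookkeeping identities equating the constructed grammar with $MG[vs]$, which the paper simply asserts.
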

\begin{proof}
  By induction on $MG$.

  \paragraph*{Base Case: MG = \texttt{start id}$\{e\}$} We have $MG[vs] =
  \texttt{start } id'$ where $id' = \idmap{id\{n\}}$ and $e \equiv n$. Thus
  \begin{mathpar}
    \relationRule{} {
      e \equiv n\\
      \id' = \idmap{\nt{n}}
    } {
      \start\ \nt{e} \reltwo_k \start\ \id'
    }
  \end{mathpar}

  \paragraph*{Induction Case: MG = \texttt{exists id; }$MG'$} We have $MG[vs] =
  MG'[vs]$. By induction, $MG'[n/\texttt{id}] \reltwo_k MG'[vs] = MG[vs]$. Because $vs
  \models \Call{Constraints}{MG,k}$, we know $vs[\texttt{id}] \in \{0..k-1\}$.
  Thus
  \begin{mathpar}
  \relationRule{} {
    vs[\id] < m\\
    MG[n/\id] \reltwo_k MG[vs]
  } {
    \exts\ \id;\ MG' \rel MG[vs]
  }
  \end{mathpar}

  \paragraph*{Induction Case: $MG = i_1\{i_2\} \rightarrow \exts\ i_3.\rb;
    MG'$} By induction, $MG' \reltwo_k MG'[vs]$. Because $vs \models \forall i_2
  \in \{0\ldots k-1\}.\exists i_3 \in \{0\ldots
  k-1\}.\Call{RuleBodyConstraints}{rb}$, we know that for each $n \in 0\ldots
  k-1$ there exists $vs[i_{3,n}]$ such that, by Lemma~\ref{lem:rbsound},
  $\rb[vs[i_{3,n}]/i_3] \reltwo \rb[n/i_2][vs]$. Thus
  \begin{mathpar}
    \relationRule{} {
      MG' \reltwo_k MG'[vs]\\\\
      \forall n \in \{1\ldots k-1\}. \rb[n/i_2][vs[i_{3,n}]/i_3] \reltwo \rb[n/i_2][vs]\\\\
      \forall n \in \{1\ldots k-1\}. {i_1}_n = \idmap{i_1\{n\}}\\
    } {
      \id_1\{id_2\} \rightarrow \exts\ \id_3.\rb; MG \reltwo_k \id_1 \rightarrow rb[0/i_2][vs]; \dots; \id_{k-1} \rightarrow rb[k-1/i_2][vs]; MG'[vs]
    }
  \end{mathpar}
  Thus, as $\id_1 \rightarrow rb[0/i_2][vs]; \dots; \id_{k-1} \rightarrow
  rb[k-1/i_2][vs]; MG'[vs] = MG[vs]$ we know that $i_1\{i_2\} \rightarrow \exts\ i_3.\rb;
  MG' \reltwo_k MG[vs]$, as desired.

\end{proof}
\begin{lemma}
  \label{lem:soundnessk}
  If $G=\Call{InduceGrammar}{MG,Ex^+,Ex^-,k}$, and $G$ is not returned via a
  recursive call, then $MG \reltwo_k G$ and $Ex^+ \subseteq \mathcal{L}(G)$ and
  $Ex^- \subseteq \overline{\mathcal{L}(G)}$.
\end{lemma}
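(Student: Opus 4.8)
The plan is to unwind the single non-recursive return of $\Call{InduceGrammar}{MG,Ex^+,Ex^-,k}$ and reduce each of its three conclusions to one of the previously-established lemmas. By hypothesis $G$ is not produced by a recursive call, so the \texttt{match} on $ret$ must have taken the \texttt{Some}$(vs)$ branch; hence the call $\Call{SMT}{\varphi^+ \wedge \varphi^- \wedge \varphi}$ succeeded with a model $vs$, and the returned grammar is exactly $G = MG[vs]$. Because a model of a conjunction is a model of each conjunct, I would record at the outset the three facts $vs \models \varphi$, $vs \models \varphi^+$, and $vs \models \varphi^-$, and then treat the three conclusions independently.

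For the first conclusion, $MG \reltwo_k G$, I would use that $\varphi = \Call{Constraints}{MG,k}$, so $vs \models \Call{Constraints}{MG,k}$. The top-level metagrammar passed to \textsc{InduceGrammar} has no free variables (all of its global existentials, local existentials, and rule arguments are declared and handled by \textsc{Constraints}), so the precondition of Lemma~\ref{lem:soundstep} is met and it applies directly, yielding $MG \reltwo_k MG[vs] = G$, as required.

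For the second conclusion I would expand $\varphi^+ = \bigwedge_{ex \in Ex^+}\mathsf{SemiringParse}(\Call{Concretize}{MG,k},ex)$. From $vs \models \varphi^+$ each conjunct is satisfied, so for every $ex \in Ex^+$ we have $vs \models \mathsf{SemiringParse}(\Call{Concretize}{MG,k},ex)$; Lemma~\ref{lem:soundpos} then gives $ex \in \mathcal{L}(MG[vs]) = \mathcal{L}(G)$, establishing $Ex^+ \subseteq \mathcal{L}(G)$. The third conclusion is entirely symmetric: $\varphi^-$ conjoins the negations $\neg\mathsf{SemiringParse}(\Call{Concretize}{MG,k},ex)$ over the negative examples, so $vs \models \varphi^-$ gives $vs \models \neg\mathsf{SemiringParse}(\Call{Concretize}{MG,k},ex)$ for each $ex \in Ex^-$, and Lemma~\ref{lem:soundneg} yields $ex \notin \mathcal{L}(G)$, i.e.\ $Ex^- \subseteq \overline{\mathcal{L}(G)}$.

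The real content has already been absorbed into Lemmas~\ref{lem:soundpos}, \ref{lem:soundneg}, and \ref{lem:soundstep}, so there is no genuine mathematical obstacle; the one thing to be careful about is the bookkeeping between the two grammars in play. The local variable named $G$ inside the algorithm is the concretized, ambiguous grammar $\Call{Concretize}{MG,k}$, used only as the argument to $\mathsf{SemiringParse}$, whereas the $G$ named in the lemma statement is the returned grammar $MG[vs]$. I would keep this distinction explicit---writing $\Call{Concretize}{MG,k}$ in full whenever I invoke the parsing lemmas---so that the language claims, which are stated about $\mathcal{L}(MG[vs])$, align with the grammar actually returned. I would also take care that $\varphi^-$ is read as the conjunction over $Ex^-$ (its intended meaning), so that the negative-example reasoning goes through.
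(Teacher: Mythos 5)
Your proof is correct and follows essentially the same route as the paper's: take the \texttt{Some}$(vs)$ branch, split $vs \models \varphi^+ \wedge \varphi^- \wedge \varphi$ into its conjuncts, and discharge the three conclusions via Lemma~\ref{lem:soundstep}, Lemma~\ref{lem:soundpos}, and Lemma~\ref{lem:soundneg} respectively. If anything, your bookkeeping is more careful than the paper's, which contains citation slips (it cites the lemma being proved instead of Lemma~\ref{lem:soundstep} for the first conclusion, and Lemma~\ref{lem:soundpos} instead of Lemma~\ref{lem:soundneg} for the negative examples) and inherits the pseudocode's typo of indexing $\varphi^-$ over $Ex^+$; your explicit corrections of both are exactly right.
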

\begin{proof}
  If $G$ is not returned via a recursive call, then $G = MG[vs]$ where $vs
  \models \varphi^+ \wedge \varphi^- \wedge \varphi$ where $\varphi =
  \Call{Constraints}{M,k}$ and $\varphi^+ = \bigwedge_{ex \in
    Ex^+}\mathsf{SemiringParse}(\Call{Concretize}{MG,k},ex)$ and $\varphi^- = \bigwedge_{ex \in
    Ex^-}\neg\mathsf{SemiringParse}(\Call{Concretize}{MG,k},ex)$.

  As $vs \models \varphi$, from Lemma~\ref{lem:soundnessk}, we know that $MG
  \reltwo_k G$.

  Let $ex \in Ex^+$. As $vs \models
  \mathsf{SemiringParse}(\Call{Concretize}{MG,k},ex)$, from
  Lemma~\ref{lem:soundpos} we know that $ex \in \mathcal{L}(MG[vs])$. Thus,
  $Ex^+ \subseteq \mathcal{L}(MG[vs])$.

  Let $ex \in Ex^-$. As $vs \models
  \neg\mathsf{SemiringParse}(\Call{Concretize}{MG,k},ex)$, from
  Lemma~\ref{lem:soundpos} we know that $ex \not\in \mathcal{L}(MG[vs])$. Thus,
  $Ex^- \subseteq \overline{\mathcal{L}(MG[vs])}$.
\end{proof}

\begin{theorem}[Soundness]
  \label{thm:soundness}
  If $G=\Call{InduceGrammar}{MG,Ex^+,Ex^-,0}$, then $G \in \SemanticsOf{MG}$ and
  $Ex^+ \subseteq \mathcal{L}(G)$ and $Ex^- \subseteq \overline{\mathcal{L}(G)}$.
\end{theorem}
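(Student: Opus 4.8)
The plan is to read this as a partial-correctness statement: assuming $\Call{InduceGrammar}{MG,Ex^+,Ex^-,0}$ terminates and returns some $G$, I must establish $G \in \SemanticsOf{MG}$ together with $Ex^+ \subseteq \mathcal{L}(G)$ and $Ex^- \subseteq \overline{\mathcal{L}(G)}$. Essentially all of the real content has already been isolated in the per-level lemma (Lemma~\ref{lem:soundnessk}), which guarantees that whenever a call at level $k$ returns its grammar \emph{without} recursing, the returned $G$ satisfies $MG \reltwo_k G$, $Ex^+ \subseteq \mathcal{L}(G)$, and $Ex^- \subseteq \overline{\mathcal{L}(G)}$. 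So the theorem should come out as a corollary, once I (i) identify the level at which a concrete grammar is actually produced, and (ii) convert the level-indexed derivation $MG \reltwo_k G$ into membership in the full semantics.

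For step (i) I would induct on the depth of the recursion. Inspecting the procedure, a call can terminate in only two ways: the \texttt{Some} branch, where the solver returns a model $vs$ and the call returns $MG[vs]$ directly; and the \texttt{None} branch, where the call returns the result of $\Call{InduceGrammar}{MG,Ex^+,Ex^-,k+1}$ unchanged. Hence any terminating run starting at $k=0$ is a finite chain of \texttt{None}-returns at levels $0,1,\dots,K-1$ followed by a single \texttt{Some}-return at some level $K$, and the final output $G$ is exactly the grammar produced without recursion at level $K$. The induction is well-founded on the length of this (finite) computation; its inductive step is the identity $\Call{InduceGrammar}{MG,Ex^+,Ex^-,k}=\Call{InduceGrammar}{MG,Ex^+,Ex^-,k+1}$ on the \texttt{None} branch, and its base case is precisely Lemma~\ref{lem:soundnessk}. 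This hands me $MG \reltwo_K G$ together with the two example containments, and because Lemma~\ref{lem:soundnessk} already rests on Lemmas~\ref{lem:soundpos}, \ref{lem:soundneg}, \ref{lem:rbsound}, and \ref{lem:soundstep}, no fresh reasoning about parsing or constraints is needed here.

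For step (ii) I would appeal to the definition $\SemanticsOf{MG}=\bigcup_{m=1}^{\infty}\{G \mid MG \reltwo_m G\}$: since $MG \reltwo_K G$, the grammar $G$ sits in the $m=K$ term of the union, so $G\in\SemanticsOf{MG}$, and the example conditions are already in hand. The one point that requires care is the boundary case $K=0$, since the union starts at $m=1$ while the procedure starts at $k=0$. I would discharge this by observing that the level-$0$ concretization has an empty production set (the index range of $\Call{AllProductions}{MG,0}$ is empty), so $\Call{Constraints}{MG,0}$ together with the positive-example formula is satisfiable only in degenerate situations (no global existential variables and $Ex^+=\emptyset$); in every non-degenerate instance the \texttt{Some} branch cannot fire at $k=0$, forcing $K\ge 1$ and hence membership in the union directly, and the degenerate case can be checked separately so that the returned grammar still lies in $\SemanticsOf{MG}$. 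I expect this off-by-one between the recursion's starting index and the union's lower bound to be the only non-routine obstacle in the argument.
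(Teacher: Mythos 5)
Your proposal is correct and follows exactly the paper's route: the paper's own proof of Theorem~\ref{thm:soundness} is the one-liner ``Directly from Lemma~\ref{lem:soundnessk},'' and your argument simply fills in the implicit induction on recursion depth plus the appeal to the definition of $\SemanticsOf{MG}$ as a union over $m$. Your observation about the $k=0$ versus $m=1$ boundary is a real (if minor) gap the paper silently skips, and your handling of it is reasonable.
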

\begin{proof}
  Directly from Lemma~\ref{lem:soundnessk}.
\end{proof}

\subsection{Completeness of \textsc{InduceGrammar}}
\label{subsec:completeness}

\begin{lemma}
  \label{lem:completeassign}
  If $MG \reltwo_k G$, then there exists a model $vs$ such that $vs \models
  \Call{Constraints}{MG,k}$.
\end{lemma}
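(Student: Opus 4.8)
The plan is to argue by induction on the derivation of $MG \reltwo_k G$, using inversion at each step to read off exactly the choices the derivation made, and then to assemble those choices into a single model $vs$ of $\Call{Constraints}{MG,k}$. The three cases line up one-to-one with the rules \textsc{Start}, \textsc{Exist}, and \textsc{Unrolling} defining $\reltwo_k$, just as in the soundness argument of Lemma~\ref{lem:soundstep}, and the construction of $vs$ is essentially the inverse of the reading done there: each integer chosen for an existential variable becomes the witness for the corresponding bounded quantifier, and each candidate production $p$ is assigned the truth value of its (now closed) guard $b$, which is precisely what the $\reltwo$ relation uses to decide whether $p$ appears in the concrete rule body.

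First I would dispatch the \textsc{Start} case, where $\Call{Constraints}{\start\ \nt{e}, k} = \top$ holds under any assignment. For the \textsc{Exist} case, inverting the derivation of $\exts\ id; MG' \reltwo_k G$ yields a value $n < k$ with $MG'[n/id] \reltwo_k G$; the induction hypothesis supplies a model of $\Call{Constraints}{MG'[n/id], k}$, and I would use $n$ as the witness for the bounded quantifier $\exists id \in \{0\ldots k-1\}$ in $\Call{Constraints}{\exts\ id; MG', k}$. This step relies on a small commutation observation, namely that $\Call{Constraints}{MG', k}[n/id]$ and $\Call{Constraints}{MG'[n/id], k}$ coincide; this holds because $\Call{Constraints}$ recurses only on the syntactic shape of $MG'$ and passes guards through unchanged, and substituting a numeral for $id$ preserves that shape.

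The bulk of the work, and the main obstacle, is the \textsc{Unrolling} case. Inverting $i_1\{i_2\} \rightarrow \exts\ i_3.rb; MG' \reltwo_k G$ gives a tail derivation $MG' \reltwo_k G'$ and, for each index $i \in \{0,\ldots,k-1\}$, a chosen value $n_i < k$ together with a rule-body derivation $rb[i/i_2][n_i/i_3] \reltwo ps_i$. To satisfy the conjunct $\forall i_2 \in \{0\ldots k-1\}.\exists i_3 \in \{0\ldots k-1\}.\Call{RuleBodyConstraints}{rb}$, I would, for each $i_2 = i$, supply $i_3 = n_i$ as the witness and then appeal to a helper lemma dual to Lemma~\ref{lem:rbsound}: whenever $rb'$ is closed and $rb' \reltwo ps$, the assignment sending each $\mathsf{ProductionIdentifier}(p)$ to the truth value of its guard $b$ satisfies $\Call{RuleBodyConstraints}{rb'}$, since \textsc{Cond-True}/\textsc{Cond-False} place $p$ into $ps$ exactly when $b \equiv \texttt{True}$, matching the biconditional $\mathsf{ProductionIdentifier}(p) \Leftrightarrow b$. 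This helper follows by a routine induction on $rb'$ mirroring the \textsc{Body} rule.

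The remaining care is in combining the per-copy assignments from these $k$ rule bodies with the model obtained from the induction hypothesis on $MG'$ into one coherent $vs$. This is unproblematic because the concrete productions named across distinct copies $i$ (which mention the distinct indexed nonterminals $\idmap{i_1\{i\}}$) and across distinct rule declarations in $MG'$ are disjoint, so the partial assignments never conflict and their union is well defined; adjoining the integer witnesses then yields the desired $vs \models \Call{Constraints}{MG,k}$. The one point demanding vigilance throughout is that closedness is preserved when descending into subterms, so that every guard occurring in the generated constraints evaluates to a definite truth value: $MG'[n/id]$, the tail $MG'$, and each $rb[i/i_2][n_i/i_3]$ are all closed given that the top-level metagrammar is.
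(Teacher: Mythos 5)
Your proposal is correct and follows essentially the same route as the paper's proof: induction on the derivation of $MG \reltwo_k G$ with the three cases \textsc{Start}, \textsc{Exist}, and \textsc{Unrolling}, using the derivation's chosen numerals as witnesses for the bounded quantifiers and reading off the guard truth values to satisfy the rule-body biconditionals. If anything you are more careful than the paper, which compresses your helper lemma (the dual of Lemma~\ref{lem:rbsound}) and the disjointness argument for merging partial assignments into the phrase ``by simple evaluation.''
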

\begin{proof}
  By induction over the derivation of $MG \reltwo_k G$.

  \paragraph*{Base case: $i{e}\texttt{ start} \reltwo_k \idmap{i{e}}\texttt{
      start}$}. Vacuously satisfies constraints.

  \paragraph*{Induction Case: $MG = \texttt{exists id; }MG' \reltwo_k G$}. By
  induction, there exists $vs$ from $MG'[n/id]$ such that $vs \models
  \Call{Constraints}{MG',k}$. Thus, consider $vs[id \mapsto in]$. This satisfies
  $\exists n.\Call{Constraints}{MG',k}$, as desired.

  \paragraph*{Induction Case: $MG = i_1\{i_2\} \rightarrow \exts\ i_3.\rb; MG'
    \reltwo_k G$}. By induction we know that there exists $vs$ from $MG'$ such
  that $vs \models \Call{Constraints}{MG',k}$. For every $n \in \{1\ldots k-1\}$
  we know that there exists $i_{3_n}$ such that $\rb[n/i_2,i_{3_n}/i_3]$. Let
  $vs' = vs \cup \{i_{3_n} ~|~ n \in \{1\ldots k-1\}\}$. By simple evaluation,
  we know that $vs'\models \Call{RuleBodyConstraints}{rb}$.
\end{proof}

\begin{lemma}
  \label{lem:completeneg}
  If $G = MG[vs]$ and $ex \not\in \mathcal{L}(G)$ then $vs \models \neg\mathsf{SemiringParse}(\Call{Concretize}{MG,k},ex)$.
\end{lemma}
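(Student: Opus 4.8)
The plan is to recognize that this lemma is precisely the contrapositive of the positive-soundness statement already established in Lemma~\ref{lem:soundpos}, and to prove it by re-running the same semiring-parsing characterization used there and in Lemma~\ref{lem:soundneg}. First I would invoke the prior work on semiring parsing to obtain, for $\varphi = \mathsf{SemiringParse}(\Call{Concretize}{MG,k},ex)$, the identity
$$\varphi \Leftrightarrow \bigvee_{\mathit{parse} \in \mathit{parses}}\bigwedge_{p \in \mathit{parse}}\mathsf{ProductionIdentifier}(p),$$
where $\mathit{parses}$ ranges over the complete parse forest of $ex$ with respect to the fully concretized (maximally ambiguous) grammar $\Call{Concretize}{MG,k}$. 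The crucial feature I would lean on is that this forest is exhaustive: every derivation of $ex$ that uses only rules of the subgrammar $G = MG[vs]$ appears as some $\mathit{parse}$, because $G$'s productions are a subset of those in $\Call{Concretize}{MG,k}$.

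Next I would argue by contraposition. Since $G = MG[vs]$ is a concrete grammar, $vs$ is a total assignment to the production-identifier variables, so it suffices to show $vs \not\models \varphi$. Suppose instead that $vs \models \varphi$. By the disjunctive normal form above, there is a particular $\mathit{parse}$ with $vs \models \mathsf{ProductionIdentifier}(p)$ for every $p \in \mathit{parse}$, i.e.\ every production of that parse lies in $MG[vs] = G$. That $\mathit{parse}$ is then a valid derivation of $ex$ in $G$, giving $ex \in \mathcal{L}(G)$ and contradicting the hypothesis $ex \not\in \mathcal{L}(G)$. Hence $vs \not\models \varphi$, and because $vs$ is total we conclude $vs \models \neg\varphi$, which is exactly $vs \models \neg\mathsf{SemiringParse}(\Call{Concretize}{MG,k},ex)$.

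I expect no serious obstacle here: essentially all of the content is packaged into the cited semiring-parsing correctness result, just as in Lemmas~\ref{lem:soundpos} and~\ref{lem:soundneg}. The one point requiring genuine care is the completeness/exhaustiveness of the parse forest---I must be sure the semiring parse enumerates \emph{all} $G$-derivations and not merely some of them, so that the absence of any satisfying parse truly certifies non-membership. This is guaranteed because the conditional-table semiring records the sum over the full set of parse trees of the concretized grammar, and $G$'s rule set is contained in that grammar, so a $G$-parse can never be missed. A secondary bookkeeping point is to confirm that $vs$ assigns a truth value to every variable occurring in $\varphi$ (which holds because $vs$ determines the concrete grammar $G$ entirely), licensing the step from $vs \not\models \varphi$ to $vs \models \neg\varphi$.
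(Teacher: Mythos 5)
Your proposal is correct and follows essentially the same route as the paper: both assume the conclusion fails, use the conditional-table-semiring characterization of $\mathsf{SemiringParse}$ as $\bigvee_{\mathit{parse}}\bigwedge_{p\in \mathit{parse}}\mathsf{ProductionIdentifier}(p)$ to extract a parse whose productions all lie in $G=MG[vs]$, and contradict $ex \not\in \mathcal{L}(G)$. If anything, your write-up is tidier than the paper's (which garbles a negation in its opening line and writes $vs \not\in \mathcal{L}(G)$ where it means $ex$), and your explicit attention to the exhaustiveness of the parse forest and the totality of $vs$ covers exactly the points the paper leaves implicit.
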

\begin{proof}
  Assume not. Then $vs \not\models
  \mathsf{SemiringParse}(\Call{Concretize}{MG,k},ex)$. Our semiring is the
  conditional table semiring, so
  $\mathsf{SemiringParse}(\Call{Concretize}{MG,k},ex)$ is equivalent to the
  formula $\Sigma_{\mathit{parse} \in \mathit{parses}}\Pi_{p \in
    parse}\mathsf{ProductionIdentifier(p)}$. This means that it is not the case
  that $vs$ models $\bigvee_{\mathit{parse} \in \mathit{parses}}\bigwedge_{p \in
    parse}\mathsf{ProductionIdentifier(p)}$. This means there is a ruleset in
  $vs$, namely $G$, that permits a parsing of $ex$. But, $vs \not\in
  \mathcal{L}(G)$. Contradiction.
\end{proof}

\begin{lemma}
  \label{lem:completepos}
  If $G = MG[vs]$ and $ex \not\in \mathcal{L}(G)$ then $vs \models \mathsf{SemiringParse}(\Call{Concretize}{MG,k},ex)$.
\end{lemma}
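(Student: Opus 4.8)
Before giving the plan I must flag that the lemma as printed cannot hold: its hypothesis $ex \not\in \mathcal{L}(G)$ is identical to that of the immediately preceding Lemma~\ref{lem:completeneg}, whose conclusion is $vs \models \neg\mathsf{SemiringParse}(\Call{Concretize}{MG,k},ex)$. No single assignment $vs$ can model both a formula and its negation, so the printed statement and Lemma~\ref{lem:completeneg} are jointly unsatisfiable. The label \texttt{completepos} and the role this result plays as the positive counterpart of \texttt{completeneg} indicate that the intended hypothesis is $ex \in \mathcal{L}(G)$ (the ``$\not\in$'' being a slip for ``$\in$''). Indeed, taking the literal hypothesis $ex \not\in \mathcal{L}(G)$ together with the characterization I set up below forces $vs \models \neg\mathsf{SemiringParse}(\Call{Concretize}{MG,k},ex)$, merely reproducing Lemma~\ref{lem:completeneg}; this confirms that the un-negated conclusion printed here must instead be paired with the complementary hypothesis $ex \in \mathcal{L}(G)$. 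I therefore plan to prove that $G = MG[vs]$ and $ex \in \mathcal{L}(G)$ imply $vs \models \mathsf{SemiringParse}(\Call{Concretize}{MG,k},ex)$; combined with Lemma~\ref{lem:completeneg} this gives the biconditional that $vs$ models $\mathsf{SemiringParse}(\Call{Concretize}{MG,k},ex)$ exactly when $ex \in \mathcal{L}(MG[vs])$, which is what the completeness theorem requires.

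The plan is to reuse the semiring characterization already invoked in Lemmas~\ref{lem:soundpos} and~\ref{lem:completeneg}: by prior work on semiring parsing~\cite{semiringparsing}, $\mathsf{SemiringParse}(\Call{Concretize}{MG,k},ex)$ is an element of the conditional-table semiring equal to $\Sigma_{\mathit{parse} \in \mathit{parses}} \Pi_{p \in \mathit{parse}} \mathsf{ProductionIdentifier}(p)$, and since that semiring is Boolean this element is the formula $\bigvee_{\mathit{parse} \in \mathit{parses}} \bigwedge_{p \in \mathit{parse}} \mathsf{ProductionIdentifier}(p)$, where $\mathit{parses}$ ranges over the derivations of $ex$ in $\Call{Concretize}{MG,k}$. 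The whole argument then reduces to exhibiting a single satisfied disjunct.

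First I would use $ex \in \mathcal{L}(G)$ to obtain a concrete derivation $t$ of $ex$ all of whose productions lie in $G = MG[vs]$. By the definition of $MG[vs]$ as the grammar retaining exactly those candidate productions whose identifier is true under $vs$, each production $p$ occurring in $t$ satisfies $vs \models \mathsf{ProductionIdentifier}(p)$. Next I would observe that $t$ is among the derivations enumerated over $\Call{Concretize}{MG,k}$: because $vs$ arises from a derivation $MG \reltwo_k G$, every existential value chosen is below $k$ and every production of $G$ appears in $\Call{AllProductions}{MG,k}$, so $G$ is a sub-grammar of $\Call{Concretize}{MG,k}$ and $t$ is one of its parses. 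Hence $t$ contributes the conjunct $\bigwedge_{p \in t} \mathsf{ProductionIdentifier}(p)$ to the disjunction, and $vs$ satisfies every literal of this conjunct; therefore $vs$ satisfies the disjunction, i.e. $vs \models \mathsf{SemiringParse}(\Call{Concretize}{MG,k},ex)$.

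The main obstacle is the single step asserting that the witnessing derivation $t$ genuinely appears among the $\mathit{parses}$ summed over the concretized grammar, and not merely over $G$ in isolation. This is a completeness property of the underlying semiring parser: every real derivation in a sub-grammar is represented in the semiring sum computed for the full (ambiguous) grammar. I would discharge it by appealing to the correctness of semiring parsing together with the earlier correspondence between the relation $\reltwo_k$ and \textsc{Concretize}, which guarantees that the productions of $G$ are literally a subset of those of $\Call{Concretize}{MG,k}$. Everything else is the routine fact that a conjunction of true literals is true and a disjunction with a true disjunct is true, exactly dualizing the reasoning already used in Lemma~\ref{lem:completeneg}.
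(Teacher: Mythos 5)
Your proposal is correct, and your diagnosis of the misprinted hypothesis is confirmed by the paper's own proof, which concludes by contradicting $ex \in \mathcal{L}(G)$ --- so the intended hypothesis is indeed $ex \in \mathcal{L}(G)$, exactly as the use of this lemma for positive examples in the completeness theorem requires. Beyond that, your argument is essentially the paper's: both rest on the semiring-parsing characterization of the formula as $\bigvee_{\mathit{parse}} \bigwedge_{p \in \mathit{parse}} \mathsf{ProductionIdentifier}(p)$ over parses of the concretized grammar, together with the fact that a derivation in $MG[vs]$ appears among those parses; the paper merely runs the argument by contradiction (no satisfied disjunct forces $ex \notin \mathcal{L}(MG[vs])$), whereas you exhibit the satisfied disjunct directly, and your write-up is if anything more careful than the paper's, which contains slips such as ``$vs \in \mathcal{L}(G)$'' where ``$ex \in \mathcal{L}(G)$'' is meant.
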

\begin{proof}
  Assume not. Then $vs \not\models
  \neg\mathsf{SemiringParse}(\Call{Concretize}{MG,k},ex)$. Our semiring is the
  conditional table semiring, so
  $\mathsf{SemiringParse}(\Call{Concretize}{MG,k},ex)$ is equivalent to the
  formula $\Sigma_{\mathit{parse} \in \mathit{parses}}\Pi_{p \in
    parse}\mathsf{ProductionIdentifier(p)}$. This means it is not the case that
  $vs$ models $(\bigvee_{\mathit{parse} \in \mathit{parses}}\bigwedge_{p \in
    parse}\mathsf{ProductionIdentifier(p)})$. This means that there are no
  productions in $MG$ that permit parsing. But, $vs \in \mathcal{L}(G) \subseteq
  \mathcal{L}(G)$. Contradiction.
\end{proof}

\begin{theorem}
  \label{lem:kcompleteness}
  If there exists some $G$ such that $MG \reltwo_k G$ and $Ex^+ \subseteq
  \mathcal{L}(G)$ and $Ex^- \subseteq \overline{\mathcal{L}(G)}$ then
  $\Call{InduceGrammar}{MG,Ex^+,Ex^-,k}$ will return without a recursive call.
  \begin{proof}
    By Lemma~\ref{lem:completeassign}, we know there exists $vs$ such that $vs
    \models \Call{Constraints}{MG,k}$. By Lemma~\ref{lem:completepos}, we know
    that for each $ex \in Ex^+$, $vs \models
    \mathsf{SemiringParse}(\Call{Concretize}{MG,k},ex)$. By
    Lemma~\ref{lem:completeneg}, we know that for each $ex \in Ex^-$, $vs \models
    \neg\mathsf{SemiringParse}(\Call{Concretize}{MG,k},ex)$. Thus, there exists
    some $vs$ that satisfies $\varphi^+ \wedge \varphi^- \wedge \varphi$. Thus,
    $\Call{InduceGrammar}{MG,Ex^+,Ex^-,k}$ will return some such $MG[vs]$.
  \end{proof}
\end{theorem}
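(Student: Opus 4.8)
The plan is to split the statement into the two obligations hidden in ``will return such a grammar'': a \emph{soundness} claim, that whatever \textsc{InduceGrammar} emits is a grammar of $\SemanticsOf{MG}$ that accepts $Ex^+$ and rejects $Ex^-$, and a \emph{completeness} (termination) claim, that under the hypothesis the iteration on the depth parameter $k$ actually bottoms out rather than recursing forever. Everything rests on one bridge: for each depth $k$, the models of the formula $\phi^+ \wedge \phi^- \wedge \phi$ assembled by the algorithm correspond exactly to the grammars $G$ with $MG \reltwo_k G$ that additionally parse all of $Ex^+$ and none of $Ex^-$.

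For soundness I would first show that a model $vs$ of $\phi = \Call{Constraints}{MG,k}$ witnesses $MG \reltwo_k MG[vs]$, so that $MG[vs] \in \SemanticsOf{MG}$. This is a structural induction on $MG$ whose cases mirror the three rules \textsc{Start}, \textsc{Exist}, and \textsc{Unrolling} of $\reltwo$, with an inner induction on rule bodies following \textsc{Cond-True}, \textsc{Cond-False}, and \textsc{Body}; the value that $vs$ assigns to each global, argument, or local existential variable is precisely the witness chosen by the matching semantics rule. Next I would invoke correctness of semiring parsing: over the conditional-table semiring, $\mathsf{SemiringParse}(\Call{Concretize}{MG,k},ex)$ is equivalent to the disjunction $\bigvee_{\mathit{parse}}\bigwedge_{p \in \mathit{parse}}\mathsf{ProductionIdentifier}(p)$, whose satisfying assignments are exactly the rule subsets admitting a parse of $ex$. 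Hence $vs \models \phi^+$ yields $Ex^+ \subseteq \mathcal{L}(MG[vs])$ and $vs \models \phi^-$ yields $Ex^- \subseteq \overline{\mathcal{L}(MG[vs])}$, so any returned $MG[vs]$ has all the required properties.

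For completeness I would use the hypothesis directly. Since $G \in \SemanticsOf{MG} = \bigcup_{m=1}^{\infty}\{G \mid MG \reltwo_m G\}$, I fix the least $m$ with $MG \reltwo_m G$. Reading off the choices made along the derivation of $MG \reltwo_m G$ produces an assignment $vs$ to all existential and argument variables, and an induction on that derivation (the converse of the soundness induction) shows $vs \models \Call{Constraints}{MG,m}$. Because $G = MG[vs]$ parses every positive example and no negative example, the semiring-parsing characterization gives $vs \models \phi^+ \wedge \phi^-$ at depth $m$. Thus the SMT call succeeds no later than depth $m$; if it happens to succeed at some smaller $k$, soundness still guarantees the answer is correct, so in either case the recursion halts and returns a suitable grammar.

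The part I expect to be delicate is the bookkeeping in the \textsc{Unrolling} rule and the exact alignment between the nondeterministic choices baked into $\reltwo_m$ and the quantifiers of $\Call{Constraints}{MG,m}$. In particular, $m$ serves a double duty as both the number of copies made of each parameterized rule and the strict upper bound on every existential value, so I must verify that the range $\{0,\ldots,m-1\}$ used during constraint generation coincides with the bound enforced by the \textsc{Exist} and \textsc{Unrolling} rules, and that the translation $\idmap{\cdot}$ maps parameterized nonterminals to concrete identifiers consistently across \textsc{Concretize}, the constraint formula, and the reconstruction $MG[vs]$. The semiring-parsing equivalence itself I would cite from prior work rather than reprove.
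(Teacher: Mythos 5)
Your completeness paragraph is, in essence, the paper's own proof of this lemma: you read a satisfying assignment $vs$ off the derivation of $MG \reltwo_k G$ (the paper's Lemma~\ref{lem:completeassign}), observe $G = MG[vs]$, and use the conditional-table-semiring characterization of $\mathsf{SemiringParse}$ to conclude $vs \models \varphi^+ \wedge \varphi^-$ (the paper's Lemmas~\ref{lem:completepos} and~\ref{lem:completeneg}), so the SMT call succeeds and no recursive call is made. The only real divergence is one of scope: the statement here is the fixed-depth termination lemma only, so your soundness half (models of $\Call{Constraints}{MG,k}$ witness $MG \reltwo_k MG[vs]$, returned grammars accept $Ex^+$ and reject $Ex^-$) and your ``least $m$'' / ``succeed at some smaller $k$'' discussion belong to the paper's separate Soundness theorem and top-level Correctness theorem rather than to this proof; they are not wrong, just not needed here. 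Your explicit insistence that the constructed $vs$ satisfies $G = MG[vs]$ is in fact slightly more careful than the paper, which leaves that identification implicit when chaining Lemma~\ref{lem:completeassign} into Lemmas~\ref{lem:completepos} and~\ref{lem:completeneg}.
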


\begin{theorem}[Completeness]
\label{thm:completeness}
If there exists some $G$ such that $G \in \SemanticsOf{MG}$ and $Ex^+ \subseteq
\mathcal{L}(G)$ and $Ex^- \subseteq \overline{\mathcal{L}(G)}$ then
$\Call{InduceGrammar}{MG,Ex^+,Ex^-,0}$ will return.

Lemma~\ref{lem:completepos}.
\end{theorem}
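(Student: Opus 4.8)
The plan is to reduce the claim to the per-depth completeness result (Theorem~\ref{lem:kcompleteness}) together with the definition of the metagrammar semantics and a short termination-of-search argument. The statement asserts only that $\Call{InduceGrammar}{MG,Ex^+,Ex^-,0}$ returns; combined with Soundness (Theorem~\ref{thm:soundness}) this yields the full correctness theorem, so I would keep the two concerns separate and focus here purely on the fact that the search halts with a returned grammar.

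First I would unfold the hypothesis $G \in \SemanticsOf{MG}$. By the definition $\SemanticsOf{MG} = \bigcup_{m=1}^{\infty}\{G \mid MG \rel G\}$, membership supplies a concrete witness depth: there exists some $k \geq 1$ with $MG \reltwo_k G$. Since $G$ is assumed to satisfy $Ex^+ \subseteq \mathcal{L}(G)$ and $Ex^- \subseteq \overline{\mathcal{L}(G)}$, this particular $k$ and $G$ meet exactly the hypotheses of Theorem~\ref{lem:kcompleteness}. Applying that theorem immediately shows that $\Call{InduceGrammar}{MG,Ex^+,Ex^-,k}$ returns without issuing a recursive call.

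The remaining step is to connect the witness depth $k$ to the actual entry point at depth $0$. Here I would argue by the structure of the recursion: at each depth $j$ the procedure forms the formula $\varphi^+ \wedge \varphi^- \wedge \varphi$ and either returns $MG[vs]$ when the \textsc{SMT} call is satisfiable, or recurses at depth $j+1$ when it is not. Thus the depth parameter strictly increases by exactly one along any non-returning chain of calls, so the only way for the recursion to avoid returning is to pass unboundedly through depths $0,1,2,\dots$. But the previous paragraph guarantees that the call at depth $k$ returns; hence the chain starting at depth $0$ can proceed through at most the depths $0,1,\dots,k$ before it must halt with a returned grammar (it may of course halt earlier, at some $j < k$, if a solution already exists at that smaller depth). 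A downward induction on $k - j$ formalizes that every call $\Call{InduceGrammar}{MG,Ex^+,Ex^-,j}$ with $j \leq k$ returns, and in particular the call at $j = 0$ does.

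I expect the main obstacle to be conceptual bookkeeping rather than genuine difficulty: the substantive content — relating satisfying assignments of \textsc{Constraints} to derivations $MG \reltwo_k G$, and relating $\mathsf{SemiringParse}$ to language membership — has already been discharged inside Theorem~\ref{lem:kcompleteness} via Lemmas~\ref{lem:completeassign}, \ref{lem:completeneg}, and \ref{lem:completepos}. The one subtlety I would be careful about is the monotonicity of the iterative deepening: I must verify that the search never skips the witness depth $k$ and never terminates except by returning a grammar, so that reaching depth $k$ is unavoidable along any otherwise-infinite chain. Confirming that the depth advances by precisely one per recursive call, with no side condition that could abort the search, closes this gap and completes the argument; note also that finiteness of the witness $k$ is exactly what rescues termination here, even though the search is not guaranteed to halt when no suitable grammar exists.
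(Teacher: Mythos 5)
Your proposal is correct and follows essentially the same route as the paper: the paper's proof of Theorem~\ref{thm:completeness} is simply ``Directly from Lemma~\ref{lem:kcompleteness},'' i.e., extract a witness depth $k$ from $G \in \SemanticsOf{MG}$, apply the per-depth completeness result, and observe that the iterative deepening from $0$ must reach (or return before) depth $k$. You have merely made explicit the unfolding of $\SemanticsOf{MG}$ and the downward induction on the recursion depth that the paper leaves implicit.
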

\begin{proof}
  Directly from Lemma~\ref{lem:kcompleteness}.
\end{proof}

\clearpage

\section{Case Studies}
\label{sec:case-studies}
Here, we describe our experience implementing metagrammars for two more particularly complex formats: XML and SQL. In the first, we demonstrate that a major benefit of \Sys is the grammar engineer's ability to control the search space, drastically reducing the number of possibilities our induction algorithm must consider. In the second case, we show how \Sys can be used to distinguish between SQL versions and thereby give useful, automated feedback to the query writer.

\subsection{XML Metagrammar Development}

\begin{figure}[!h]\small%
\begin{boxed-listing}[left=-28pt]{}
exists numElementTypes : nat.

Element{i} ->
  ? Header{i} Text Footer{i}
  ? Header{i} Container{i} Footer{i}.

Container{i} ->
  ? ""
  [? Element{j} Elements{i} for j = 0 to numElementTypes].
  
S -> [? Element{i} for i = 0 to numElementTypes].
start S.
\end{boxed-listing}
\caption{Simplified XML metagrammar.}%
\label{fig:xml-no-constraints} 
\end{figure}

\begin{figure}[!h]\small%
\begin{boxed-listing}[left=-28pt]{}
<class>
  <name>Intro to Computer Science</name>
  <students>
    <student>Jane Doe</student>
    <student>John Public</student>
  </students>
</class>
\end{boxed-listing}
\caption{Example XML document describing an introduction to computer science class.}
\label{fig:csxml}
\end{figure} 

XML is a general-purpose document class often used for data storage and data transfer.
Figure~\ref{fig:xml-no-constraints} presents a simplified version of our XML metagrammar and Figure~\ref{fig:csxml} presents
an XML document with which to specialize the metagrammar. 

The metagrammar itself permits several different types of \emph{elements}, each of which can be a \emph{data element}, a \emph{container element}, or both.
A \emph{data element} is an XML element that contains some text before its closing tag. A \emph{container element} is an XML element that contains some XML before its closing tag. 
For expository purposes, Figure~\ref{fig:xml-no-constraints} specifies only
these two types of XML elements, and does not include additional features such as XML attributes, though the metagrammar we defined and used in our experiments does.  

Our example XML document contains four different XML elements: \lstinline{class}, \lstinline{name}, \lstinline{students}, and \lstinline{student}.
The \lstinline{class} and \lstinline{students} elements are container elements while
the \lstinline{name} and \lstinline{student} elements are data elements.

Unfortunately, this simple XML meta-grammar contains a lot of redundancy. 
For instance, consider the grammars $G_1$: 
\begin{lstlisting}
S -> Element{0}

Element{0} -> "<a>" Container{0} "</a>"
Element{1} -> "<b>" Text "</b>"

Container{0} -> Element{1}
\end{lstlisting}
and $G_2$:
\begin{lstlisting}
S -> Element{0}

Element{0} -> "<a>" Container{0} "</a>"
Element{2} -> "<b>" Text "</b>"

Container{0} -> Element{2}
\end{lstlisting}
These two grammars are clearly semantically equivalent, but syntactically distinct.
The presence of syntactic redundancy in a search space is a well-known problem in program synthesis in general --- in any SyGuS system, such redundancy enlarges the
search space and slows down synthesis. \Sys is no exception: for sufficiently complex examples with large amounts of redundancy, our Max-SMT calls are prohibitively slow.

\begin{figure}[!h]\small%
\begin{boxed-listing}[left=-28pt]{}
exists numElementTypes : nat.

Element{i : nat} ->
  ? Header{i} Text Footer{i}
  ? Header{i} Container{i} Footer{i}

Container{i : nat} ->
  ? ""
  [? Element{j} Elements{i} for j = 0 to numElementTypes].

forall (i:nat) : constraint
    (|Productions(Element{i})| > 0) => (|Productions(Element{i-1})| > 0).
start Element{0}.
\end{boxed-listing}
\caption{Simplified XML metagrammar.}%
\label{fig:xml-with-constraints}
\end{figure}

Fortunately, our metagrammar language affords a solution: the programmer can add constraints to cut the search space back down. In Figure~\ref{fig:xml-with-constraints}, we have added constraints to ensure that if \texttt{Element\{i\}} has a nonzero number of productions, then \texttt{Element\{i-1\}} must \emph{also} have a nonzero number of productions; we further require that the first element must be \texttt{Element\{0\}}. 

As a result, the singleton grammar with \texttt{Element\{1\}} is no longer a valid grammar. Grammars which "skip" \texttt{Element\{1\}} in favor of higher-indexed \texttt{Element}s are similarly disallowed.
\OMIT{ 
Furthermore, if \texttt{Element\{0\}} must contain exactly one text element within it, and no other elements are used, that element will always be \texttt{Element\{1\}}. Because \texttt{Element\{1\}} has no productions, \texttt{Element\{2\}} cannot be the single text element within \texttt{Element\{0\}}.
}
This is a tactic that can be applied to general recursive data -- when constructing a metagrammar with multiple mutually recursive elements, providing a default ordering in which those elements are used helps constrain the search space. In our case, \texttt{Element\{0\}} gets used first, followed by \texttt{Element\{1\}}, and so on.

\begin{figure}
    \includegraphics[scale=.84]{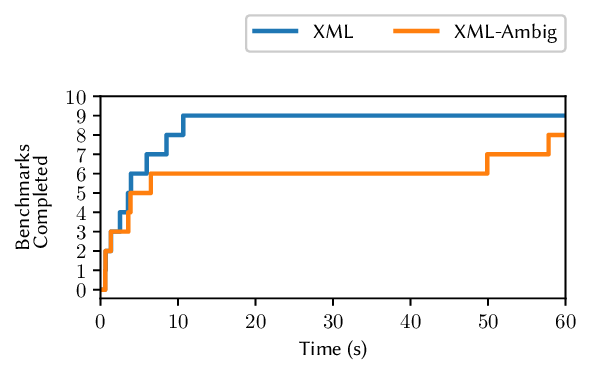}
    \caption{Number of benchmarks that terminate in a given time on XML and XML-Ambig.}
    \label{fig:runtime-diffs}
\end{figure}

We demonstrate the usefulness of this technique by running \Sys on both the ambiguous and non-ambiguous XML metagrammars; the results are summarized in Figure~\ref{fig:runtime-diffs}. By minimizing the ambiguity of the metagrammar, XML inference becomes much more tractable without reducing expressiveness. Although synthesis still failed on one benchmark (task 7, involving 20 PEs and 20 NEs), we found that the time taken per additional example increased at a substantially slower rate in the less-ambiguous metagrammar.

\paragraph*{Takeaways} One of the strengths of \Sys lies in the ability of the metagrammar developer to control the search space. By refactoring the metagrammar, the developer can encode the same grammatical domain in a way that is more amenable to synthesis. Doing so provides fundamental advantages over general-purpose, unguided grammatical inference techniques where the search space is overwhelmingly large and unconstrained.


\subsection{SQL Identifiers}
The relational database management ecosystem consists of
players such as Amazon AWS, Google BigQuery and Microsoft Azure competing with one another to
attract new users. While all these systems provide support for user queries, the Structured Query Language (or SQL)
grammatical domain has languages that differ between companies or even within the same company. For example, 
Google BigQuery has recently changed from BigQuery SQL (now referred to in the docs as LegacySQL) 
to Google Standard SQL. One difference between SQL dialects is how
languages handle identifiers. Consider the following SELECT statements taken from the Google Cloud reference
on migrating to Google Standard SQL \cite{google_standard_sql}:
\begin{figure}[!h]\small%
\begin{boxed-listing}[left=-28pt]{}
    1. SELECT word FROM [bigquery-public-data:samples.shakespeare] LIMIT 1;
    2. SELECT word FROM `bigquery-public-data.samples.shakespeare` LIMIT 1;
    3. SELECT COUNT(\*) AS `rows` FROM `bigquery-public-data.samples.shakespeare`;
\end{boxed-listing}
\caption{Three SQL select statements taken from Google cloud documentation.}
\label{fig:google-sql-selects}
\end{figure} 
The first example is written in the LegacySQL dialect whereas the latter two examples are written in the
preferred Google Standard SQL dialect. To differentiate the dialects, we constructed a metagrammar for SQL
SELECT statements which includes the snippet below. The metagrammar includes an emit statement that gives feedback
when the SQL example uses a production that belongs to LegacySQL and not Google Standard SQL. \newpage

\begin{figure}[!h]\small%
\begin{boxed-listing}[left=-28pt]{}
Identifier ->
  ? IdentifierChars
  ? Identifier "." Identifier
  ? "`" EscapedIdentifierChars "`" named BacktickIdentifier
  ? "[" EscapedIdentifierChars "]" named BracketIdentifier.

emit "Square bracket identifier is indiciative of Legacy SQL. 
  Please switch to backticks (``)" if BracketIdentifier.

\end{boxed-listing}
\caption{Candidate productions in our SQL SELECT Statement metagrammar.}
\label{fig:sql-identifiers}
\end{figure}
With our metagrammar specification, \Sys identifies that the first SELECT statement in Fig. \ref{fig:google-sql-selects} uses the 
square bracket identifier whereas the latter two recognize the backtick identifier. We also note that \Sys is able to handle
cases where a simpler program, like one that just searches for square brackets, would fail. The example below from
the Google Cloud SQL docs uses square brackets but does not use square brackets for identifiers. Our tool does not mark it
as using the BracketIdentifier production.
\begin{figure}[!h]\small%
\begin{boxed-listing}[left=-28pt]{}
WITH T AS (
  SELECT x FROM UNNEST([1, 2, 3, 4]) AS x
),
TPlusOne AS (
  SELECT x + 1 AS y FROM T
),
TPlusOneTimesTwo AS (
  SELECT y * 2 AS z FROM TPlusOne
)
SELECT z
FROM TPlusOneTimesTwo;
\end{boxed-listing}
\caption{Google SQL statement with square brackets which are not for identifiers.}
\label{fig:google-sql-unnest}
\end{figure} 

In this case, \Sys could be used to aid in an internal migration from Google LegacySQL 
to Google Standard SQL by identifying scripts that are using the older syntax and flagging them. 
Additionally, automatically identifying SQL versions could be useful for a software engineer whose job includes debugging client scripts. 
Based on a discussion with a colleague who works on Google BigQuery, clients that move from one relational database service to another will sometimes 
copy old scripts directly into the new engine without any adjustment. Instead of an error message saying a certain character
was unexpected, one could generate a dialect detector using \Sys that flags scripts using a different version of SQL. The feedback could be propagated to the query engine
user directly with some information how to update their script or to the software engineers whose job it is to debug failing customer scripts. Consider the three 
SELECT statements below taken from the 2022 Microsoft Server SQL documentation \cite{msf_server_sql}:
\begin{figure}[!h]\small%
\begin{boxed-listing}[left=-28pt]{}
    1. SELECT * FROM HumanResources.Employee WHERE NationalIDNumber = 153479919
    2. SELECT * FROM [HumanResources].[Employee] 
         WHERE [NationalIDNumber] = 153479919
    3. SELECT * FROM [SalesOrderDetail Table] WHERE [Order] = 10;
\end{boxed-listing}
\caption{Three SQL select statements taken from Microsoft Server SQL documentation.}
\label{fig:msft-sql-selects}
\end{figure}
The first example above has no extra quoting for the identifiers since none is needed, 
the second uses the square bracket formatting even though it is not needed, and the third uses
square bracket formatting when it is required. \Sys recovers the square bracket identifier production for
the latter two examples but not the first, as expected.
\fi

\end{document}